\newcommand{\addresseshere}{%
  \enddoc@text\let\enddoc@text\relax
}
\renewcommand{\emph}[1]{{\it #1}}
\newcommand{\mbf}{\mathbf}
\newcommand{\E}{\mathbf{E}}
\renewcommand{\P}{\mathbf{P}}
\newcommand{\cH}{\mathcal{H}}
\newcommand{\ind}[1]{{\mbf{1}\{#1\}}}
\newcommand{\mFDR}{\textnormal{mFDR}}
\newcommand{\FWER}{\textnormal{FWER}}
\newcommand{\Abonf}{\mathcal{A}^{\mbox{\tiny OB}}}
\newcommand{\AAbonf}{\mathcal{A}^{\mbox{\tiny AOB}}}
\newcommand{\alphabonf}{\alpha^{\mbox{\tiny OB}}}
\newcommand{\alphaAbonf}{\alpha^{\mbox{\tiny AOB}}}
\newcommand{\alphaLORD}{\alpha^{\mbox{\tiny LORD}}}
\newcommand{\ALORD}{\mathcal{A}^{\mbox{\tiny LORD}}}
\newcommand{\alphaALORD}{\alpha^{\mbox{\tiny ALORD}}}
\newcommand{\AALORD}{\mathcal{A}^{\mbox{\tiny ALORD}}}
\newcommand{\AOBSURE}{\mathcal{A}^{\mbox{\tiny $\rho$OB}}}
\newcommand{\AAOBSURE}{\mathcal{A}^{\mbox{\tiny $\rho$AOB}}}
\newcommand{\alphaOBSURE}{\alpha^{\mbox{\tiny $\rho$OB}}}
\newcommand{\alphaAOBSURE}{\alpha^{\mbox{\tiny $\rho$AOB}}}
\newcommand{\ALORDSURE}{\mathcal{A}^{\mbox{\tiny $\rho$LORD}}}
\newcommand{\AALORDSURE}{\mathcal{A}^{\mbox{\tiny $\rho$ALORD}}}
\newcommand{\alphaLORDSURE}{\alpha^{\mbox{\tiny $\rho$LORD}}}
\newcommand{\alphaALORDSURE}{\alpha^{\mbox{\tiny $\rho$ALORD}}}
\newcommand{\alphaDelay}{\alpha^{\mbox{\tiny DS}}}
\newcommand{\alphaHyb}{\alpha^{\mbox{\tiny Hyb}}}
\newcommand{\delay}{\mathcal{C}}
\newtheorem{theorem}{Theorem}[section]
\newtheorem{proposition}{Proposition}[section]
\newtheorem{definition}{Definition}[section]
\newtheorem{remark}{Remark}[section]
\newtheorem{example}{Example}[section]
\newtheorem{lemma}{Lemma}[section]
\begin{document}
\title{Online multiple testing with super-uniformity reward}
\author{Sebastian D\"ohler, Iqraa Meah and Etienne Roquain}
\runtitle{Online multiple testing with super-uniformity reward}
\date{\today}
\begin{abstract}
	Valid online inference is an important problem in contemporary multiple testing research,
    to which various solutions have been proposed recently. 
    It is well-known that these existing methods can suffer from a significant loss of power if the null $p$-values are conservative. 
    In this work, we extend the previously introduced methodology to obtain more powerful procedures for the case of super-uniformly distributed $p$-values. These types of $p$-values arise in important settings, e.g. when discrete 	 hypothesis tests are performed or when the $p$-values are weighted. 
     To this end,  we introduce the method of super-uniformity reward (SUR)
    that incorporates information about the individual null cumulative distribution functions.
    Our approach yields several new 'rewarded' procedures that offer uniform power improvements over known procedures and come with mathematical guarantees for controlling online error criteria based either on the family-wise 	 error rate (FWER) or the marginal false discovery rate (mFDR). We illustrate the benefit of super-uniform rewarding in real-data analyses and simulation studies. 
    While discrete tests serve as our leading example, we also show how our method can be applied to weighted $p$-values.

\end{abstract}
\begin{keyword}[class=AMS]
	\kwd[Primary ]{62H15}
	\kwd[; secondary ]{62Q05}
\end{keyword}
\begin{keyword}
	\kwd{false discovery rate}\kwd{$\alpha$-investing} \kwd{Discrete hypothesis testing} \kwd{Weighted hypothesis testing} \kwd{False discovery rate} \kwd{{{online multiple testing}}}
\end{keyword}

\maketitle


\section{Introduction}
\subsection{Background}
Multiple testing is a well-established statistical paradigm for the analysis of complex and large-scale data sets, 
in which each hypothesis typically corresponds to a scientific question. 
In the classical situation, the set of hypotheses should be pre-specified before running the statistical inference. 
 However, in contrast to the former 'offline' setting, in many contemporary applications questions arise  sequentially. 
 A first instance of such sequential application is when testing a {\it single} null hypothesis repeatedly as new data are collected, as for continuous monitoring of A/B tests in the information technology industry or marketing research, see \cite{Kohavi2013,johari2019valid} and references therein, or \cite{howard2021time} for recent developments.  A second situation is when the null hypotheses are (potentially) different and arise in a continuous stream,  and accordingly decisions have to be made one at a time and prior to the termination of the stream. 
 This is generally referred to as the \emph{online multiple testing} (OMT) framework and is the focus of this paper, see, e.g., \cite{Lark2017,Robertson2019,kohavi2020online} for application examples. 
This second situation also occurs in combination with the first one to form a `doubly-sequential' experiment \citep{ramdastuto}.
 
\subsection{Existing literature on online multiple testing}

The literature aiming at control of various error rates in OMT has grown rapidly in the last few years. 
As a starting point, the family-wise error rate (FWER) is the probability of making at least one error in the past discoveries, 
and a typical aim is to control it at each time of the stream 
(for a formal definition of this and other error rates, see  Section~\ref{sec:ErrorRatesPower}). 
Since controlling FWER at a given level $\alpha$ is a strong constraint, 
it requires employing a procedure that is conservative, thus generally leading to few discoveries. 
The typical strategy is to distribute over time the initial \emph{wealth} $\alpha$, e.g., 
testing the $i$-th test at level $\alpha \gamma_i$ for a sequence $\{\gamma_i\}_{i\geq 1}$  summing to $1$. 
This approach is generally referred to as \emph{$\alpha$-spending} in the literature (\citealp{FosterStineAlphainvest}). 

A less stringent criterion is the false discovery rate (FDR), which corresponds to the expected proportion of false discoveries. 
This versatile criterion allows many more discoveries than the FWER and has known a huge success 
in offline multiple testing literature since its introduction by \cite{BenjaminiHochberg95}, 
both from a theoretical and practical point of view. 
In their seminal work on OMT, \cite{FosterStineAlphainvest} extended the FDR in an online setting 
by considering the expected proportion of errors 
among the past discoveries 
(actually, considering rather the marginal FDR, denoted below by mFDR, which is defined as the ratio of the expectations, 
rather than the expectation of the ratio). 
The novel strategy in \cite{FosterStineAlphainvest}, which is called \emph{$\alpha$-investing}, 
is based on the idea that an mFDR controlling procedure is allowed 
to recover some $\alpha$-wealth after each rejection, which slows down the natural decrease of the individual test levels. 
In subsequent papers, many further improvements of this method have been proposed:
first, the $\alpha$-investing rule has been generalized by \cite{AharoniRossetGAI}, while maintaining marginal FDR control. 
Later, \cite{JM2018} establish the (non-marginal) FDR control of these rules, 
including the LORD (Levels based On Recent Discovery) procedure. 
Then, a uniform improvement of LORD, called LORD++, has been proposed by \cite{ramdas2017online}, 
that maintains FDR/mFDR control while extending the theory in several directions (weighting, penalties, decaying memory).

Extensions to other specific frameworks have been proposed, including rules that allow asynchronous online testing 
\citep{zrnic2021asynchronous}, maintain privacy \citep{zhang2020paprika}, and
accommodate a high-dimensional regression model \citep{johnson2020fitting}. 
Other online error criteria have also been explored, with false discovery exceedance \citep{JM2018, xuramdas2021dynamic}, 
post hoc false discovery proportion bounds \citep{KR2020}, or confidence intervals with false coverage rate control \citep{pmlr-v119-weinstein20a}. 

Since the online framework is more constrained than the offline framework, 
the employed procedures are generally less powerful in that context. Hence, another important branch of the literature aims at proposing improved rules that gain more discoveries: 
first, following the classical 'adaptive' offline strategy,  procedures can be made less conservative 
by implicitly estimating the amount of true null hypotheses, see the SAFFRON procedure 
for FDR and the adaptive-spending procedure for FWER. 
Second, under an assumption on the null distribution, increasing the number of discoveries is possible by 'discarding'  
tests with a too large $p$-value (\citealp{ramdas2019saffron, tian_onlinefwer_2020, tian_onlinefdr_2019}). 

A power enhancement can also be obtained by combining online procedures with other methods.  
A natural idea is to use more sophisticated individual tests in the first place, e.g., 
based on multi-armed bandits \citep{yang2017framework}, or so-called 'always valid $p$-values', 
see \cite{johari2019valid} and references therein. 
Another idea is to combine offline procedures to form 'mini-batch' rules, see \cite{pmlr-v108-zrnic20a}. 
Further improvements are also possible by incorporating contextual information as done by \cite{pmlr-v108-chen20b} 
or using local FDR-like approach, see \cite{gang2020structureadaptive}.
Lastly, performance boundaries have been derived by \cite{ChenArias2021}.

\subsection{Super-uniformity}\label{sec:superunif}
	
This paper considers OMT in the setting of super-uniformly distributed $p$-values (defined in detail in Section \ref{sec:setting}). 
Super-uniformity may originate from various sources. The first main example we have in mind, and which has been extensively investigated in the statistical literature, is super-uniformity arising from discrete $p$-values (described in detail in Section~\ref{sec:discrete}). 
Additionally, we show that super-uniformity can also be used in a more indirect way as a device for dealing with online $p$-value weighting. In the offline setting, this is a powerful and extensively studied approach, which has, however, in the online case, received little attention so far (described in detail in Section~\ref{sec:weighting}).

{Discrete tests often originate when the tests are based on counts or contingency tables, for example: 
	\begin{itemize}
		\item in clinical studies, the efficiency or safety of drugs are compared by counting 
		patients who survive a certain period after being treated, or who experience a certain type of adverse drug reaction;
		\item in biology, the genotype effect on the phenotype can be tested by knocking out genes sequentially in time. 
	\end{itemize}
	The latter case is met for instance with the data from the International Mouse Phenotyping Consortium (IMPC, see \cite{munozfuentes:hal-02361601}), which
	contains many categorical variables, and thus are described with counts and contingency tables. 
	While this data set is frequently used (see e.g., \citealp{tian_onlinefwer_2020, xuramdas2021dynamic, karp2017prevalence}), the classical OMT procedures do not exploit the discrete nature of the tests, and it turns out that much more powerful procedures can be developed, see Section~\ref{sec:real_data_appli}. }
	
	{In the literature, different solutions have been proposed for dealing with the conservatism of discrete tests, e.g., by modifying directly the $p$-values, either by randomization (see Habiger, 2015 and references therein), or by shrinking them to build so-called mid $p$-values (see Heller and Gur, 2011 and 	references therein). While randomized approaches possess attractive theoretical
	properties, they are often criticized for their lack of reproducibility (see, e.g.,
	Berger, 1996 and Ripamonti et al., 2017). An active research area explores this phenomenon in the offline multiple testing setting, with the seminal works of \cite{Tar1990, WestWolf1997, Gilbert05} 
	and the subsequent studies of \cite{Heyse2011, Heller2012, Dickhaus2012, Habiger2015, CDS2015, Doehler2016, CDH2018, DDR2018, DDR2019}, 
	see also references therein. 	The present work shows that such an improvement is also possible in the online setting, as far as FWER or mFDR control is concerned.  }


\begin{table}[h!]
	\centering
	\begin{tabular}{rlll}
		\multicolumn{1}{c}{Error rate} & \multicolumn{1}{c}{Procedure} & \multicolumn{1}{c}{Critical values} & \multicolumn{1}{c}{Results} \\
		\toprule
		\multicolumn{1}{c}{FWER} & OB &    $
		\alphabonf_T = \alpha \gamma_T
		$    &   \cite{tian_onlinefwer_2020}  \\
		& AOB &    {$\begin{aligned}[t]
		\alphaAbonf_T	 = \alpha (1-\lambda) \gamma_{\mathcal{T}(T)}
			\end{aligned}$}   &    \cite{tian_onlinefwer_2020}      \\
		\addlinespace
		\multicolumn{1}{c}{mFDR} &  LORD     &   $\begin{aligned}[t] \alphaLORD_T =& W_0 \gamma_T 
			+ (\alpha-W_0) \gamma_{T-\tau_1} \\
			& + \alpha\sum_{j\geq 2} \gamma_{T-\tau_j} \end{aligned}$     &      $\begin{aligned}[t] &\mbox{\cite{JM2018}}\\&\mbox{and \cite{ ramdas2017online}} \end{aligned}$  \\
		& ALORD &   $ \begin{aligned}[t] \alphaALORD_T = 
		(1-\lambda) \cdot  & \Big( W_0 \gamma_{\mathcal{T}_0(T)} 
		+ (\alpha-W_0) \gamma_{\mathcal{T}_1(T)} \\
		&  + \alpha\sum_{j\geq 2} \gamma_{\mathcal{T}_j(T)}\Big) \end{aligned}$     &  $\substack{\mbox{\cite{ramdas2019saffron}}\\ \mbox{(slightly improved)}}$ \\
		\bottomrule \\
	\end{tabular}%
\caption{Overview of the critical values of the base procedures for some choice of level $\alpha\in (0,1)$, 
        adaptivity parameter $\lambda\in [0,1)$, initial wealth $W_0\in (0,\alpha)$, 
        and spending sequence $(\gamma_j)_{j\geq 1}$. 
        The quantities $\mathcal{T}(\cdot)$, $\tau_j$, $\mathcal{T}_j(\cdot)$ are given by 
        \eqref{Tronde_def}, \eqref{eqn:tauj}, \eqref{Tronde_def_gen}, respectively.}
	\label{table:OverviewBaseProcedures}%
\end{table}

Finally, weighting $p$-values is a well-established and popular approach for improving the
	performance of offline multiple testing procedures. It can be traced back to \cite{Holm1979}
	and has been further developed, in, e.g., \cite{GRW2006,WR2006,RDV2006,BR2008EJS,RW2009,HZZ2010,ZZ2014,Ign2016,Durand2017,ramdas2019unified}	with weights that can be driven for instance by
	sample size, groups, or more generally by some covariates. 
	By approaching the problem from the perspective of super-uniformity, our general method also allows seamless and flexible integration of such weighting schemes in an online context.

\begin{table}[h!]
	\centering
	\begin{tabular}{rlll}
		\multicolumn{1}{c}{Error rate} & \multicolumn{1}{c}{Procedure} & \multicolumn{1}{c}{Critical values} & \multicolumn{1}{c}{Results} \\
		\toprule
		\multicolumn{1}{c}{FWER} & $\rho$OB &    $\begin{aligned}[t]
		\alphaOBSURE_T &= \alphabonf_T + \sum_{t=1}^{T-1} \gamma'_{T-t} \rho_{t} \end{aligned}
		$    &    Theorem \ref{th:OBSURE} \\
		& $\rho$AOB &    {$\begin{aligned}[t]
		\alphaAOBSURE_T	 &=  \alphaAbonf_T +   
		             \sum_{1\leq t \leq T-1 \atop p_t > \lambda} \gamma'_{T-t} \rho_t 
		              + \varepsilon_{T-1}
			\end{aligned}$}   &       Theorem \ref{th:AOBSURE}   \\
		\addlinespace
		\multicolumn{1}{c}{mFDR} &  $\rho$LORD    &   $\begin{aligned}[t]\alphaLORDSURE_T =& \alphaLORD_T  + \sum_{t=1}^{T-1} \gamma'_{T-t} \rho_{t} \end{aligned}$     &     Theorem   \ref{th:LORDSURE}   \\
		& $\rho$ALORD &   $ \begin{aligned}[t] \alphaALORDSURE_T & = \alphaALORD_T
		+   		\sum_{1\leq t \leq T-1 \atop p_t > \lambda} \gamma'_{T-t} \rho_t 		+ \varepsilon_{T-1}  \end{aligned}$  &    Theorem \ref{th:ALORDSURE} \\
		\bottomrule \\
	\end{tabular}%
	\caption{Overview of the critical values of the rewarded procedures denoted as the corresponding base procedures, 
    with an additional symbol ``$\rho$'' in the name. 
    Here, $\alphabonf_T, \alphaAbonf_T, \alphaLORD_T, \alphaALORD_T$ are the base procedures from Table \ref{table:OverviewBaseProcedures} 
    (with the adaptivity parameter $\lambda$ defined there), 
    $\rho_{t}$ is the super-uniformity reward at time $t$ given by \eqref{eqn:superunifreward}, 
	$\gamma'$ is the SURE spending sequence defined in Section~\ref{sec:WealthSpending}
	and $ \varepsilon_{T} = \ind{ p_{T} < \lambda }(\alpha_{T} - \alpha^0_{T})$ is an additional adaptivity reward, 
    for either $(\alpha^0_{T}, \alpha_{T})=(\alphaAbonf_T,\alphaAOBSURE_T)$, 
    or $(\alpha^0_{T}, \alpha_{T})=(\alphaALORD_T, \alphaALORDSURE_T)$, depending on the case.}
	\label{table:OverviewSUREProcedures}
\end{table}

\subsection{Contributions of the paper}

In this paper, we propose uniform improvements of the classical base procedures listed in Table~\ref{table:OverviewBaseProcedures},
and prove control of the corresponding error rates. 
A distinguishing feature of our work is that we assume that a (non-trivial) upper bound for the null 
cumulative distribution function's (c.d.f.),  
called the \emph{null bounding family},  is \emph{known} {(see Section \ref{sec:setting})}. 
By combining this information with base procedures, we construct more efficient OMT procedures 
(see Table~\ref{table:OverviewSUREProcedures}).  
The key quantity involved in this construction can be interpreted as a reward 
(more details will be provided in Section \ref{sec:sure}) induced by the super-uniformity of the null bounding family. 
Therefore, we use the acronym SUR (Super-Uniform-Reward) to refer to these new procedures. 
When we use the uniform null bounding family (i.e., in the classical framework), our SUR procedures reduce to their base counterparts.
Our main contributions are as follows: 
\begin{itemize}
    \item We propose {two} new SUR procedures for online FWER control in Section~\ref{sec:FWERcontrolSU}: 
        the first {one} ($\rho$OB) uniformly improves upon the Online Bonferroni procedure (OB), 
        while the second ($\rho$AOB) uniformly improves upon the adaptive spending procedure of \cite{tian_onlinefwer_2020} (AOB). 
    \item We propose two new SUR procedures for online mFDR control in Section~\ref{sec:mFDRcontrolSU}: 
        the first one ($\rho$LORD) uniformly improves upon the LORD++ procedures of \cite{JM2018, ramdas2017online} (LORD), 
        while the second one ($\rho$ALORD) uniformly improves upon the SAFFRON procedure of \cite{ramdas2019saffron} (ALORD). 
    \item We present a general and simple way of constructing SUR procedures for any base procedure satisfying some mild conditions, 
        see Section~\ref{sec:genresultFWER} for FWER and Section~\ref{sec:genresultmFDR} for mFDR.
        This allows us to obtain concise proofs for all our results, which are deferred to the supplement, see Section~\ref{sec:proofs}.
    \item Application to discrete data: we evaluate the performances of the new SUR procedures on discrete data, with simulated experiments (Section~\ref{sec:numerical_results}) 
    and for a classical real data set (Section~\ref{sec:real_data_appli}), where each hypothesis is tested using a (discrete) Fisher exact test. The gain in power is shown to be substantial.
   \item Application to $p$-value weighting:  our new SUR procedures can be used to derive weighted online FWER and mFDR controlling procedures. The $p$-value weighting is carried out by rescaling in a certain way the 'raw' weights so that the weighted $p$-value distributions become super-uniform and our methodology can be applied.
The new online procedures are shown to outperform existing ones both on simulated and real data (Section~\ref{sec:weighting}).
\end{itemize}

For easier readability of the paper, a succinct overview of our work is presented in 
Tables~\ref{table:OverviewBaseProcedures}~and~\ref{table:OverviewSUREProcedures}. 
It lists the base and SUR procedures and provides links to definitions and results for error rate control.
All our numerical experiments (simulations and application) are reproducible 
from the code provided in the repository 
\url{https://github.com/iqm15/SUREOMT}.


\subsection{Relation to adaptive discarding}  \label{sec:previous_work}
As \cite{tian_onlinefdr_2019} pointed out, online multiple testing procedures frequently suffer from significant power loss
if the null $p$-values are too conservative. 
In  \cite{tian_onlinefwer_2020} (FWER control) and \cite{tian_onlinefdr_2019} (mFDR control),
the authors propose adaptive discarding (ADDIS) approaches as improved methods. 
In particular, an idea is to use a { discarding} rule, that avoids testing a null when the corresponding $p$-value exceeds a given threshold. 
For the particular type of super-uniformity induced by discrete tests, we show that the discarding rule is less efficient than the SUR method, at least in the settings of Sections~\ref{sec:numerical_results}~and~\ref{sec:real_data_appli}.

\section{Preliminaries} \label{sec:preliminaries}
\subsection{Setting, procedure and assumptions}\label{sec:setting}
Let $X = (X_t, t \in \{1, 2, \dots\})$ be a process composed of random variables.
We denote the distribution of $X$ by $P$, 
which is assumed to belong to some distribution set $\mathcal{P}$.
We consider an online testing problem where,
at each time $t\geq 1$, the user only observes variable $X_t$ and should test a new null hypothesis $H_t$, 
which corresponds to some subset of $\mathcal{P}$, typically defined from the distribution of $X_t$.  
We let $\cH_0 =\cH_0(P) = \{t\geq 1\::\: \mbox{$H_t$ is satisfied by $P$}\}$ 
the set of (unknown) times where the corresponding null hypothesis is true.
Throughout the manuscript, we focus on decisions based upon $p$-values. 
Hence, we suppose that at each time $t$,  
we have at hand a $p$-value $p_t = p_t(X)\in [0,1]$ (typically depending only on $X_t$ although this is not necessary) for testing $H_t$, and 
we consider online multiple testing procedures based on $p$-value thresholding.
This means that each null $H_t$ is rejected whenever $p_t(X) \leq \alpha_t$, 
where $\alpha_t \in [0,\infty)$ is a nonnegative threshold, called {\it a critical value}, 
that is allowed to depend on the past decisions. More precisely, 
we denote $R_t=\ind{p_t(X)\leq \alpha_t}$, $C_t=\ind{p_t(X)\geq \lambda}$ for all $t\geq 1$ and assume that each 
 $\alpha_t$ is measurable with respect to the $\sigma$-field $\mathcal{F}_{t-1} = \sigma(R_1,\dots,R_{t-1}, C_1,\dots,C_{t-1})$.
Here, $\lambda\in [0,1]$ is a parameter that is used for designing adaptive procedures. The particular non-adaptive case is obtained by setting $\lambda=0$, in which case $\mathcal{F}_{t-1} = \sigma(R_1,\dots,R_{t-1})$.
 
In the literature, this property is referred to as predictability, see \cite{ramdas2017online}. 
Throughout the manuscript, an online multiple testing procedure is identified with a family 
$\mathcal{A} = \{\alpha_t, t \geq 1\}$ 
of such predictable critical values. 
Let us now state the assumptions used in what follows. 
First, recall the classical {\it super-uniformity} assumption:
\begin{equation}
    \P_{X \sim P} (p_t(X) \leq u) \leq u \: \text{ for all } u \in[0,1], 
    \mbox{ and }P \in \mathcal{P} \mbox{ with } t \in \mathcal{H}_{0},
    \label{eqn:superunif0}
\end{equation}
which means that each test rejecting $H_{0, t}$ when $p_t(X)$ is smaller than or equal to $u$ is of level $u$. 
Here, we typically consider a setting where these tests may have a more stringent level. 
Formally, at each time $t$, there is a {\it known}  null function $F_t:[0,1] \to [0,1]$ satisfying 
\begin{align}   
    \P_{X \sim P} \left(p_t(X) \leq u \right)\leq F_{t}(u) \leq u,
    \: \text{for all } u \in[0,1], \mbox{ and }P \in \mathcal{P} \mbox{ with } t \in \mathcal{H}_{0}.
    \label{eqn:superunif}
\end{align}
Note that we will sometimes also consider $F_t(u)$ for $u \geq 1$, in which it is to be understood as $F_t(u \wedge 1)$. 
The family $\mathcal{F} = \{F_t, t \geq 1 \}$ will be referred to as the {\it null bounding family}. 
Note that \eqref{eqn:superunif} reduces to \eqref{eqn:superunif0} when choosing $F_t(u) = u$ for all $u$, 
but encompasses other cases by choosing differently the null bounding family. 
Typically, for discrete tests, it is well-known that $F_t(u)$ can be (much) smaller than $u$, 
see Example~\ref{ex:leading} for more details. 
Second, another important assumption is the {\it online independence} within the $p$-value process:
\begin{align} 
    \mbox{$p_t(X)$ is independent of the past decisions $\mathcal{F}_{t-1}$ for all $t \in \cH_0$ and $P\in \mathcal{P}$.}
    \label{indep}
\end{align}
For instance, Assumption \eqref{indep} holds in the case where $p_t(X)$ only depends on $X_t$ and the variables in $(X_t, t \geq 1)$ are all mutually independent, 
which means that the data are collected independently at each time.
\begin{remark}
    In this manuscript, results are often based on assumptions \eqref{eqn:superunif} and \eqref{indep}. 
    In all these results, these two assumptions can be replaced by the weaker condition
        \begin{equation}
            \P_{X \sim P} (p_t(X) \leq u\:|\: \mathcal{F}_{t-1}) \leq F_t(u) \leq u \: 
            \text{ a.s. for all } u \in[0,1], 
            \mbox{ for all  }t \in \mathcal{H}_{0} \mbox{ and }  P \in \mathcal{P} .
            \label{eqn:condsuperunif}
        \end{equation}
    When choosing the null bounding family $F_t(u)=u$ for all $u$, 
    the latter condition is sometimes referred to as SuperCoAD (super-uniformity conditionally on all discoveries), 
    see \cite{ramdas2017online}.
\end{remark} 

{Throughout the paper, we investigate the two following prototypical examples of super-uniformity. }

\begin{example}\label{ex:leading}
    Our leading example is the case where a discrete test statistic is used for inference in each individual test. 
    Typical instances include tests for analyzing counts represented by contingency tables, such as Fisher's exact test, 
    see Section~\ref{sec:numerical_results}. 
    In discrete testing, each $p$-value $p_t(X)$ has its own support $\mathcal{S}_t$ (known and not depending on $P$), 
    that is a finite set (or, in full generality, a countable set with $0$ as the only possible accumulation point). 
    A null bounding family satisfying \eqref{eqn:superunif} can easily be derived by considering
    $F_t$, the right-continuous step function that jumps at each point of $\mathcal{S}_t$, 
    see Figure~\ref{fig:superunifreward} below. 
    Note that the support $\mathcal{S}_t$ depends on $t$ so that discrete testing also induces heterogeneity over time.
\end{example}

\begin{example}\label{ex:weighting}
    Our secondary example is $p$-value weighting, where we start from continuous $p$-values (uniform under the null), which are weighted using  external a priori information in order to increase power, see Section~\ref{sec:weighting}.
\end{example}

\subsection{Error rates and power} \label{sec:ErrorRatesPower}
Let us define the criteria that we use to measure the quality of a given procedure 
$\mathcal{A} = \{\alpha_t, t \geq 1\}$.  
For each $T \geq 1$, let $\mathcal{R}(T) = \{t \in \{1,\dots,T\}\::\: p_t(X)\leq \alpha_t\}$
denote the set of rejection times of the procedure $\mathcal{A}$, up to time $T$.
We consider the two following classical online criteria for type I error rates: 
\begin{align}
    \FWER(\mathcal{A}, P) &:= \sup_{T \geq 1} \{\FWER(T,\mathcal{A},P)\}, \:
    \FWER(T, \mathcal{A}, P) := \P_{X \sim P}\Big( 
    |\cH_0 \cap \mathcal{R}(T)| 
    \geq 1 \Big); \label{eq:DefFWER}\\
    \mFDR(\mathcal{A}, P) &:= \sup_{T \geq 1} \{\mFDR(T, \mathcal{A}, P)\}, \: 
    \mFDR(T, \mathcal{A}, P) := \frac{\E_{X \sim P}\left(|\cH_0 \cap \mathcal{R}(T)|\right)}{\E_{X \sim P}\left(1\vee |\mathcal{R}(T)|\right)}, \label{eq:DefmFDRT}
\end{align} 
with the convention $0/0=0$. 
In words, when controlling the online $\FWER$ at level $\alpha$, one has the guarantee that, at each fixed time $T$, the 
probability of making at least one false discovery before time $T$ is below $\alpha$. 
Since FWER control does not tolerate any false discovery (with high probability),
it is generally considered a stringent criterion.
By contrast, when controlling the online $\mFDR$, at each time $T$, 
the expected number of false discoveries before time $T$ can be non-zero, 
but in an amount controlled by the expected number of discoveries. 
While online FWER has been investigated in \cite{tian_onlinefwer_2020}, 
online mFDR control is generally less conservative (that is, allows more discoveries), 
and is widely used in an online context, see \cite{FosterStineAlphainvest, ramdas2017online, ramdas2019saffron}. 
The false discovery rate (FDR) is close to the mFDR: it is defined by using the expectation of the ratio, 
instead of the ratio of the expectations as in \eqref{eq:DefmFDRT}.
Controlling the FDR generally requires more assumptions, 
while mFDR is particularly useful in an online context 
(we refer the reader to  Section~1.1 of \cite{zrnic2021asynchronous} for more discussions on this).
For a given error rate, we aim at deriving procedures that maximize power.
For any procedure $\mathcal{A}$, we define the power as the expected proportion of signal the procedure can detect, that is,
\begin{align} \label{eq:Defpower}
    \text{Power}(T, \mathcal{A}, P):=  \frac{\E_{X \sim P}\left(|\cH_1 \cap \mathcal{R}(T)|\right)}{1\vee \lvert \cH_1 \lvert},
\end{align}
where $\cH_1$ is the set of times of false nulls, that is, the complement of $\cH_0$ in $\{1,2,\dots\}$.

While this power notion will be used in our numerical experiments to compare procedures, 
our theoretical results will use a stricter comparison criterion. 
For two procedures $\mathcal{A} = \{\alpha_t,t\geq 1\}$ and $\mathcal{A}'=\{\alpha'_t,t\geq 1\}$, 
we say that $\mathcal{A}'$ {\it uniformly dominates} $\mathcal{A}$ when $\alpha'_t \geq \alpha_t$ for all $t \geq 1$ (almost surely). 
This implies that, almost surely, $\mathcal{A}'$ makes more discoveries than $\mathcal{A}$, 
in the sense that the set of discoveries of $\mathcal{A}$ is contained in the one of $\mathcal{A}'$,
that is, $\mathcal{R}(T) \subset \mathcal{R}'(T)$ for all $T \geq 1$ (a.s.). 
In particular, this implies the same domination for the true discovery sets 
and thus in particular $\text{Power}(T, \mathcal{A}, P) \leq \text{Power}(T, \mathcal{A}', P)$ for all $T \geq 1$.
With this terminology, we can restate the aim of this work as follows: construct valid OMT procedures that uniformly dominate their base procedures by 
incorporating the null bounding family $F_t$ given in \eqref{eqn:superunif}.

\begin{remark}
There is no consensus regarding the most adequate definition of power in online testing literature. The concept of uniform domination that we use in this paper is much stronger than, e.g., the asymptotic  power considered by \cite{JM2018}. It may, however, not be particularly appropriate if the base procedure $\mathcal{A}$ is chosen poorly. Since the base procedures given in Table \ref{table:OverviewBaseProcedures} are standard in our settting, the domination criterion seems to be reasonable.
\end{remark}

\subsection{Wealth and super-uniformity reward}\label{sec:sure}
In the Generalized Alpha-Investing (GAI) paradigm (see \cite{xuramdas2021dynamic} and the references given therein), 
the nominal level $\alpha$, at which one wants to control the type I error rate, can be seen as an overall error budget 
--  or \emph{wealth}  -- that may be spent on testing hypotheses in the course of an online experiment. 
For a given OMT procedure $\mathcal{A} $, it is possible to define a suitable \emph{wealth function} $W(T) = W(T, \mathcal{A}, P)$, 
such that $W(T)$ represents the wealth available at time $T$ for further testing. 
As a case in point, \cite{xuramdas2021dynamic} define the (nominal) wealth function for the online Bonferroni 
procedure by $ W^{\textnormal{nom}}(T)= \alpha - \sum_{t=1}^{T} \alpha  \gamma_t$. 
Generalizing this expression for arbitrary null distributions we obtain the 'true' or 'effective' wealth 
$W^{\textnormal{eff}}(T)= \alpha - \sum_{t=1}^{T} F_t(\alpha  \gamma_t)$,
where $F_t$ is a null-bounding function. 
In the super-uniform setting, assumption \eqref{eqn:superunif} implies $W^{\textnormal{nom}}(T) \leq W^{\textnormal{eff}}(T)$, 
and as the two orange curves in Figure~\ref{fig:nom_vs_eff_wealth_fwerproc} illustrate, the discrepancy can be quite large. 
\begin{figure}[h!]
  \centering
  \makebox{\includegraphics[width=0.6\textwidth]{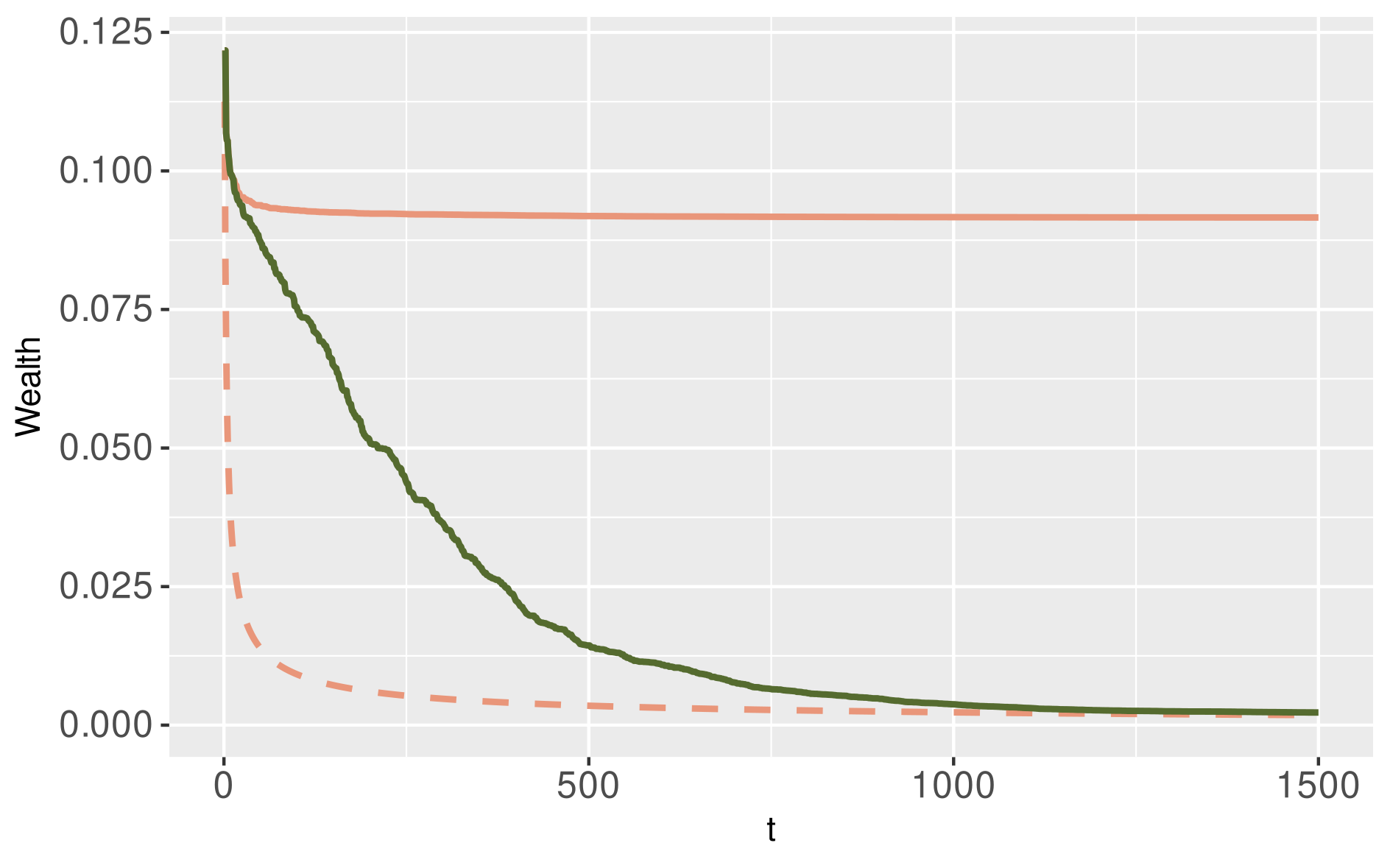}}
\vspace{-0.5cm}
  \caption{Nominal wealth for OB (dashed orange curve), effective wealth for OB (solid orange curve) 
  and effective wealth for $\rho$OB (solid green curve) for the male mice from the IMPC data (see Section \ref{sec:real_data_appli} for more details). }
  \label{fig:nom_vs_eff_wealth_fwerproc}
\end{figure}

However, while the user thinks the procedure is spending the budget over time according to the nominal wealth 
given by the dashed orange curve, 
in reality, the procedure is under-utilizing wealth, as the solid orange true wealth curve indicates.
This unnecessarily austere spending behaviour makes the online Bonferroni procedure sub-optimal.
In addition, this  phenomenon extends to the other procedures 
and error rates listed in Table~\ref{table:OverviewBaseProcedures} as well. 
Our proposed solution incorporates super-uniformity so that its wealth function behaves more like the targeted nominal wealth, 
as depicted by the  green curve in Figure~\ref{fig:nom_vs_eff_wealth_fwerproc}.

For incorporating super-uniformity, we introduce the {\it super-uniformity reward} (SUR), a key quantity in our work.
For any procedure $\mathcal{A} = \{\alpha_t, t \geq 1\}$ 
and  null bounding family $\mathcal{F} = \{F_t, t \geq 1\}$, 
the super-uniformity reward $\rho_t$ at time $t$ is defined by 
\begin{align} 
    \rho_t =\rho_t (\alpha_t, F_t):= \alpha_t - F_t(\alpha_t),\quad t \geq 1.
    \label{eqn:superunifreward}
\end{align} 
Note that \eqref{eqn:superunif} always implies $\rho_t \geq 0$ for all $t \geq 1$.
In the case of discrete testing (Example~\ref{ex:leading}), 
we have $F_t(\alpha_t)=0$ when $\alpha_t$ is below the infimum of the support $\mathcal{S}_t$. 
This produces the maximum possible super-uniformity reward at time $t$, that is, $\rho_t=\alpha_t$. 
Conversely, when $\alpha_t \in \mathcal{S}_t$, we have $F_t(\alpha_t)=\alpha_t$ 
and we have no super-uniformity reward at time $t$, that is, $\rho_t=0$.
In general, we have $\rho_t \in [0, \alpha_t]$, {its actual value depending on}
the discreteness of the test (that is on the steps of $F_t$) and of the value of $\alpha_t$. 
The super-uniformity reward is illustrated
in Figure~\ref{fig:superunifreward} for a single distribution $F_t$ and value $\alpha_t$. 
\begin{figure}[h!]
    \centering
    \makebox{\includegraphics[width=0.7\textwidth]{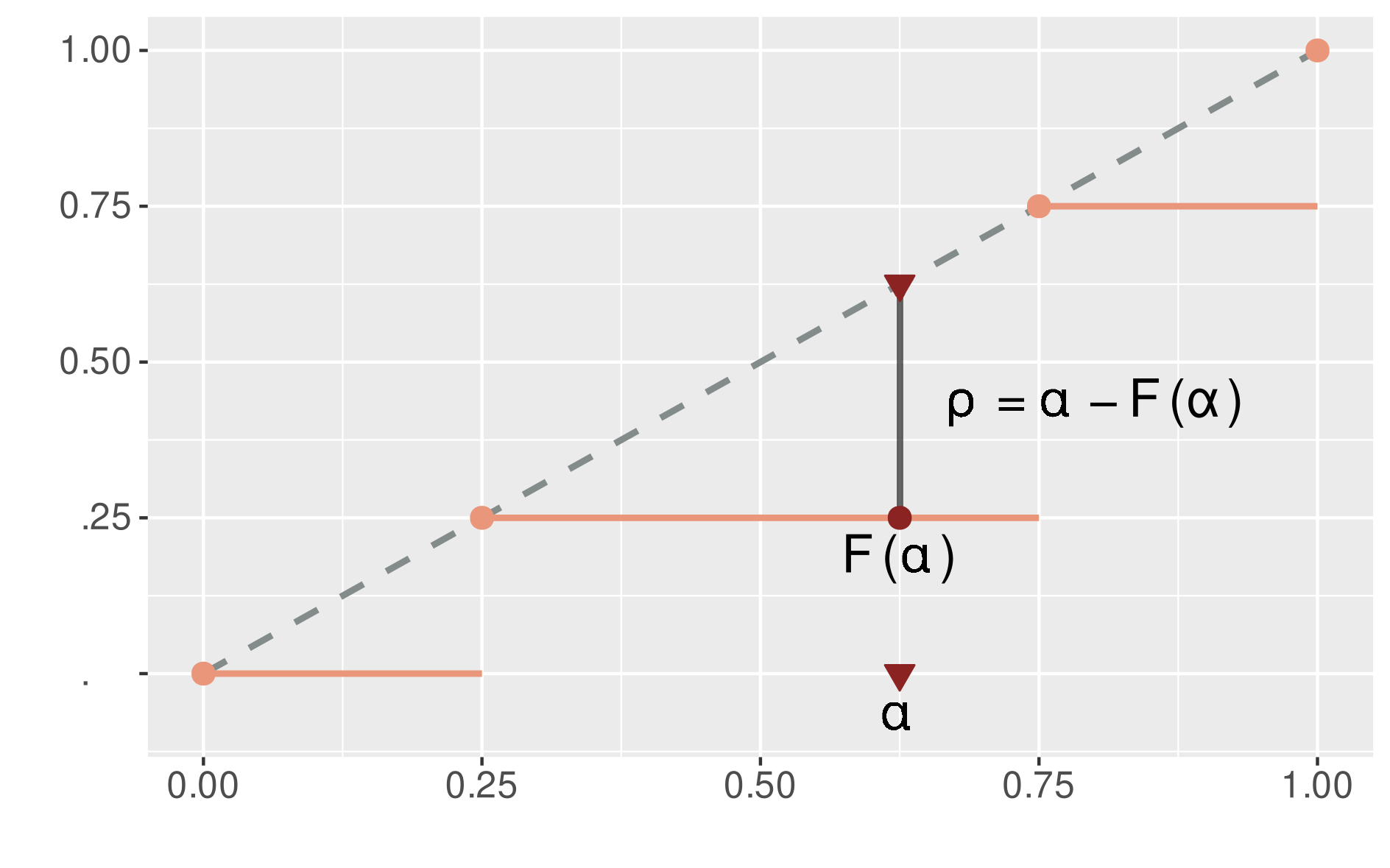}}
    \vspace{-0.5cm}
    \caption{Super-uniformity reward $\rho_t$ at time $t$ (length of the vertical line) 
    as defined by \eqref{eqn:superunifreward} for a given function $F_t$ (orange step function) 
    and a critical value $\alpha_t$ (triangle). The dashed line is the identity function $x \in [0,1]\mapsto x$.}
    \label{fig:superunifreward}
\end{figure}
Mathematically, $\rho_t$ is simply the difference between the nominal significance level $\alpha_t$
and the truly achieved significance level $F_t(\alpha_t)$.  
In terms of wealth, $\rho_t$ can be interpreted as the fraction  
of nominal significance level which the OMT procedure was unable to 'spend' due to super-uniformity. 
Intuitively, it seems clear that this amount can be put aside and be re-allocated to the subsequent tests
to increase the future critical values $(\alpha_{T}, T \geq t+1)$. 
In Sections~\ref{sec:FWERcontrolSU} and~\ref{sec:mFDRcontrolSU},  
we show in detail how this can be done without sacrificing type I error control.

\subsection{Spending sequences} \label{sec:WealthSpending}
As Table \ref{table:OverviewBaseProcedures} displays, 
the base procedures we use are parametrized by a sequence $\gamma = (\gamma_t)_{t \geq 1}$ of non-negative values, 
such that $\sum_{t \geq 1} \gamma_t \leq 1$, which we refer to as the {\it spending sequence}. 
The spending sequence controls the rate at which the wealth is spent in the course of the online experiment 
(for instance, see  \eqref{eqn:OB} for the online Bonferroni procedure).
However, finding suitable spending sequences is not trivial: there is a trade-off between saving wealth for large values of $T$ 
and the ability to make discoveries in the not-too-distant future. 
Typical choices for $\gamma$ in the literature are:
    \begin{itemize}
    \item $\gamma_t \propto t^{-q}$ for all $t$ for some $q>1$, see \cite{tian_onlinefwer_2020};
    \item $\gamma_t \propto (t+1)^{-1} \log^{-q}(t+1)$  for all $t$, for some $q>1$, see \cite{tian_onlinefwer_2020};
    \item $\gamma_t \propto \frac{\log((t+1) \vee 2)}{(t+1)\exp (\sqrt{\log(t+1)})}$, see \cite{JM2018}.
\end{itemize}
Throughout the paper, we choose $\gamma_t \propto t^{-q}$ with $q=1.6$, as suggested by previous literature.
In the base procedures listed in Table \ref{table:OverviewBaseProcedures}, there are two  potential sources of wealth: 
the initial wealth invested at $T=0$, 
and the \emph{rejection reward} that can be earned by rejections for investing procedures (i.e., mFDR controlling procedures). 
When one can use super-uniformity reward as described in Section \ref{sec:sure}, an additional source of wealth comes into play. 
Indeed, our approach is to use an additional \emph{SUR spending sequence} $\gamma'$ 
to smoothly incorporate all the rewards collected up to time $T$ to compute the new critical value $\alpha_T$. 
This SUR spending sequence could be chosen for instance from one of the smoothing sequences listed above. 
{Here, we focus on the following choice:}
\begin{equation}\label{eqn:kernel}
    \gamma'_{t} = \gamma'_{t}(h) = \ind{t \leq h}/h,\:\: \:\: t \geq  1,
\end{equation}
where $h \geq 1$ is a suitably chosen integer. 
Since this leads to procedures that spread rewards uniformly over a finite horizon of length $h$, 
we refer to \eqref{eqn:kernel} --  by analogy with non-parametric density estimation -- 
as a \emph{rectangular kernel} with \emph{bandwidth} $h$.
Finally, another idea introduced by \cite{ramdas2019saffron, tian_onlinefwer_2020} 
in order to slow down the natural decay in the $\alpha_t$ sequence 
is to consider $\gamma_{\mathcal{T}(t)}$ where $\mathcal{T}(t)$ is a slowed down clock, 
see \eqref{Tronde_def} and \eqref{Tronde_def_gen} below. 
As we will see in Section \ref{sec:adaptFWER} and Section \ref{sec:adaptmFDR}, 
this technique can also be combined with a suitable super-uniformity reward.

\section{Online FWER control} \label{sec:FWERcontrolSU}
In this section, we aim at finding procedures $\mathcal{A}$ 
such that $\FWER(\mathcal{A},P) \leq \alpha$ for some targeted level  $\alpha\in (0,1)$.
We begin with a simple application of our approach to improve the online Bonferroni procedure 
with a 'greedy' super-uniformity reward,
and then turn to a smoother spending of the super-uniformity reward (Theorem~\ref{th:OBSURE}). 
This approach is then applied in combination with the adaptive online procedure 
introduced by \cite{tian_onlinefwer_2020} (Theorem~\ref{th:AOBSURE}). 
Finally, a general result is provided (Theorem~\ref{th:genFWER}) that allows to reward 
any procedure controlling the online FWER in some specific way. 
This allows unifying all results obtained in this section 
while further extending the scope of our methodology.

\subsection{Warming-up: online Bonferroni procedure and a first greedy reward} \label{sec:warmingupFWER}
For any given spending sequence  sequence $\gamma = (\gamma_t)_{t \geq 1}$, 
a well-known online FWER controlling procedure is the online Bonferroni procedure, 
$\Abonf = \Abonf(\alpha, \gamma):= \{\alphabonf_t, t \geq 1\}$, defined by
\begin{align} 
    \alphabonf_T:= \alpha \gamma_T, \quad T \geq 1.
    \label{eqn:OB}
\end{align}
It is also called Alpha-Spending rule (\citealp{FosterStineAlphainvest}) in the context of online FWER control, 
see \cite{tian_onlinefwer_2020}.
It is straightforward to check that $\Abonf$ controls the FWER under the classical super-uniformity condition \eqref{eqn:superunif0}:
by the Markov inequality, for all $T \geq 1$,
\begin{align}
    \FWER(T,\Abonf,P) 
    &\leq  \E_{X \sim P} \Big(\sum_{t=1}^T \ind{t \in \cH_0, p_t \leq \alpha \gamma_t}\Big) \label{equ:FWERcontrol} \\
    &\leq \sum_{t \in \cH_0} \P_{X\sim P}(p_t \leq \alpha \gamma_t) 
    \leq \sum_{t \in \cH_0}\alpha \gamma_t \leq \alpha. \label{equ:FWERcontrol2}
\end{align}
Let us now present the rationale behind our approach in this simple case. 
Assume more generally that we have at hand a null bounding family 
$\mathcal{F} = \{F_t, t \geq 1\}$ satisfying \eqref{eqn:superunif}. 
The above reasoning leads  to the following valid bound 
for any procedure $\mathcal{A} = \{\alpha_t, t \geq 1\}$ (with deterministic $\alpha_t$):
\begin{align}
    \FWER(T, \mathcal{A}, P)
    &\leq \sum_{t=1}^T F_t(\alpha_t) \leq \alpha_T + \sum_{t=1}^{T-1} F_t(\alpha_t) 
    = \alpha \sum_{t=1}^{T} \gamma_t \leq \alpha,
    \label{eqn:FWERcontrol3}
\end{align}
by choosing $\alpha_T = \sum_{t=1}^{T} \alpha \gamma_t - \sum_{t=1}^{T-1} F_t(\alpha_t)$.  
The latter is a recursive relation that allows to define a new procedure 
$\mathcal{A} = \{\alpha_t, t \geq 1\}$ controlling the FWER. 
Since $\alpha_1 = \alpha \gamma_1$ and for $T \geq 2$, 
$\alpha_T - \alpha_{T-1} = \alpha \gamma_T - F_{T-1}(\alpha_{T-1})$,
this leads to the simple rule 
\begin{align} 
    \alpha_T = \alpha \gamma_T + \rho_{T-1}, \quad T \geq 1,
    \label{eqn:greedyrhoOB}
\end{align}
where $\rho_{T-1} = \alpha_{T-1} - F_{T-1}( \alpha_{T-1})$ 
is the super-uniformity reward \eqref{eqn:superunifreward} at time $T-1$ (with the convention $\rho_0=0$). 
In addition, from \eqref{eqn:superunif}, we have $\rho_{T-1} \geq 0$,
and the critical values \eqref{eqn:greedyrhoOB} uniformly dominate the online Bonferroni critical values \eqref{eqn:OB} (the obtained critical values are in particular nonnegative, thus defining a valid OMT procedure).
The approach behind critical values \eqref{eqn:greedyrhoOB} is said here to be 'greedy',  
because it spends the complete super-uniformity reward 
$\rho_{T-1}$ obtained at step $T-1$ for increasing the next critical value $\alpha_T$. 

\subsection{Smoothing out the super-uniformity reward} \label{sec:smoothSUFWER}
The greedy policy described in the previous section is not always appropriate 
when time is considered on a potentially large period,
because the sequence of critical values might fall too abruptly.
Instead, we can smooth this effect over time, by distributing the reward collected at time $T-1$ 
over all times following $T$. 
To formalize this idea, we introduce a {\it SUR spending sequence} (see also Section~\ref{sec:WealthSpending}), 
which is defined as a non-negative sequence $\gamma' = (\gamma'_t)_{t \geq 1}$ such that $\sum_{t \geq 1} \gamma'_t \leq 1$. 
While this definition is mathematically the same as the definition of a spending sequence, 
the role of the SUR spending sequence is different, so we use a different name for it.
\begin{definition}
    For any spending sequence $\gamma$ and any SUR spending sequence $\gamma'$, 
    the online Bonferroni procedure with super-uniformity reward, 
    denoted by $\AOBSURE = \{ \alphaOBSURE_t, t \geq 1\}$, is defined by the recursion
    \begin{equation}
        \alphaOBSURE_T = \alpha \gamma_T + \sum_{t=1}^{T-1} \gamma'_{T-t} \rho_{t}, \quad T \geq 1,
        \label{eqn:OBSURE}
    \end{equation}
    where $\rho_{t} = \alphaOBSURE_t - F_t(\alphaOBSURE_t)$ denotes the super-uniformity reward at time $t$ for that procedure. 
\end{definition}
Note that taking $\gamma' = (1, 0, \dots 0)$
recovers the 'greedy' critical values \eqref{eqn:greedyrhoOB}. 
For the rectangular kernel SUR spending sequence given by \eqref{eqn:kernel}, 
we have $\sum_{t=1}^{T-1} \gamma'_{T-t} \rho_{t} =h^{-1} \sum_{t=1\vee (T-h)}^{T-1} \rho_{t}$, 
which we interpret as a uniform spending of the SUR reward over the last $h$ time points. 
As shown in Figure~\ref{fig:smoothedcvs}, the corresponding sequence of critical values (green line) is more 'stable' 
than the one using the greedy approach (blue line), 
allowing for some additional discoveries (on this simulated data).
The following result provides FWER control of the new rewarded critical values \eqref{eqn:OBSURE}, for a general SUR spending sequence.
\begin{theorem} 
    \label{th:OBSURE}
    Consider the setting of Section~\ref{sec:setting},
    where a null bounding family $\mathcal{F} = \{F_t, t \geq 1\}$ satisfying \eqref{eqn:superunif} is at hand.
    For any spending sequence  $\gamma$ and any SUR spending sequence $\gamma'$, 
    consider the online Bonferroni procedure $\Abonf = \{\alphabonf_t, t \geq 1\}$ \eqref{eqn:OB},
    and the online Bonferroni with super-uniformity rewards $\AOBSURE=\{\alphaOBSURE_t, t \geq 1\}$ \eqref{eqn:OBSURE}. 
    Then we have $\FWER(\AOBSURE,P) \leq \alpha$ for all $P \in \mathcal{P}$,
    while $\AOBSURE$ uniformly dominates $\Abonf$.
\end{theorem}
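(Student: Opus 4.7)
The plan is to mimic the calculation displayed in equations \eqref{equ:FWERcontrol}--\eqref{eqn:FWERcontrol3}, but with the null bounding family~$F_t$ replacing the trivial uniform bound and with the new critical values $\alphaOBSURE_t$ in place of $\alpha_t$. Note that $\alphaOBSURE_t$ is a deterministic sequence, because $\rho_s$ depends only on $\alphaOBSURE_s$ and the known function $F_s$, not on the data. The two conclusions of the theorem decouple cleanly: uniform domination is essentially immediate from non-negativity, and the FWER bound reduces to bookkeeping on a double sum.

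For the uniform domination of $\Abonf$ by $\AOBSURE$, I would just observe that super-uniformity~\eqref{eqn:superunif} yields $\rho_t = \alphaOBSURE_t - F_t(\alphaOBSURE_t) \geq 0$ for every $t$, and that $\gamma'_{T-t}\geq 0$ by definition of a SURE spending sequence. Hence
\[
  \alphaOBSURE_T = \alpha\gamma_T + \sum_{t=1}^{T-1} \gamma'_{T-t}\rho_t \geq \alpha \gamma_T = \alphabonf_T
\]
for every $T\geq 1$, which in particular confirms that $\alphaOBSURE_T\geq 0$ so the procedure is well defined.

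For the FWER control, first apply Markov's inequality and the union bound exactly as in \eqref{equ:FWERcontrol}--\eqref{equ:FWERcontrol2}, but invoking \eqref{eqn:superunif} instead of the standard super-uniformity, to obtain
\[
  \FWER(T,\AOBSURE,P) \leq \sum_{t\in\cH_0, t\leq T} F_t(\alphaOBSURE_t) \leq \sum_{t=1}^{T} F_t(\alphaOBSURE_t).
\]
Now substitute $F_t(\alphaOBSURE_t) = \alphaOBSURE_t - \rho_t$ and expand $\alphaOBSURE_t$ using \eqref{eqn:OBSURE}:
\[
  \sum_{t=1}^T F_t(\alphaOBSURE_t) = \alpha\sum_{t=1}^T \gamma_t + \sum_{t=1}^T \sum_{s=1}^{t-1} \gamma'_{t-s}\rho_s - \sum_{t=1}^T \rho_t.
\]
Swapping the order of summation in the double sum yields $\sum_{s=1}^{T-1}\rho_s \sum_{j=1}^{T-s}\gamma'_j$, which by $\sum_{j\geq 1}\gamma'_j\leq 1$ and $\rho_s\geq 0$ is bounded by $\sum_{s=1}^{T-1}\rho_s$. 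Combined with $\sum_{t=1}^T\gamma_t\leq 1$ and dropping the non-negative term $\rho_T$, this gives the bound~$\alpha$.

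The calculation is essentially routine; the only subtle point—and the one worth double-checking—is that the double-sum swap must produce exactly the telescoping cancellation $\sum_{s=1}^{T-1}\rho_s - \sum_{t=1}^{T}\rho_t = -\rho_T \leq 0$, so that the penalty for smoothing the reward over future tests is absorbed by the ``unused'' portion $1-\sum_{j=1}^{T-s}\gamma'_j$ of the SURE spending sequence together with the final reward $\rho_T$. Once this is recorded, the same template will transparently extend to all other base procedures considered later, which is the purpose of the general Theorem~\ref{th:genFWER}.
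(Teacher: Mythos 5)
Your proof is correct, and both conclusions (domination and FWER control) are established soundly; in particular you rightly observe that the $\alphaOBSURE_t$ are deterministic, so no independence assumption is needed, which matches the paper's statement. The route differs from the paper's in execution: the paper obtains Theorem~\ref{th:OBSURE} as a corollary of the general Theorem~\ref{th:genFWER} (with $\lambda=0$), where the union-bound step is packaged in Lemma~\ref{lemmacontroladapt} and the key algebraic step is the representation Lemma~\ref{lem:gencomput}, which rewrites the rewarded recursion so that $\alpha_T+\sum_{t<T}\bigl[(1-a_{T-t})\alpha_t+a_{T-t}F_t(\alpha_t)\bigr]$ collapses to $\alpha\sum_{t\le T}\gamma_t\le\alpha$; your argument instead substitutes $F_t(\alphaOBSURE_t)=\alphaOBSURE_t-\rho_t$, unrolls \eqref{eqn:OBSURE} and swaps the double sum, bounding the total spent reward $\sum_{s\le T-1}\rho_s\sum_{j\le T-s}\gamma'_j$ by the total collected reward $\sum_{s\le T-1}\rho_s$. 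The two computations encode the same cancellation (your ``unused mass $1-\sum_{j\le T-s}\gamma'_j$'' plays exactly the role of the paper's $(1-a_{T-t})\alpha_t+a_{T-t}F_t(\alpha_t)$ bound), but your version is more elementary and self-contained for this special case, while the paper's buys generality: the identical lemma-based argument then covers the adaptive variant $\AAOBSURE$ and the mFDR procedures without re-deriving the bookkeeping. One pedantic remark: the inequalities $\rho_t\ge 0$ and hence $\alphaOBSURE_t\ge\alphabonf_t$ are formally established by a forward induction along the recursion (as the paper does at the end of Lemma~\ref{lem:gencomput}), since $F_t(\alphaOBSURE_t)\le\alphaOBSURE_t$ presupposes that $\alphaOBSURE_t$ is already known to be nonnegative; your phrasing implicitly uses this and it is immediate to make explicit.
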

This result will be a consequence of a more general result, see Section~\ref{sec:genresultFWER}.
\begin{figure}[h!]
    \centering
    \makebox{\includegraphics[width=0.7\textwidth]{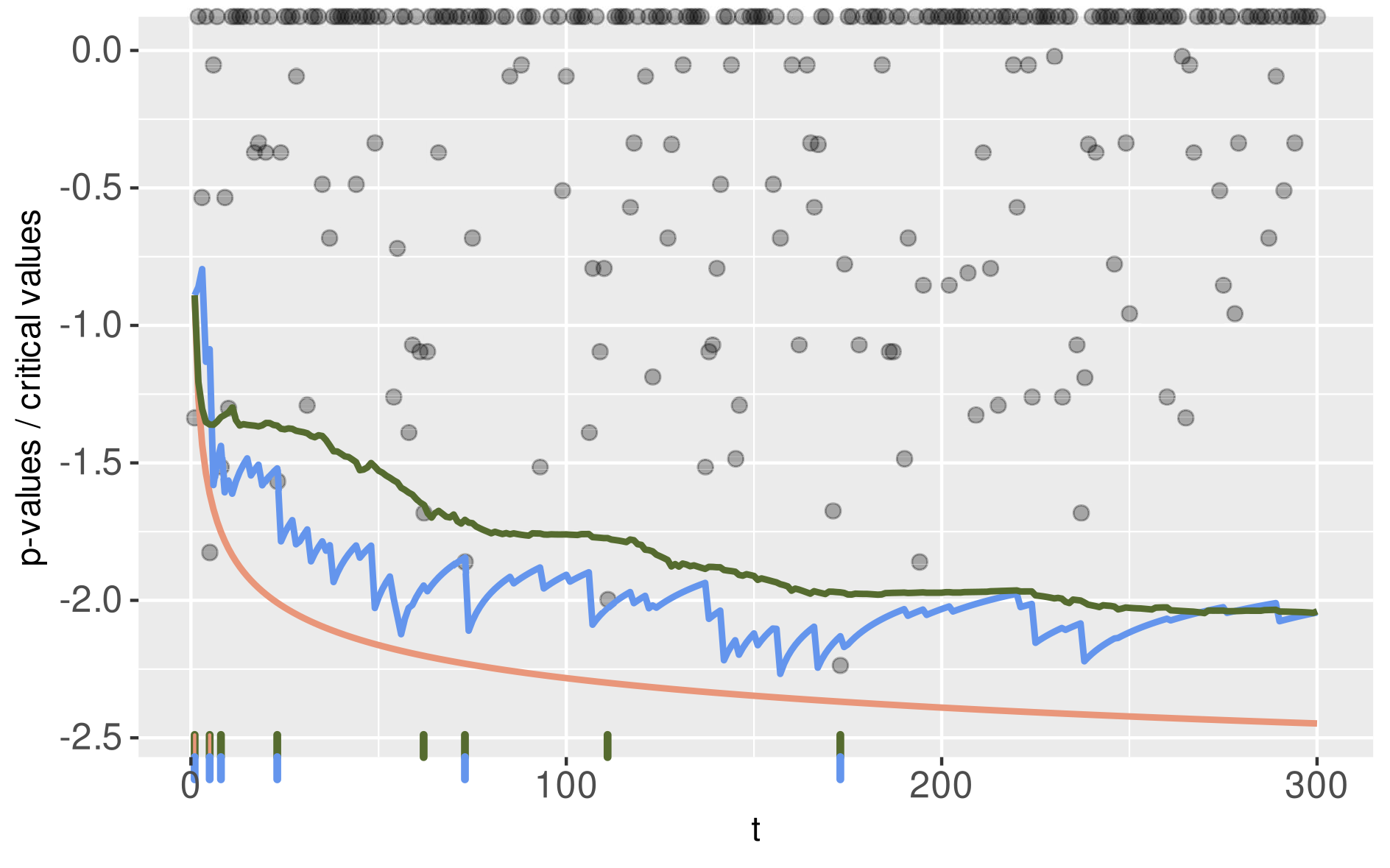}}
        \vspace{-0.5cm}
    \caption{Sequences of critical values for Bonferroni procedures with different rewards 
            over time $1 \leq t \leq T=300$ (simulated data): 
            base Bonferroni critical values \eqref{eqn:OB} (orange line), 
            rewarded with the greedy approach \eqref{eqn:greedyrhoOB} (blue line), 
            and with the rectangular kernel SUR spending sequence  \eqref{eqn:OBSURE} ($h=100$, green line). 
            The rug plots display the time of discoveries for each procedure with the corresponding color.
            The $Y$-axis has been transformed by $y \mapsto -\log(-\log(y))$.  
            The grey dots denote the $p$-value sequence {(those equal to $1$ are displayed at the top of the picture)}.
            The spending sequence is $\gamma_t \propto t^{-1.6}$.
            }
    \label{fig:smoothedcvs}
\end{figure}

\subsection{Rewarded Adaptive Online Bonferroni} \label{sec:adaptFWER}
It is apparent from \eqref{equ:FWERcontrol}-\eqref{equ:FWERcontrol2}
that there is some looseness when upper-bounding 
$\sum_{t \in \cH_0} \gamma_t$ by $ \sum_{t\geq 1} \gamma_t$
which may lead to unnecessarily conservative procedures. 
We may attempt to avoid this loss in efficiency by considering a spending sequence  $\gamma$ satisfying the condition 
$\sum_{t \in \cH_0} \gamma_t \leq 1$ which is more liberal than $\sum_{t \geq 1} \gamma_t \leq 1$. 
In words, this means that the index $t$ in the sequence $\{\gamma_t, t \geq 1\}$ 
should only be incremented when we are testing an hypothesis $H_t$ with $t \in \cH_0$.
Since $\cH_0$ is unknown, such a modification cannot be implemented directly in the $\gamma$ sequence. 
Nevertheless, an approach proposed by \cite{tian_onlinefwer_2020} works by replacing the unknown set $\cH_0$ 
by an estimate $\{1\} \cup \{t \geq 2 \::\: p_{t-1} \geq \lambda\}$ 
for some parameter $\lambda \in (0,1)$, 
and to correct the introduced error in the thresholds $\alpha_t$ to maintain the FWER control.
More formally, we follow \cite{tian_onlinefwer_2020} by introducing the re-indexation functional 
$\mathcal{T}:\{1,\dots\} \to \{1,\dots\}$ defined by
\begin{align} 
    \mathcal{T}(T) = 1 + \sum_{t=2}^{T} \ind{p_{t-1} \geq \lambda}
    , \quad T \geq 1.
    \label{Tronde_def}
\end{align}
Since a large $p$-value is more likely to be linked to a true null, 
$\mathcal{T}(T)$ is used to account for the number of true nulls before time $T$ 
(note that this estimate is nevertheless biased). 
From an intuitive point of view, $\mathcal{T}(T)$ slows down the time by only incrementing time 
when the preceding $p$-value is large enough.
This idea leads to the adaptive online Bonferroni procedure  introduced by \cite{tian_onlinefwer_2020} (called there 'Adaptive spending'
\footnote{The so-called 'discarding' part of the method proposed by \cite{tian_onlinefwer_2020} 
cannot be implemented in our setting because the $F_t$ are not convex, as discussed in Section~\ref{sec:previous_work}.}),
with spending sequence $\gamma$ and \emph{adaptivity parameter} $\lambda \in [0,1)$, 
denoted here by $\AAbonf = \{\alphaAbonf_t, t \geq 1\}$,
and given by
\begin{align} 
    \alphaAbonf_T = \alpha (1-\lambda) \gamma_{\mathcal{T}(T)}, \quad T \geq 1.
    \label{eqn:AOB}
\end{align}
It recovers the standard online Bonferroni procedure when $\lambda = 0$ 
(because $\mathcal{T}(T)=T$ for $T \geq 1$ in that case), 
but leads to different thresholds when $\lambda > 0$. 
Comparing $\AAbonf$ to $\Abonf$, no procedure uniformly dominates the other.
An improvement of $\AAbonf$ over $\Abonf$ is expected to hold 
when there are many false null hypotheses in the data, 
and increasingly so if the signal occurs early in the time sequence, 
see the numerical experiments in Section~\ref{sec:numerical_results}. 
In addition, note that the critical value $\alphaAbonf_T$ depends on the data $X_1, \dots, X_{T-1}$ and thus is random.
As a result, the adaptive approach requires additional distributional assumptions compared with the online Bonferroni procedure. 
In \cite{tian_onlinefwer_2020}, $\AAbonf$ is proved to control the FWER under \eqref{eqn:superunif0} and \eqref{indep} 
(actually under the slightly more general condition \eqref{eqn:condsuperunif} with $F_t$ equal to identity).
Let us now use this approach in combination with the super-uniformity reward. 
\begin{definition}
    For any spending sequence $\gamma$, any SUR spending sequence $\gamma'$, 
    and $\lambda \in [0,1)$, 
    the adaptive online Bonferroni procedure with super-uniformity reward, 
    denoted by $\AAOBSURE=\{\alphaAOBSURE_t, t \geq 1\}$, is defined by 
    \begin{equation}
        \alphaAOBSURE_T = \alpha (1 - \lambda) \gamma_{\mathcal{T}(T)} 
        + \sum_{1 \leq t \leq T-1 \atop p_t \geq \lambda} \gamma'_{T-t} \rho_t + \varepsilon_{T-1}, \quad T \geq 1,
        \label{eqn:AOBSURE}
    \end{equation}
    where $\rho_{t} = \alphaAOBSURE_t - F_t(\alphaAOBSURE_t)$ denotes the super-uniformity reward a time $t$, and 
    $\varepsilon_{T-1} = \ind{ p_{T-1} < \lambda }(\alpha_{T-1} - \alpha (1 - \lambda) \gamma_{\mathcal{T}(T-1)})$
    is an additional {'adaptive' reward} (convention $\varepsilon_0 = 0$).
\end{definition}
This class of procedures reduces to the class of procedures \eqref{eqn:OBSURE}
introduced in the previous section by setting $\lambda = 0$. 
However, when $\lambda > 0$ the class is different 
since the term $\alpha (1 - \lambda) \gamma_{\mathcal{T}(T)}$, which comes from $\alphaAbonf_T$, makes the threshold random. 
Also, the super-uniformity reward is only collected at time $t \leq T-1$ where $p_t \geq \lambda$. 
The latter is well expected from the motivation of the adaptive approach described above: 
when $p_t < \lambda$, no testing is performed so no reward could be obtained from $\rho_t$. 
Nevertheless, note that the additional term $\varepsilon_{T-1}$ allows to collect some reward 
at time $T-1$ in the case where $p_{T-1} < \lambda$. Since this term only appears in critical values of adaptive procedures, 
we call it the 'adaptive' reward. It is linked to the super-uniformity reward in that no adaptive reward can be obtained 
if no super-uniformity reward has been collected in the past. 
The following result shows that this approach is valid from the FWER control perspective. 
\begin{theorem}
    \label{th:AOBSURE}
    Consider the setting of Section~\ref{sec:setting} 
    where a null bounding family $\mathcal{F} = \{F_t, t \geq 1\}$ satisfying \eqref{eqn:superunif} is at hand.
    For any spending sequence $\gamma$, any SUR spending sequence $\gamma'$ and $\lambda\in [0,1)$, 
    consider the adaptive online Bonferroni procedure $\AAbonf = \{\alphaAbonf_t, t \geq 1\}$ \eqref{eqn:AOB} 
    and the adaptive online Bonferroni with super-uniformity rewards $\AAOBSURE = \{\alphaAOBSURE_t, t \geq 1\}$ \eqref{eqn:AOBSURE}.
    Then, assuming that the model $\mathcal{P}$ is such that \eqref{indep} holds, 
    we have $\FWER(\AAOBSURE,P) \leq \alpha$  for all $P \in \mathcal{P}$,
    while $\AAOBSURE$ uniformly dominates $\AAbonf$.
\end{theorem}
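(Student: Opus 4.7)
The plan is to establish the two claims of the theorem separately: uniform domination of $\AAbonf$ by $\AAOBSURE$, and the FWER bound. I would derive the FWER bound as an instance of the general template of Theorem~\ref{th:genFWER} from Section~\ref{sec:genresultFWER}, so the bulk of the work reduces to verifying that the critical values \eqref{eqn:AOBSURE} fit that template. The overall strategy would mirror the proof of Theorem~\ref{th:OBSURE}, enriched with the adaptive-clock bookkeeping of \cite{tian_onlinefwer_2020}.

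For uniform domination, I would proceed by strong induction on $T$. At $T=1$, both sums in \eqref{eqn:AOBSURE} are empty and $\varepsilon_0=0$ by convention, so $\alphaAOBSURE_1=\alpha(1-\lambda)\gamma_{\mathcal{T}(1)}=\alphaAbonf_1$. For the inductive step, assuming $\alphaAOBSURE_s\geq\alphaAbonf_s$ for all $s<T$, the super-uniformity rewards $\rho_s=\alphaAOBSURE_s-F_s(\alphaAOBSURE_s)\geq 0$ by \eqref{eqn:superunif}, and the adaptive reward $\varepsilon_{T-1}=\ind{p_{T-1}<\lambda}(\alphaAOBSURE_{T-1}-\alphaAbonf_{T-1})\geq 0$ by the inductive hypothesis at $s=T-1$. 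Hence $\alphaAOBSURE_T$ is obtained by adding nonnegative terms to $\alphaAbonf_T$, which gives $\alphaAOBSURE_T\geq\alphaAbonf_T$.

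For the FWER bound, I would first apply Markov's inequality to obtain $\FWER(T,\AAOBSURE,P)\leq\sum_{t\in\cH_0,\,t\leq T}\P(p_t\leq\alphaAOBSURE_t)$. Combining \eqref{indep} with \eqref{eqn:superunif} (equivalently, invoking \eqref{eqn:condsuperunif}) together with the fact that $\alphaAOBSURE_t$ is $\mathcal{F}_{t-1}$-measurable yields $\P(p_t\leq\alphaAOBSURE_t\mid\mathcal{F}_{t-1})\leq F_t(\alphaAOBSURE_t)=\alphaAOBSURE_t-\rho_t$ a.s.\ for $t\in\cH_0$. It therefore suffices to bound $\E\bigl[\sum_{t\in\cH_0,\,t\leq T}(\alphaAOBSURE_t-\rho_t)\bigr]$ by $\alpha$. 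Expanding \eqref{eqn:AOBSURE} decomposes this quantity into a ``base part'' $\sum_{t\in\cH_0}\alpha(1-\lambda)\gamma_{\mathcal{T}(t)}$, whose expectation is bounded by $\alpha$ via the adaptive accounting of \cite{tian_onlinefwer_2020} (using that under \eqref{indep} and \eqref{eqn:superunif}, $\P(p_t\geq\lambda\mid\mathcal{F}_{t-1})\geq 1-\lambda$ for $t\in\cH_0$, while the re-indexation $\mathcal{T}$ increments only on $\{p_{\cdot}\geq\lambda\}$, so that $(1-\lambda)\E\bigl[\sum_{t\in\cH_0}\gamma_{\mathcal{T}(t)}\bigr]\leq\E\bigl[\sum_t\gamma_{\mathcal{T}(t)}\ind{p_t\geq\lambda}\bigr]\leq 1$), plus a ``reward part'' $\sum_{t\in\cH_0}\bigl[\sum_{s<t,\,p_s\geq\lambda}\gamma'_{t-s}\rho_s+\varepsilon_{t-1}-\rho_t\bigr]$, which I would argue is non-positive in expectation by a Fubini-style exchange of summations, using $\sum_k\gamma'_k\leq 1$ and a second application of the $(1-\lambda)$-trick to absorb the $\ind{t\in\cH_0}$.

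The main obstacle is precisely this second sum. The adaptive reward $\varepsilon_{t-1}$ is specifically engineered to carry forward the ``unused'' super-uniformity rewards obtained in slots where $p_{t-1}<\lambda$ (i.e., the slots in which no new hypothesis is tested on the adaptive clock), and it must exactly compensate for the fact that the SURE-spending term in \eqref{eqn:AOBSURE} only aggregates over $\{s:p_s\geq\lambda\}$. Verifying this balance directly is tedious because of the recursion $\varepsilon_{t-1}\mapsto\alphaAOBSURE_t\mapsto\rho_t\mapsto\varepsilon_t$. I expect the cleanest way to handle this is to let Theorem~\ref{th:genFWER} encapsulate the accounting identities once and for all, so that the proof of the present theorem reduces to a mechanical check that the recursion \eqref{eqn:AOBSURE} satisfies the structural conditions of that general result.
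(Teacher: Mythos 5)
Your proposal is correct and follows essentially the same route as the paper: Theorem~\ref{th:AOBSURE} is proved there as a corollary of Theorem~\ref{th:genFWER}, and the ``mechanical check'' you defer to is exactly what the paper does, namely verifying condition \eqref{eqn:conditionalpha0} for $\AAbonf$ via the clock identity of Lemma~\ref{lem:Tronde} (your observation that $\mathcal{T}$ increments only on $\{p_{\cdot}\geq\lambda\}$, so that $\sum_{t}\ind{p_t\geq\lambda}\gamma_{\mathcal{T}(t)}\leq 1$ pathwise) together with noting that \eqref{eqn:AOBSURE} is precisely \eqref{eqn:generalreward} with $\mathcal{A}^0=\AAbonf$. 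Your induction for uniform domination is the same argument the paper packages into the final part of Lemma~\ref{lem:gencomput}.
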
 
Theorem~\ref{th:AOBSURE} relies on a more general result (Theorem~\ref{th:genFWER} below).
Note that, contrary to Theorem~\ref{th:OBSURE}, 
Theorem~\ref{th:AOBSURE} needs an independence assumption. 
This was already the case without the super-uniformity reward 
since this is due to the adaptive methodology that makes the critical values random. 
If this independence assumption holds, we show in Section~\ref{sec:numerical_results} 
that $\AAOBSURE$ can indeed improve $\AOBSURE$, 
while it always improves the procedure $\AAbonf$ 
of \cite{tian_onlinefwer_2020} (as guaranteed by the above theorem).

\subsection{Rewarded version for base FWER controlling procedures} \label{sec:genresultFWER}
In this section we present a general result stating that any procedure ensuring 
online FWER control (in a specific way) 
can be rewarded using super-uniformity while maintaining the FWER control.  
\begin{theorem} \label{th:genFWER}
    Assuming that \eqref{eqn:superunif} holds, 
    consider any procedure $\mathcal{A}^0=(\alpha^0_t, t \geq 1)$ satisfying almost surely, 
    for some $\lambda \in [ 0,1)$ 
    and for all $T \geq 1$,
    \begin{align} 
        \alpha^0_T + \sum_{1\leq t\leq T-1, \atop p_t \geq \lambda } \alpha^0_t \leq (1-\lambda) \alpha.
        \label{eqn:conditionalpha0}
    \end{align}
    Then the following holds:
    \begin{itemize}
        \item[(i)] $\mathcal{A}^0$ controls the online FWER, 
        that is, $\FWER(\mathcal{A}^0,P) \leq \alpha$  for all $P \in \mathcal{P}$, 
        either if the $\alpha^0_T$ are deterministic for all $T \geq 1$, or if \eqref{indep} holds;
        \item[(ii)] for any SUR spending sequence $\gamma'=(\gamma'_t,t\geq 1)$, 
        the procedure $\mathcal{A}=(\alpha_t,t\geq 1)$, 
        corresponding to the rewarded $\mathcal{A}^0$, and defined by 
    \begin{align} 
        {\alpha}_T = \alpha^0_T
        + \sum_{1\leq t \leq T-1 \atop p_t \geq \lambda} \gamma'_{T-t} ({\alpha}_t - F_t({\alpha}_t)) 
        +  \ind{ p_{T-1} < \lambda }({\alpha}_{T-1} - \alpha^0_{T-1}) , \quad T\geq 1,
        \label{eqn:generalreward}
    \end{align}
    controls the online FWER, 
    that is, $\FWER(\mathcal{A},P) \leq \alpha$  for all $P \in \mathcal{P}$, 
    either if the $\alpha_T$ are deterministic for all $T \geq 1$, or if \eqref{indep} holds.
    \end{itemize}
\end{theorem}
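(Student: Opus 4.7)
The plan is to reduce FWER control to a deterministic ``effective spending'' inequality and then verify that inequality by carefully tracking the reward cascade.

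\emph{Reduction.} For any predictable critical values $\beta_t\geq 0$, a Markov union bound combined with conditional super-uniformity \eqref{eqn:condsuperunif} gives
\[
\FWER(T,\{\beta_t\},P) \leq \sum_{t\in\cH_0,\,t\leq T}\E[F_t(\beta_t)].
\]
For $t\in\cH_0$ under \eqref{indep}, the inequality $\P(p_t\geq\lambda\mid\mathcal{F}_{t-1})\geq 1-F_t(\lambda)\geq 1-\lambda$ and the $\mathcal{F}_{t-1}$-measurability of $F_t(\beta_t)$ yield
\[
\E[F_t(\beta_t)] \leq (1-\lambda)^{-1}\,\E\bigl[F_t(\beta_t)\ind{p_t\geq\lambda}\bigr].
\]
Hence it suffices to prove the almost-sure bound $\sum_{1\leq t\leq T,\,p_t\geq\lambda}F_t(\beta_t)\leq (1-\lambda)\alpha$ for $\beta_t$ equal to either $\alpha^0_t$ or $\alpha_t$. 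For part (i) this is immediate: $F_t(\alpha^0_t)\leq \alpha^0_t$, and \eqref{eqn:conditionalpha0} applied at $T+1$ (dropping $\alpha^0_{T+1}\geq 0$) directly gives $\sum_{t\leq T,\,p_t\geq\lambda}\alpha^0_t\leq (1-\lambda)\alpha$. The deterministic $\lambda=0$ case avoids the independence step altogether.

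\emph{Part (ii), telescoping identity.} Enumerate $\{t\leq T:p_t\geq\lambda\}$ as $s_1<\dots<s_K$ and set $s_0:=0$. By induction on $t$, unrolling the recursion \eqref{eqn:generalreward} across consecutive runs of indices with $p<\lambda$, I obtain
\[
\alpha_{s_k}-\alpha^0_{s_k}=\sum_{j=s_{k-1}+1}^{s_k}\sum_{\ell=1}^{k-1}\gamma'_{j-s_\ell}\,\rho_{s_\ell}.
\]
Summing over $k$ and swapping the order of summation, the inner $j$-sum reassembles to $\sum_{u=1}^{s_K-s_\ell}\gamma'_u\leq 1$, so $\sum_{k=1}^K(\alpha_{s_k}-\alpha^0_{s_k})\leq \sum_{\ell=1}^{K-1}\rho_{s_\ell}$. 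Using $F_{s_k}(\alpha_{s_k})=\alpha_{s_k}-\rho_{s_k}$ this produces the key cancellation
\[
\sum_{k=1}^K F_{s_k}(\alpha_{s_k})\leq \sum_{k=1}^K\alpha^0_{s_k}-\rho_{s_K}\leq (1-\lambda)\alpha,
\]
the last inequality again from \eqref{eqn:conditionalpha0} at $T+1$. Combined with the reduction step this gives $\FWER(T,\mathcal{A},P)\leq \alpha$.

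\emph{Main obstacle.} The delicate step is establishing the telescoping identity above: I must verify that the adaptive reward $\ind{p_{T-1}<\lambda}(\alpha_{T-1}-\alpha^0_{T-1})$ merely ``relays'' the already-accumulated super-uniformity reward across runs of $p_t<\lambda$ without creating new wealth, so that the single constraint $\sum_{u\geq 1}\gamma'_u\leq 1$ suffices to absorb all reward contributions. Once this identity and the induced cancellation of the $\rho_{s_\ell}$ terms are in hand, the remaining steps are routine applications of the Markov inequality, super-uniformity, and independence.
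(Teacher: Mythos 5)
Your proof is correct, and its first half is the paper's own argument: the reduction via Markov's inequality together with the factor $(1-\lambda)^{-1}$ coming from $\P(p_t\geq\lambda\mid\mathcal{F}_{t-1})\geq 1-F_t(\lambda)\geq 1-\lambda$ is exactly Lemma~\ref{lemmacontroladapt}, and part (i) is handled in the same way. Where you genuinely diverge is in verifying the almost-sure budget bound for the rewarded critical values. The paper does this through the representation Lemma~\ref{lem:gencomput}, which shows that the recursion \eqref{eqn:generalreward} is an exact rewriting of the balance recursion $\alpha_T=\alpha^0_T+\sum_{t\leq T-1,\,p_t\geq\lambda}\alpha^0_t-\sum_{t\leq T-1,\,p_t\geq\lambda}\bigl[(1-a_{T-t})\alpha_t+a_{T-t}F_t(\alpha_t)\bigr]$ with $a_T=\sum_{t\leq T}\gamma'_t$, after which the sufficient condition follows in one line from $F_t(\alpha_t)\leq(1-a_{T-t})\alpha_t+a_{T-t}F_t(\alpha_t)$. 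You instead unroll the recursion over the runs between consecutive times with $p_t\geq\lambda$: your closed form $\alpha_{s_k}-\alpha^0_{s_k}=\sum_{j=s_{k-1}+1}^{s_k}\sum_{\ell<k}\gamma'_{j-s_\ell}\rho_{s_\ell}$ is indeed what \eqref{eqn:generalreward} produces, because the $\varepsilon$-term relays the accumulated gain within a run and is switched off at $T=s_{k-1}+1$, and the summation swap correctly shows that each $\rho_{s_\ell}$ is spent with total mass $a_{s_K-s_\ell}\leq 1$. Your route is more explicit and makes the ``reward is only relayed, never created'' mechanism transparent; the paper's identity-based lemma is less transparent but is reused verbatim for the mFDR result (Theorem~\ref{th:genmFDR}), where the right-hand side $(1-\lambda)\alpha(1\vee R(T))$ is random and your device of invoking the budget condition at time $T+1$ and discarding $\alpha^0_{T+1}$ would no longer deliver the bound at time $T$; it also records explicitly that $\alpha_t\geq\alpha^0_t\geq 0$. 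Two points you leave implicit but should state: you use $\rho_t\geq 0$ (equivalently $\alpha_t\geq 0$, obtained by the same induction together with \eqref{eqn:superunif}) when dropping the factors $a_{s_K-s_\ell}$ and the final term $\rho_{s_K}$; and determinism of the critical values for arbitrary $\lambda$, not only $\lambda=0$, already dispenses with \eqref{indep}, since then $\E\bigl[F_t(\alpha_t)\ind{p_t\geq\lambda}\bigr]\geq(1-\lambda)F_t(\alpha_t)$ follows from marginal super-uniformity alone.
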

Theorem~\ref{th:genFWER} is proved  in Section~\ref{sec:proofFWER}. 
Condition \eqref{eqn:conditionalpha0} is essentially the same as Condition~(20) 
derived in \cite{tian_onlinefwer_2020}.
It is satisfied by the online Bonferroni procedure ($\mathcal{A}^0=\Abonf$), 
and the online adaptive Bonferroni procedure ($\mathcal{A}^0=\AAbonf$). 
While this is obvious for $\Abonf$, the case of $\AAbonf$ requires to carefully check how the functional 
$\mathcal{T}(\cdot)$ \eqref{Tronde_def} slows down the time, which is done in Lemma~\ref{lem:Tronde}. 
Statement (i) of Theorem~\ref{th:genFWER} thus proves the online FWER control for these procedures.
Statement (ii) of Theorem~\ref{th:genFWER} is our main contribution and reduces to Theorems~\ref{th:OBSURE}~and~\ref{th:AOBSURE}, 
when choosing $\mathcal{A}^0=\Abonf$ and $\mathcal{A}^0=\AAbonf$, respectively. 
This recovers the rewarded procedures $\AOBSURE$ and $\AAOBSURE$ discussed in the previous sections: 
compare \eqref{eqn:generalreward}  to \eqref{eqn:OBSURE} (with $\lambda=0$),
and \eqref{eqn:generalreward} to \eqref{eqn:AOBSURE}.
Nevertheless, other choices for $\mathcal{A}^0$ satisfying \eqref{eqn:conditionalpha0} are possible. 
According to our general result, any such choice is compatible with our reward methodology.

\section{Online mFDR control} \label{sec:mFDRcontrolSU}
In this section, we aim at finding procedures $\mathcal{A}$ 
such that $\mFDR(\mathcal{A},P) \leq \alpha$ for some targeted level  $\alpha \in (0,1)$.
We follow the same route as for the FWER: 
we start with an application of the super-uniformity reward to the classical LORD++ procedure 
(\citealp{ramdas2017online}, called just LORD hereafter for short), 
and then turn to adaptive counterparts. 
Finally, we propose a general result encompassing all these cases. 
In this section, we follow the notation of \cite{ramdas2017online} for online mFDR control. 
For any procedure $\mathcal{A}=\{\alpha_t, t \geq 1\}$ and realization of the $p$-value process, 
let us denote  
\begin{equation}
    R(T)=\sum_{t=1}^T \ind{p_t(X) \leq \alpha_t}
    \label{eqn:RT}
\end{equation}
the number of rejections of the procedure up to time $T$,
and 
\begin{equation}
    \tau_j = \min\{t \geq 1\::\:R(t) \geq j\}\:\:\:\mbox{ ($\tau_j = + \infty$ if the set is empty)},
    \label{eqn:tauj}
\end{equation}
the first time that the procedure makes $j$ rejections, for any $j \geq 1$.

\subsection{Warming up: LORD procedure and a first greedy reward} \label{sec:warmingupmFDR}
While a sufficient condition for online FWER control is $\sum_{t \geq 1} \alpha_t \leq \alpha$ 
(see the previous section and in particular \eqref{eqn:conditionalpha0}), 
the mFDR control is ensured when $\sum_{t \geq 1} \alpha_t \leq \alpha  (1\vee R(T))$, 
as proved in Theorem~2 of \cite{ramdas2017online} 
(applicable, e.g., under assumptions \eqref{eqn:superunif} and \eqref{indep}).
Consequently, for each rejection we earn back wealth $\alpha$ with which we are allowed to increase $\alpha_t$; 
typically by starting a new online Bonferroni critical value process. 
This idea is referred to as $\alpha$-investing in the literature, 
see \cite{FosterStineAlphainvest, AharoniRossetGAI, JM2018}.
This idea leads to the LORD (Levels based On Recent Discovery) procedure (\citealp{JM2018}), 
with the improvement given by \cite{ramdas2017online}:
\begin{align} 
    \alphaLORD_T = W_0 \gamma_T + (\alpha-W_0) \gamma_{T-\tau_1} + \alpha\sum_{j\geq 2} \gamma_{T-\tau_j},\quad T \geq 1,
    \label{eqn:LORD}
\end{align}
where by convention $\gamma_t = 0$ at any time $t \leq 0$ and where $\gamma$ is an arbitrary spending sequence. 
Note that the test level at time $T$ splits the initial $\alpha$-wealth between the cases where $R(T)=0$ and $R(T)=1$,
because the bound is equal to $\alpha (1 \vee R(T)) = \alpha$ in both cases 
so the first rejection does not provide an extra room for false discoveries.
The resulting additional parameter $W_0 \in (0,\alpha)$  
balances the initial $\alpha$-wealth between these two cases to maintain the mFDR control. 
The procedure $\ALORD = \{\alphaLORD_t, t \geq 1\}$ 
controls the mFDR under \eqref{eqn:superunif0} and \eqref{indep}, 
because $\sum_{t\geq 1} \alphaLORD_t\leq \alpha (1\vee R(T))$ (see Section~\ref{sec:proofmFDR} for a proof). 
Now, let us consider our more general framework 
where we have at hand a null bounding family $\mathcal{F}=\{F_t, t \geq 1\}$ 
satisfying \eqref{eqn:superunif}.
In that case, we can prove that a sufficient condition on the critical values for mFDR control is that, almost surely,
\begin{align*}
    \sum_{t=1}^T F_t(\alpha_t)\leq \alpha_T + \sum_{t=1}^{T-1} F_t(\alpha_t)\leq \alpha (1\vee R(T)),
\end{align*}
see the general condition \eqref{conditionalpha0mFDR} below. This can be achieved by choosing 
\begin{align*}
    \alpha_T = \sum_{t=1}^T \alphaLORD_t - \sum_{t=1}^{T-1} F_t(\alpha_t), \quad T \geq 1.
\end{align*}
This leads to the thresholds 
\begin{align}
    \alpha_T = \alphaLORD_T + \rho_{T-1}, \quad T \geq 1,
    \label{eqn:greedyrhoLORD}
\end{align}
where $\rho_{T-1} = \alpha_{T-1} - F_{T-1}(\alpha_{T-1})$ 
is the super-uniformity reward \eqref{eqn:superunifreward} at time $T-1$ (with the convention $\rho_0=0$). 
Since $\rho_t \geq 0$ for all $t$ by \eqref{eqn:superunif}, 
this procedure uniformly dominates the procedure $\ALORD$. 
Furthermore, depending on the magnitude 
of the super-uniformity reward, this new procedure is potentially much more powerful.
\begin{figure}[h!]
    \centering
    \makebox{\includegraphics[width=0.7\textwidth]{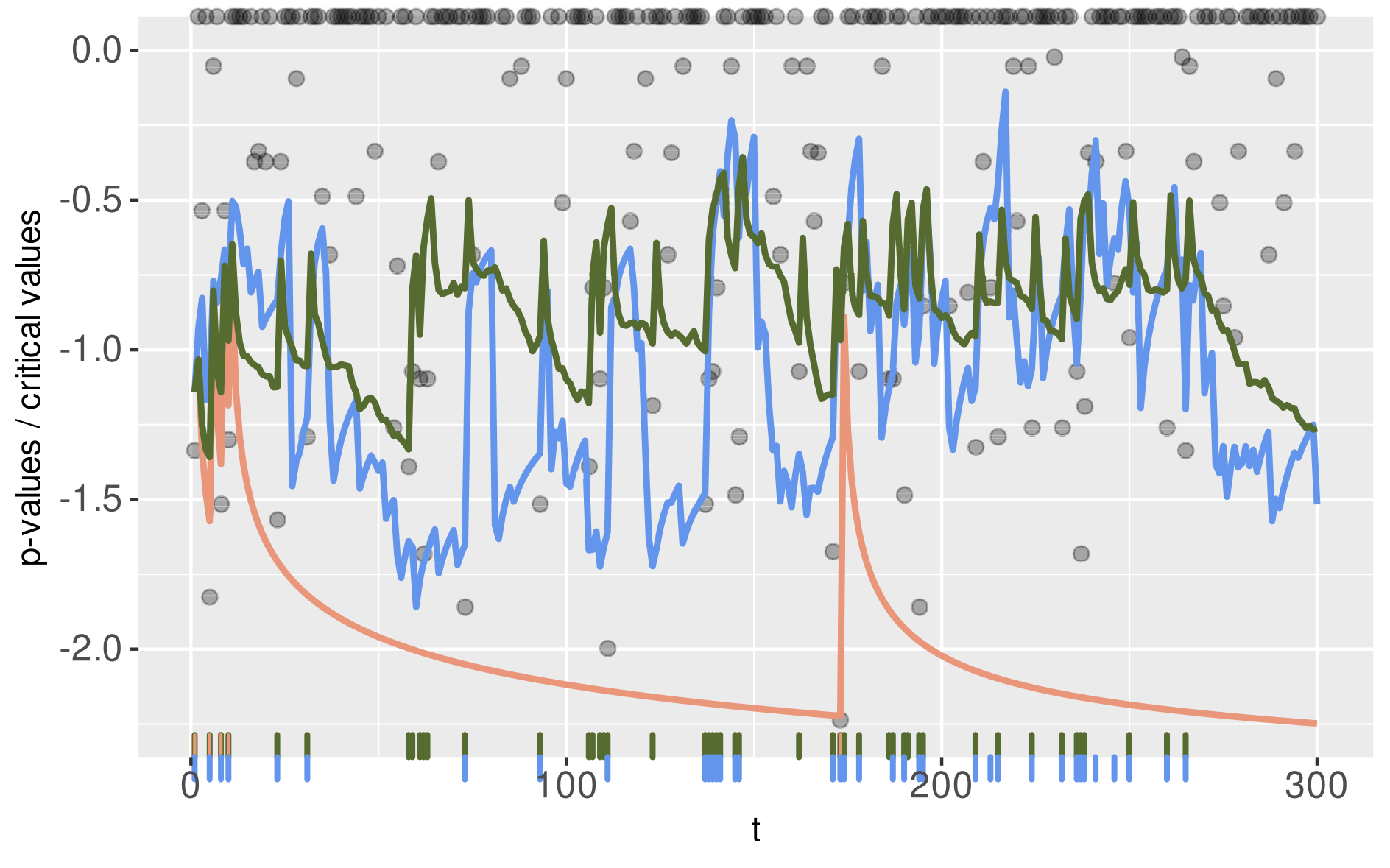}}
        \vspace{-0.5cm}
    \caption{Sequences of critical values of LORD procedure with different rewards over time $1 \leq t \leq T=300$ (simulated data): 
            base LORD critical values \eqref{eqn:LORD}(orange line), 
            rewarded with the greedy approach \eqref{eqn:greedyrhoLORD} (blue line), 
            and with the rectangular kernel SUR spending sequence \eqref{eqn:LORDSURE} ($h=10$, green line). 
            The rug plots display the time of discoveries for each procedure with the corresponding color.
            The $y$-axis has been transformed by $y \mapsto -\log(-\log(y))$. The grey dots denote the $p$-value sequence {(those equal to $1$ are displayed at the top of the picture)}.
            The spending sequence is $\gamma_t \propto t^{-1.6}$.}
    \label{fig:smoothedcvsmfdr}
\end{figure}

\subsection{Smoothing out the super-uniformity reward} \label{sec:smoothSUmFDR}
As discussed for FWER control (see Section~\ref{sec:smoothSUFWER}), 
the preliminary procedure \eqref{eqn:greedyrhoLORD} spends immediately at time $T$ 
all of the super-uniformity reward collected at time $T-1$. 
However, it is more advantageous to redistribute this reward over subsequent times $T, T+1, \dots$, 
by using a SUR spending sequence $\gamma'=(\gamma'_t)_{t\geq 1}$.
This gives rise to the following more general class of online procedures.
\begin{definition}
    For a spending sequence $\gamma$ and a SUR spending sequence $\gamma'$, 
    the LORD procedure with super-uniformity reward, 
    denoted by $\ALORDSURE=\{ \alphaLORDSURE_t, t \geq 1\}$, is defined by the recursion
    \begin{equation}
        \alphaLORDSURE_T = \alphaLORD_T + \sum_{t=1}^{T-1} \gamma'_{T-t} \rho_{t} \quad T\geq 1,
    \label{eqn:LORDSURE}
    \end{equation}
    where $\alphaLORD_T$ is given by \eqref{eqn:LORD} and 
    $\rho_{t} = \alphaLORDSURE_t - F_t(\alphaLORDSURE_t)$ denotes the super-uniformity reward at time $t$.
\end{definition}
Figure~\ref{fig:smoothedcvsmfdr} displays
the critical values of the LORD procedure,
and of those rewarded with the greedy SUR spending sequence $\gamma'=(1,0,\dots)$ 
or rewarded with the rectangular kernel SUR spending sequence \eqref{eqn:LORDSURE} ($h=10$). 
First, the reward given by the $\alpha$-investing, which is possible for mFDR control, 
is visible at each discovery for which all critical value curves 'jump'. 
Second, the effect of the super-uniformity reward is visible between these jumps, 
and the kernel sequence is able to better smooth the critical value sequence.  
As a result, the corresponding procedure is likely to make more discoveries 
(as it is the case on the simulated data presented in Figure~\ref{fig:smoothedcvsmfdr}).
The following result establishes the mFDR control of this new class of rewarded procedures.
\begin{theorem}
    \label{th:LORDSURE}
    Consider the setting of Section~\ref{sec:setting} where a null bounding family $\mathcal{F}=\{F_t,t\geq 1\}$ 
    satisfying \eqref{eqn:superunif} is at hand.
    For any spending sequence $\gamma$ and any SUR spending sequence $\gamma'$, 
    consider the LORD procedure $\ALORD = \{\alphaLORD_t, t \geq 1\}$ \eqref{eqn:LORD}  
    and the LORD procedure with super-uniformity rewards 
    $\ALORDSURE = \{\alphaLORDSURE_t, t \geq 1\}$ \eqref{eqn:LORDSURE}. 
    Then, assuming that the model $\mathcal{P}$ is such that \eqref{indep} holds, 
    we have  $\mFDR(\ALORDSURE,P) \leq \alpha$  for all $P \in \mathcal{P}$ 
    while $\ALORDSURE$ uniformly dominates $\ALORD$.
\end{theorem}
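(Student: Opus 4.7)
The plan is to reduce the mFDR control to a pointwise inequality on the sum of the null-bounding values $\sum_{t=1}^T F_t(\alphaLORDSURE_t)$, in the same spirit as the proof of LORD in \cite{ramdas2017online}. Concretely, I would first establish the following key sufficient condition: under \eqref{eqn:superunif} and \eqref{indep}, if a predictable procedure $\mathcal{A}=\{\alpha_t, t\geq 1\}$ satisfies almost surely $\sum_{t=1}^T F_t(\alpha_t) \leq \alpha(1 \vee R(T))$ for all $T \geq 1$, then $\mFDR(\mathcal{A},P) \leq \alpha$. This follows because $\alpha_t$ is $\mathcal{F}_{t-1}$-measurable, so for each $t \in \cH_0$, $\P(p_t \leq \alpha_t \mid \mathcal{F}_{t-1}) \leq F_t(\alpha_t)$ a.s.\ by the conditional version \eqref{eqn:condsuperunif}; summing and taking expectations gives $\E|\cH_0 \cap \mathcal{R}(T)| \leq \E \sum_{t=1}^T F_t(\alpha_t) \leq \alpha \E(1 \vee R(T))$, which is the definition of mFDR control at level $\alpha$.

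The main step is then to verify this sufficient condition for $\mathcal{A} = \ALORDSURE$. I would use the defining recursion \eqref{eqn:LORDSURE}, which directly yields
\begin{equation*}
    F_t(\alphaLORDSURE_t) = \alphaLORDSURE_t - \rho_t = \alphaLORD_t + \sum_{s=1}^{t-1} \gamma'_{t-s} \rho_s - \rho_t.
\end{equation*}
Summing over $t \leq T$ and switching the order of summation in the double sum,
\begin{equation*}
    \sum_{t=1}^T F_t(\alphaLORDSURE_t) = \sum_{t=1}^T \alphaLORD_t + \sum_{s=1}^{T-1} \rho_s \sum_{j=1}^{T-s} \gamma'_j - \sum_{t=1}^T \rho_t \leq \sum_{t=1}^T \alphaLORD_t - \rho_T,
\end{equation*}
where I used $\sum_{j\geq 1} \gamma'_j \leq 1$ and $\rho_t \geq 0$. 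Since $\rho_T \geq 0$ by \eqref{eqn:superunif}, we end up with $\sum_{t=1}^T F_t(\alphaLORDSURE_t) \leq \sum_{t=1}^T \alphaLORD_t$.

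The remaining ingredient is the already known LORD bound $\sum_{t=1}^T \alphaLORD_t \leq \alpha(1 \vee R(T))$, where $R(T)$ refers to rejections of $\ALORDSURE$ (equivalently, the $\tau_j$ in \eqref{eqn:LORD} are those of $\ALORDSURE$ itself, since $\alphaLORD_T$ appearing in \eqref{eqn:LORDSURE} is parametrized by the discoveries of the procedure being defined). I would quickly recover this bound by case analysis on $R(T)$: when $R(T) = 0$ only the $W_0 \gamma_T$ term contributes and sums to $\leq W_0 \leq \alpha$; when $R(T)=1$ the two terms sum to $\leq W_0 + (\alpha - W_0) = \alpha$; and when $R(T)\geq 2$ the telescoping over the $\gamma_{T-\tau_j}$ gives at most $W_0 + (\alpha-W_0) + \alpha(R(T)-1) = \alpha R(T)$. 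Combining the two displayed inequalities yields the required a.s.\ bound, and thus mFDR control. Uniform domination $\alphaLORDSURE_T \geq \alphaLORD_T$ is immediate from $\rho_t,\gamma'_j \geq 0$.

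The only delicate point will be the bookkeeping that ensures self-consistency of the recursion (the fact that $\tau_j$ in $\alphaLORD_T$ refers to the rejections of $\ALORDSURE$ and is $\mathcal{F}_{T-1}$-measurable) and the clean use of the conditional super-uniformity in the first step, since $\alpha_t$ is random; once that is settled, the computation above is essentially the whole proof. As in the FWER case of Section~\ref{sec:genresultFWER}, this argument in fact generalizes to any predictable base procedure satisfying $\sum_{t=1}^T \alpha^0_t \leq \alpha(1 \vee R(T))$ a.s., foreshadowing the general mFDR result of Section~\ref{sec:genresultmFDR}.
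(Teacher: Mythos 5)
Your proposal is correct, and its overall strategy coincides with the paper's: reduce mFDR control to the almost-sure budget condition (the $\lambda=0$ case of \eqref{conditionproofmFDR}, which the paper obtains from Lemma~\ref{lemmacontroladapt} using exactly the conditional super-uniformity argument you sketch), then verify that the rewarded critical values respect the budget, and finally invoke the LORD accounting $\sum_{t=1}^T \alphaLORD_t \leq \alpha(1\vee R(T))$, which is precisely the computation \eqref{equreasoningmFDR} with the $\tau_j$ of the rewarded procedure itself. The one genuine difference is in the middle step: the paper deduces Theorem~\ref{th:LORDSURE} from the general Theorem~\ref{th:genmFDR}, whose budget verification goes through the representation Lemma~\ref{lem:gencomput} (a recursion identity designed to also absorb the adaptive $\varepsilon$-reward), whereas you verify the budget directly by writing $F_t(\alphaLORDSURE_t)=\alphaLORDSURE_t-\rho_t$, exchanging the order of summation, and using $\sum_j \gamma'_j\leq 1$ together with $\rho_t\geq 0$. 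Your route is more elementary and transparent for this theorem, at the cost of not immediately covering the adaptive case of Theorem~\ref{th:ALORDSURE}; the paper's detour through Lemma~\ref{lem:gencomput} is what buys that uniform treatment. Two harmless looseness points in your write-up: your sufficient condition uses $F_T(\alpha_T)$ where the paper keeps $\alpha_T$ (yours is weaker but still sufficient), and in the LORD bound you count rejections up to $T$ rather than $T-1$ when tallying the $j\geq 2$ terms, which only relaxes the bound in the right direction.
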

This theorem is proved in Section~\ref{proof:ALORD}, 
as a corollary of a more general result (Theorem~\ref{th:genmFDR} below). 
As shown in the numerical experiments (Section~\ref{sec:numerical_results}), 
the improvement of  $\ALORDSURE$  with respect to  $\ALORD$ can be substantial.

\begin{remark}
$\ALORDSURE$ can be also expressed by using the paradigm of generalized $\alpha$ investing (GAI) rules, as introduced in \cite{FosterStineAlphainvest,AharoniRossetGAI,ramdas2017online}, see Section~\ref{sec:ALORSisGAI}.
\end{remark}

\subsection{Rewarded Adaptive LORD} \label{sec:adaptmFDR}
In this section, we apply the re-indexation trick 
of the $\gamma$ sequence presented in Section~\ref{sec:adaptFWER} 
to improve the performance of the procedures $\ALORD$ and $\ALORDSURE$.
For this, we follow essentially the reasoning used 
by \cite{ramdas2019saffron} for deriving the SAFFRON procedure, 
with a slight modification, as explained below. 
To start, let us define, for some parameter $\lambda \in [0,1)$, 
\begin{align} \label{Tronde_def_gen}
    \mathcal{T}_j(T) = \left\{\begin{array}{ll} 1 + 
    \sum_{t=\tau_{j}+2}^{T} \ind{p_{t-1} \geq \lambda} 
    & \mbox{ if $T \geq \tau_{j}+1$ }\\0&\mbox{ if $T\leq \tau_j$} \end{array}\right. , \quad j\geq 1,
\end{align}
with $\mathcal{T}_0(T) = \mathcal{T}(T)$ given by \eqref{Tronde_def} by convention. 
From an intuitive point of view, $\mathcal{T}_j(T)$ is like a  'stopwatch' starting after $\tau_j$ 
and suspended at each time $t$ for which $p_{t-1} < \lambda$. 
Hence, having $p_t < \lambda$ allows to delay the natural dissipation of $\alpha$-wealth due to online testing.
Then, the SAFFRON procedure \citep{ramdas2019saffron} is defined by the threshold
\begin{align}
    \label{eqn:SAFFRON}
    \alpha_T = 
    \min \left(\lambda, (1-\lambda) \left(W_0 \gamma_{\mathcal{T}_0(T)} 
    + (\alpha-W_0) \gamma_{\mathcal{T}_1(T)} 
    + \alpha\sum_{j\geq 2} \gamma_{\mathcal{T}_j(T)}\right) \right).
\end{align}
This procedure controls the mFDR under \eqref{eqn:superunif0} and \eqref{indep} 
as proved by \cite{ramdas2019saffron}. 
However, examining the proof in \cite{ramdas2019saffron},
it turns out that the capping with $\lambda$ is not necessary. 
The capping prevents the critical values from exceeding $\lambda$, 
thus avoiding to get $p_{t} \geq \lambda$ when $p_t \leq \alpha_t$. 
However, to our knowledge, the latter does not play any role in the mFDR control,
and we work with the (uniformly dominating) procedure
\begin{equation}
    \label{eqn:ALORD}
    \alphaALORD_T
    = (1-\lambda) \left(W_0 \gamma_{\mathcal{T}_0(T)} 
    + (\alpha-W_0) \gamma_{\mathcal{T}_1(T)} 
    + \alpha\sum_{j\geq 2} \gamma_{\mathcal{T}_j(T)}\right) .
\end{equation}
With the capping \eqref{eqn:SAFFRON}, an mFDR control is provided in Theorem~1 in \cite{ramdas2019saffron}. 
For our version \eqref{eqn:ALORD}, the mFDR control follows as a special case of Theorem~\ref{th:ALORDSURE} 
below with $F_t(u)=u$ for all $t,u$.
Also note that $\AALORD$ reduces to $\ALORD$ \eqref{eqn:LORD} when $\lambda = 0$, 
because $\mathcal{T}_j(T)=0 \vee (T-\tau_j)$ in that case. 
Now, we generalize this method to our present framework.
\begin{definition}
    For a spending sequences $\gamma$, a SUR spending sequence $\gamma'$ and $\lambda\in [0,1)$, 
    the adaptive LORD procedure with super-uniformity reward 
    denoted by $\AALORDSURE = \{ \alphaALORDSURE_t, t \geq 1\}$, is defined by 
    \begin{equation}
        \label{eqn:ALORDSURE}
        \alphaALORDSURE_T = 
        \alphaALORD_T + 
        \sum_{1\leq t \leq T-1 \atop p_t \geq \lambda} \gamma'_{T-t} \rho_t 
        + \varepsilon_{T-1}, \quad T\geq 1,
    \end{equation}
    where $\alphaALORD_T$ is defined by \eqref{eqn:ALORD}, 
    $\rho_{t}=\alphaALORDSURE_t - F_t(\alphaALORDSURE_t)$ denotes the super-uniformity reward a time $t$ and 
    $ \varepsilon_{T-1} = \ind{ p_{T-1} < \lambda }(\alphaALORDSURE_{T-1} - \alphaALORD_{T-1})$ 
    is an additional 'adaptive' reward at time $T-1$ (convention $\varepsilon_0=0$).
\end{definition}
Note that  $\AALORDSURE$ reduces to $\ALORDSURE$ \eqref{eqn:LORDSURE} when $\lambda=0$, 
and to $\AALORD$ when $F_t(u)=u$ for all $u,t$. 
The following result shows that this class of procedures controls the mFDR.
\begin{theorem}
    \label{th:ALORDSURE}
    Consider the setting of Section~\ref{sec:setting} 
    where a null bounding family $\mathcal{F}=\{F_t,t\geq 1\}$ satisfying \eqref{eqn:superunif} is at hand.
    For any spending sequence $\gamma$ and any SUR spending sequence $\gamma'$, 
    consider the adaptive LORD procedure $\AALORD = \{\alphaALORD_t, t \geq 1\}$ \eqref{eqn:ALORD},
    and the adaptive LORD procedure with super-uniformity rewards 
    $\AALORDSURE = \{\alphaALORDSURE_t, t \geq 1\}$ \eqref{eqn:ALORDSURE}. 
    Then, assuming that the model $\mathcal{P}$ is such that \eqref{indep} holds, 
    we have $\mFDR(\AALORDSURE,P) \leq \alpha$  for all $P \in \mathcal{P}$ 
    while $\AALORDSURE$ uniformly dominates $\AALORD$ 
    and thus also the SAFFRON procedure of \cite{ramdas2019saffron}.
\end{theorem}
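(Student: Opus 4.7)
The plan is to deduce Theorem~\ref{th:ALORDSURE} from the general mFDR-rewarding result (Theorem~\ref{th:genmFDR} in Section~\ref{sec:genresultmFDR}), which is the mFDR analogue of Theorem~\ref{th:genFWER}. That general theorem should assert: for any base procedure $\mathcal{A}^0=(\alpha^0_t)$ satisfying, almost surely for all $T\geq 1$,
\begin{equation*}
    \alpha^0_T + \sum_{1 \leq t \leq T-1,\, p_t \geq \lambda} \alpha^0_t \leq (1-\lambda)\alpha(1 \vee R(T)),
\end{equation*}
the rewarded procedure defined by the recursion \eqref{eqn:generalreward} controls the mFDR at level $\alpha$ under \eqref{eqn:superunif} and \eqref{indep}. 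Theorem~\ref{th:ALORDSURE} then follows by taking $\mathcal{A}^0=\AALORD$, since \eqref{eqn:ALORDSURE} is precisely \eqref{eqn:generalreward} with $\alpha^0_t=\alphaALORD_t$.

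Step 1 is to verify the base condition for $\AALORD$. The combinatorial fact I would use is that, for each $j\geq 0$, the ``stopwatch'' $\mathcal{T}_j$ from~\eqref{Tronde_def_gen} takes distinct positive integer values as $t$ ranges over $\{t: t > \tau_j,\ p_{t-1} \geq \lambda\}$, so that
\begin{equation*}
    \sum_{\tau_j < t \leq T,\, p_{t-1} \geq \lambda} \gamma_{\mathcal{T}_j(t)} \leq \sum_{k \geq 1} \gamma_k \leq 1.
\end{equation*}
Summing the three contributions appearing in~\eqref{eqn:ALORD} weighted by $W_0$, $\alpha-W_0$ and $\alpha$, and noting that only rejections $\tau_j$ with $j \leq R(T)$ contribute to $\sum_{t\leq T}\alphaALORD_t$, one obtains the desired $(1-\lambda)\alpha(1 \vee R(T))$ bound. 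This essentially revisits the wealth argument underlying SAFFRON's mFDR control \cite{ramdas2019saffron}.

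Step 2 is the general theorem. Set $\alpha_t=\alpha^0_t+\beta_t$ where $\beta_t$ is the reward added at time $t$ in \eqref{eqn:generalreward}. A straightforward induction shows $\beta_t\geq 0$ (using \eqref{eqn:superunif} and that the adaptive rollover only feeds on a nonnegative difference), which immediately yields uniform dominance of $\AALORDSURE$ over $\AALORD$. The mFDR control amounts to proving
\begin{equation*}
    \sum_{t=1}^T F_t(\alpha_t) = \sum_{t=1}^T (\alpha_t - \rho_t) \leq \alpha(1 \vee R(T)) \quad \textrm{a.s.}
\end{equation*}
and then invoking the mFDR-control principle of \cite{ramdas2017online} in its super-uniform form (under \eqref{eqn:superunif} and \eqref{indep}), which converts this deterministic budget inequality into a bound on $\mFDR(\AALORDSURE,P)$.

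The main obstacle is precisely the algebra in this last inequality, because two reward mechanisms interact in \eqref{eqn:generalreward}: the super-uniformity rewards $\rho_t$ are redistributed through $\gamma'$ only at times with $p_t\geq\lambda$, while the adaptive term $\varepsilon$ carries the previously accumulated wealth $\alpha_t-\alpha^0_t$ through times where $p_{t-1}<\lambda$. The strategy is to swap the order of summation to bound the cumulative $\gamma'$-spread by $\sum_{t \leq T-1,\, p_t \geq \lambda} \rho_t$ using $\sum_k\gamma'_k\leq 1$, and to absorb the $\varepsilon$-rollover terms into a telescoping along maximal runs of indices with $p_t<\lambda$, so that what remains on the left is $\alpha^0_T + \sum_{t\leq T-1,\, p_t\geq\lambda}\alpha^0_t$, to which the base condition from Step~1 applies. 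This mirrors the FWER computation in Theorem~\ref{th:genFWER}, with the sole modification that the right-hand side is now $\alpha(1 \vee R(T))$ rather than $\alpha$.
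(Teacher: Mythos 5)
Your overall architecture is the paper's: deduce the theorem from the general rewarded-mFDR result (Theorem~\ref{th:genmFDR}) with $\mathcal{A}^0=\AALORD$, check the base budget condition \eqref{conditionalpha0mFDR} via the distinctness of the stopwatch values $\mathcal{T}_j(t)$ (this is essentially Lemma~\ref{lem:Tronde}, used as an inequality rather than an exact reindexing identity, which suffices), and get uniform dominance by induction on the nonnegativity of the added reward. The genuine gap is in Step 2, in what you declare "mFDR control amounts to". The target inequality $\sum_{t=1}^T F_t(\alpha_t)\leq \alpha(1\vee R(T))$ a.s.\ is false for adaptive procedures with $\lambda>0$, even with no reward at all: take $F_t(u)=u$ and a realization in which every $p_t$ lies in $(\alpha_t,\lambda)$ (possible for the usual parameter choices, e.g.\ $(1-\lambda)W_0\gamma_1<\lambda$), so that no rejection ever occurs and every stopwatch is frozen at $\mathcal{T}_0(t)=1$; then $\alphaALORDSURE_t=\alphaALORD_t=(1-\lambda)W_0\gamma_1$ for all $t$, the left-hand side grows linearly in $T$, while the right-hand side stays equal to $\alpha$. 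The whole point of adaptivity is that the deterministic budget holds only over $\{t:\,p_t\geq\lambda\}$ and carries a factor $(1-\lambda)$: the correct sufficient condition is $\alpha_T+\sum_{1\leq t\leq T-1,\,p_t\geq\lambda}F_t(\alpha_t)\leq(1-\lambda)\alpha(1\vee R(T))$, i.e.\ \eqref{conditionproofmFDR}, and your swap-of-summation/telescoping algebra (which does mirror the FWER case and is exactly the content of Lemma~\ref{lem:gencomput}) reduces this left-hand side --- not $\sum_{t=1}^T F_t(\alpha_t)$ --- to $\alpha^0_T+\sum_{1\leq t\leq T-1,\,p_t\geq\lambda}\alpha^0_t$, to which Step~1 applies.

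Relatedly, passing from that $\lambda$-restricted budget to mFDR control is not delivered by ``the mFDR-control principle of \cite{ramdas2017online} in its super-uniform form'': that principle converts $\sum_t F_t(\alpha_t)\leq\alpha(1\vee R(T))$ into mFDR control, but here one needs the additional adaptive (SAFFRON-type) estimate, namely that for each null $t$, $\E\big(\ind{t\in\cH_0,\,p_t\leq\alpha_t}\big)\leq \E\big(F_t(\alpha_t)\big)\leq(1-\lambda)^{-1}\,\E\big(\ind{p_t\geq\lambda}F_t(\alpha_t)\big)$, which uses predictability of $\alpha_t$, the online independence \eqref{indep}, and $\P(p_t<\lambda\mid\alpha_t)\leq F_t(\lambda)\leq\lambda$ from \eqref{eqn:superunif}. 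This is the paper's Lemma~\ref{lemmacontroladapt}, and it is the missing probabilistic link between your deterministic computation and the bound on the expected number of null rejections. With the corrected budget condition and that lemma, your argument closes and coincides with the paper's proof; as written, the chain does not close because the stated intermediate inequality is unattainable for $\lambda>0$.
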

Theorem~\ref{th:ALORDSURE} follows from Theorem~\ref{th:genmFDR} below.
Let us underline that $\AALORDSURE$ both incorporates $\alpha$-investing and super-uniformity reward.
Thus, it is expected to be the most powerful among the procedures considered in the present paper.
This is supported both by the numerical experiments of Section~\ref{sec:numerical_results} and the real data analysis in Section~\ref{sec:real_data_appli}.

\begin{remark}\label{rem:exceed1}
Note that the critical values of ALORD and $\rho$-ALORD can exceed $1$ (e.g., when all $p$-values are zero). Since the rejection decision is the same for a critical value larger than $1$ or equal to $1$, this {may} appear at first sight as wasted wealth. While this is indeed the case for ALORD, we emphasize that this is not the case for $\rho$-ALORD, because 
the super-uniformity reward allows to reuse the exceeding amount of wealth engaged in $\alphaALORDSURE_t$; namely $\rho_t =  \alphaALORDSURE_t -1$ when $\alphaALORDSURE_t\geq 1$.
\end{remark}

\subsection{Rewarded version for base mFDR controlling procedures} \label{sec:genresultmFDR}
The following result establishes that any base online mFDR controlling procedure 
(of a specific type) can be rewarded with super-uniformity.
\begin{theorem} \label{th:genmFDR}
    Assuming that both \eqref{eqn:superunif} and  \eqref{indep}  hold, 
    consider any procedure $\mathcal{A}^0=(\alpha^0_t,t\geq 1)$ 
    satisfying almost surely, for some $\lambda \in [ 0,1)$ 
    and for all $T \geq 1$, 
    \begin{align}
        \alpha^0_T + \sum_{1\leq t\leq T-1, \atop p_t\geq\lambda} \alpha^0_t \leq (1-\lambda) \alpha\:( 1\vee R(T)),
        \label{conditionalpha0mFDR}
    \end{align}
    where $R(T)$ denotes the number of rejections up to time $T$ for this procedure, see \eqref{eqn:RT}.
    Then the following holds
    \begin{itemize}
        \item[(i)] $\mathcal{A}^0$ controls the online mFDR, 
        that is, $\mFDR(\mathcal{A}^0,P) \leq \alpha$  for all $P \in \mathcal{P}$;
        \item[(ii)] for any SUR spending sequence $\gamma'=(\gamma'_t,t\geq 1)$, 
        the procedure $\mathcal{A}=(\alpha_t, t \geq 1)$, corresponding to the rewarded $\mathcal{A}^0$, 
        and defined by \eqref{eqn:generalreward}, 
        controls the online mFDR, that is, $\mFDR(\mathcal{A},P) \leq \alpha$  for all $P \in \mathcal{P}$.
    \end{itemize}
\end{theorem}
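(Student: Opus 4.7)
The plan is to reduce both statements to the same pathwise inequality on a ``spending'' functional, and then verify it from the hypothesis \eqref{conditionalpha0mFDR} in the two cases. Under \eqref{indep} and \eqref{eqn:superunif}, for any $\mathcal{F}_{t-1}$-measurable threshold $\beta_t$ and any null index $t$, conditioning on $\mathcal{F}_{t-1}$ yields $\P_{X \sim P}(p_t \leq \beta_t) \leq \E_{X\sim P}[F_t(\beta_t)]$, and the independence of $\beta_t$ and $p_t$, combined with $\P_{X \sim P}(p_t \geq \lambda) \geq 1 - F_t(\lambda) \geq 1-\lambda$, gives $\E_{X\sim P}[F_t(\beta_t)] \leq (1-\lambda)^{-1} \E_{X\sim P}[F_t(\beta_t) \ind{p_t \geq \lambda}]$. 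Writing $\mathcal{R}^\beta(T) = \{t \leq T : p_t \leq \beta_t\}$ and summing over $t \in \cH_0 \cap [T]$,
\begin{equation*}
    \E_{X \sim P}[|\cH_0 \cap \mathcal{R}^\beta(T)|] \leq (1-\lambda)^{-1} \E_{X \sim P}\Bigg[\sum_{t=1}^T F_t(\beta_t) \ind{p_t \geq \lambda}\Bigg],
\end{equation*}
so mFDR control at level $\alpha$ follows as soon as one proves the almost sure inequality $\sum_{t=1}^T F_t(\beta_t) \ind{p_t \geq \lambda} \leq (1-\lambda)\alpha\,(1 \vee R(T))$, where $R(T)$ is the number of rejections of $\mathcal{A}^0$, as in \eqref{eqn:RT}.

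For part (i), I would apply the reduction to $\beta_t = \alpha^0_t$: since $F_T(\alpha^0_T) \ind{p_T \geq \lambda} \leq \alpha^0_T$ and $F_t(u) \leq u$, the spending sum is at most $\alpha^0_T + \sum_{1 \leq t \leq T-1,\, p_t \geq \lambda} \alpha^0_t$, and \eqref{conditionalpha0mFDR} closes the argument. For part (ii), I would apply the reduction to $\beta_t = \alpha_t$; because $\alpha_t \geq \alpha^0_t$ the rewarded procedure $\mathcal{A}$ has at least as many rejections as $\mathcal{A}^0$, so $1 \vee R(T) \leq 1 \vee |\mathcal{R}^\alpha(T)|$ and it is enough to bound the spending sum by $(1-\lambda)\alpha\,(1\vee R(T))$. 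Using $F_T(\alpha_T)\ind{p_T \geq \lambda} \leq \alpha_T$, this reduces to the ``accounting lemma''
\begin{equation*}
\alpha_T + \sum_{1 \leq t \leq T-1,\, p_t \geq \lambda} F_t(\alpha_t) \leq \alpha^0_T + \sum_{1 \leq t \leq T-1,\, p_t \geq \lambda} \alpha^0_t,
\end{equation*}
which, after setting $D_t := \alpha_t - \alpha^0_t \geq 0$ and $\rho_t := \alpha_t - F_t(\alpha_t) \geq 0$, is equivalent to $D_T + \sum_{1 \leq t \leq T-1,\, p_t \geq \lambda} D_t \leq \sum_{1 \leq t \leq T-1,\, p_t \geq \lambda} \rho_t$.

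The main obstacle is controlling the combined effect of the smoothed SURE rewards $\gamma'_{T-t}\rho_t$ and the adaptive carry-over $\varepsilon_{T-1}$ that both appear in the recursion \eqref{eqn:generalreward}. The plan is to unroll this recursion to obtain the explicit representation $D_t = \sum_{s=1}^{t-1}\ind{p_s \geq \lambda}\rho_s B_{t,s}$ with coefficients $B_{t,s} = \sum_{r=s+1}^{t} \gamma'_{r-s}\prod_{u=r}^{t-1}\ind{p_u < \lambda}$, and then exploit the elementary telescoping identity $\sum_{t=r}^{M}\ind{p_t \geq \lambda}\prod_{u=r}^{t-1}\ind{p_u < \lambda} = 1 - \prod_{u=r}^{M}\ind{p_u < \lambda}$. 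After swapping the order of summation between $t$ and $s$, the indicator products appearing in $B_{T,s}$ cancel exactly against those in $\sum_{t=1}^{T-1}\ind{p_t \geq \lambda} B_{t,s}$, leaving the clean expression $D_T + \sum_{t=1}^{T-1}\ind{p_t \geq \lambda} D_t = \sum_{s=1}^{T-1}\ind{p_s \geq \lambda}\rho_s \sum_{j=1}^{T-s}\gamma'_j$, which is at most $\sum_{s=1}^{T-1}\ind{p_s \geq \lambda}\rho_s$ because $\sum_{j \geq 1}\gamma'_j \leq 1$. This proves the accounting lemma and, combined with \eqref{conditionalpha0mFDR}, establishes part (ii).
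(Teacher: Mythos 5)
Your proof is correct, and its overall skeleton coincides with the paper's: you first establish the same reduction as Lemma~\ref{lemmacontroladapt} (bounding the expected number of null rejections by $(1-\lambda)^{-1}\E[\sum_t \ind{p_t\geq\lambda}F_t(\alpha_t)]$ using predictability, \eqref{indep} and $\P(p_t<\lambda)\leq\lambda$), arrive at the same almost-sure sufficient condition \eqref{conditionproofmFDR}, and then verify it for the rewarded critical values via an algebraic identity linking $\alpha_t$ to $\alpha^0_t$. Where you genuinely differ is in how that identity is proved: the paper's key tool, Lemma~\ref{lem:gencomput}, shows that the recursion \eqref{eqn:generalreward} is equivalent to an implicit recursion involving $a_{T-t}=\sum_{j\leq T-t}\gamma'_j$ by computing the difference $\tilde\alpha_T-\tilde\alpha_{T-1}$, whereas you unroll \eqref{eqn:generalreward} into the closed form $D_t=\sum_{s<t}\ind{p_s\geq\lambda}\rho_s B_{t,s}$ with $B_{t,s}=\sum_{r=s+1}^{t}\gamma'_{r-s}\prod_{u=r}^{t-1}\ind{p_u<\lambda}$, and obtain by interchange of summation and the telescoping identity the exact relation $D_T+\sum_{t<T,\,p_t\geq\lambda}D_t=\sum_{s<T,\,p_s\geq\lambda}a_{T-s}\rho_s$; I checked this computation and it is correct, and it is in fact exactly the content of Lemma~\ref{lem:gencomput} in rearranged form. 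Your version has the merit of making transparent where the normalization $\sum_{j\geq1}\gamma'_j\leq1$ enters, and you are also more explicit than the paper about the need for $\alpha_t\geq\alpha^0_t$ (hence $R(T)\leq|\mathcal{R}^{\alpha}(T)|$) when passing from \eqref{conditionalpha0mFDR}, stated with the rejections of $\mathcal{A}^0$, to the mFDR of $\mathcal{A}$. The only point to tighten is that the nonnegativity claims $D_t\geq0$ and $\rho_t\geq0$, which you use both for the rejection-count comparison and for the final bound $a_{T-s}\rho_s\leq\rho_s$, require a short joint induction ($\alpha_1=\alpha^0_1\geq0$, then $\rho_s\geq0$ by \eqref{eqn:superunif} gives $D_{t}\geq0$ through your representation, hence $\alpha_t\geq\alpha^0_t\geq0$ at the next step), exactly as in the last part of Lemma~\ref{lem:gencomput}; this is a one-line addition, not a gap in the argument.
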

Theorem~\ref{th:genmFDR} is proved  in Section~\ref{sec:proofmFDR}. 
Condition \eqref{conditionalpha0mFDR} is essentially the same as the condition found in Theorem~1 of \cite{ramdas2019saffron}. 
Our main contribution is thus in statement (ii), showing that the super-uniformity reward can be used with 
any base procedure $\mathcal{A}^0$ satisfying \eqref{conditionalpha0mFDR}. 
Since the latter condition holds for the LORD procedure $\mathcal{A}^0=\ALORD$,
 and the adaptive LORD procedure $\mathcal{A}^0=\AALORD$ 
(see Lemma~\ref{lem:Tronde}), Theorem~\ref{th:genmFDR} entails Theorem~\ref{th:LORDSURE} and Theorem~\ref{th:ALORDSURE}, respectively. 
Finally, let us emphasize the similarity between Theorem~\ref{th:genFWER} (FWER) and Theorem~\ref{th:genmFDR} (mFDR). 
Strikingly, the reward takes exactly the same form \eqref{eqn:generalreward}, 
which makes the range of improvement comparable for these two criteria.

\section{SUR procedures for discrete tests} \label{sec:discrete}

In this section, we study the performances of our newly derived SUR procedures in discrete online multiple testing problems for simulated and real data. 
We defer some of the numerical results to Appendix~\ref{sec:addnum}.

\subsection{Considered procedures}\label{sec:discreteproc}

The considered procedures are
the base (non-rewarded) procedures 
$\Abonf$ \eqref{eqn:OB}, $\AAbonf$ \eqref{eqn:AOB}, 
$\ALORD$ \eqref{eqn:LORD}, and $\AALORD$ \eqref{eqn:ALORD}, 
and their rewarded counterparts 
$\AOBSURE$ \eqref{eqn:AOBSURE}, $\AAOBSURE$ \eqref{eqn:AOBSURE},
$\ALORDSURE$ \eqref{eqn:ALORDSURE}, and $\AALORDSURE$ \eqref{eqn:ALORDSURE}, respectively.
As mentioned in Section \ref{sec:previous_work}, 
we also consider the ADDIS-spending and ADDIS procedures 
(see \citealp{tian_onlinefwer_2020, tian_onlinefdr_2019}) although 
the type I error rate control is not guaranteed for these two procedures,
in our (discrete) setting.
The parameters of the OMT procedures are set to
$\alpha = 0.2$, $W_0 = \alpha/2$ and $\lambda = 0.5$.  
For ADDIS and ADDIS-spending, we use the default values 
$W_0 = \frac{\alpha  \lambda  \tau }{2}$,  with 
$\lambda = 0.25$ and $\tau = 0.5$ 
(the latter being the discarding parameter, see \citealp{tian_onlinefwer_2020, tian_onlinefdr_2019}). 
Following \cite{tian_onlinefdr_2019}, we set $\gamma_t \propto t^{-1.6}$ 
with a normalizing constant chosen such that $\sum_{t=1}^{+\infty} \gamma_t = 1$.
For the SUR spending sequence $(\gamma'_t)_{t\geq 1}$ 
we use a rectangular kernel with bandwidth $h$, 
as defined by \eqref{eqn:kernel}, with $h=100$ for FWER and $h=10$ for mFDR.
We discuss different choices for tuning parameters in the SUR procedures 
(adaptivity parameter $\lambda$ and the rectangular kernel bandwidth $h$) in Appendices~\ref{apenadapt} and \ref{apenband}.

\subsection{Application to simulated data}\label{sec:numerical_results}

\subsubsection{Simulation setting}
We simulate $m$ experiments in which the goal is to detect differences between two groups 
by counting the number of successes/failures in each group.
More specifically, we follow \cite{Gilbert2005}, \cite{HellerGur2011} and \cite{DDR2018} 
by simulating a two-sample problem in which a vector of $m$ independent binary responses  is observed for $N$ subjects in both groups. 
The goal is to  test the $m$ null hypotheses $H_{0i}$: '$p_{1i} = p_{2i}$', $i = 1,...,m$ in an online fashion,
where $p_{1i}$ and $p_{2i}$ are the success probabilities for the $i^{th}$ binary response in group A and B respectively. 
Thus, for each hypothesis $i$, the data can be summarized by a $2 \times 2$ contingency table,
and we use (two-sided) Fisher's exact test for testing $H_{0i}$. 
The  $m$ hypotheses are split in three groups of size 
$m_1$, $m_2$, and $m_3$ such that $m = m_1 + m_2 + m_3$.
Then, the binary responses are generated as i.i.d Bernoulli of probability 0.01 ($\mathcal{B}(0.01)$) at $m_1$ positions for both groups,
i.i.d $\mathcal{B}(0.10$) at $m_2$ positions for both groups, 
and i.i.d $\mathcal{B}(0.10)$ at $m_3$ positions for one group
and i.i.d $\mathcal{B}(p_3)$ at $m_3$ positions for the other group.
Thus, the null hypotheses are true for $m_1 + m_2$ positions (set $\cH_0$),
while the null hypotheses are false for $m_3$ positions (set $\cH_1$). 
Therefore, we interpret $p_3$ as the strength of the signal while $\pi_{A} = \frac{m_3}{m}$, 
corresponds to the proportion of signal. 
Also, $m_1$ and $m_2$ are both taken equal to $\frac{m - m_3}{2}$.
In these experiments, we fix $m = 500$, and vary each one of the parameters 
$\cH_1$ (Section~\ref{sec:varyH1}), $\pi_A$ (Section~\ref{sec:varypiA}), 
$N$ (Section~\ref{sec:varyN}), $p_3$ (Section~\ref{sec:varyp3}) while keeping the others fixed.  
The default values are $\pi_A = 0.3$, $N = 25$, $p_3 = 0.4$ and $\cH_1 \subset \{1,\dots,m\}$ chosen randomly for each simulation run.
We estimate the different criteria (FWER \eqref{eq:DefFWER}, mFDR \eqref{eq:DefmFDRT}, power \eqref{eq:Defpower}) 
using empirical mean over 10 000 independent simulation trials. 

\subsubsection{Position of signal}\label{sec:varyH1}
 
We start by studying how the position of the signal can affect the performances of the procedures (it is well-known to be critical, see \citealp*{FosterStineAlphainvest, ramdas2017online}).
We investigate different positioning schemes in which the signal can be clustered at the beginning of the stream,
or at the end, or clustered between the two, as described in the caption of Figure \ref{fig:varypos}.
\begin{figure}[h!]
    \centering
    \begin{tabular}{cc}
    \makebox{\includegraphics[width=.49\linewidth]{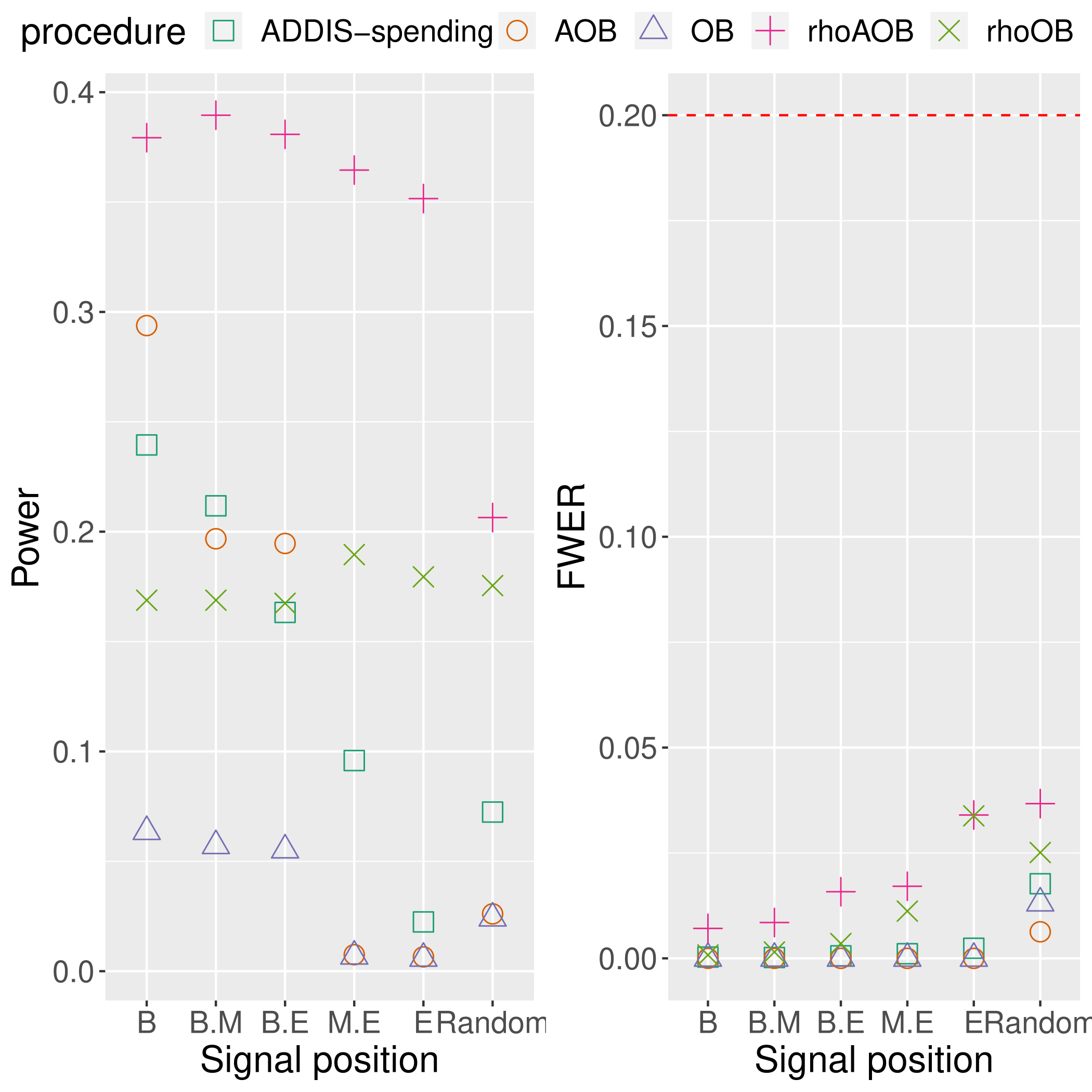}} & 
    \makebox{\includegraphics[width=.49\linewidth]{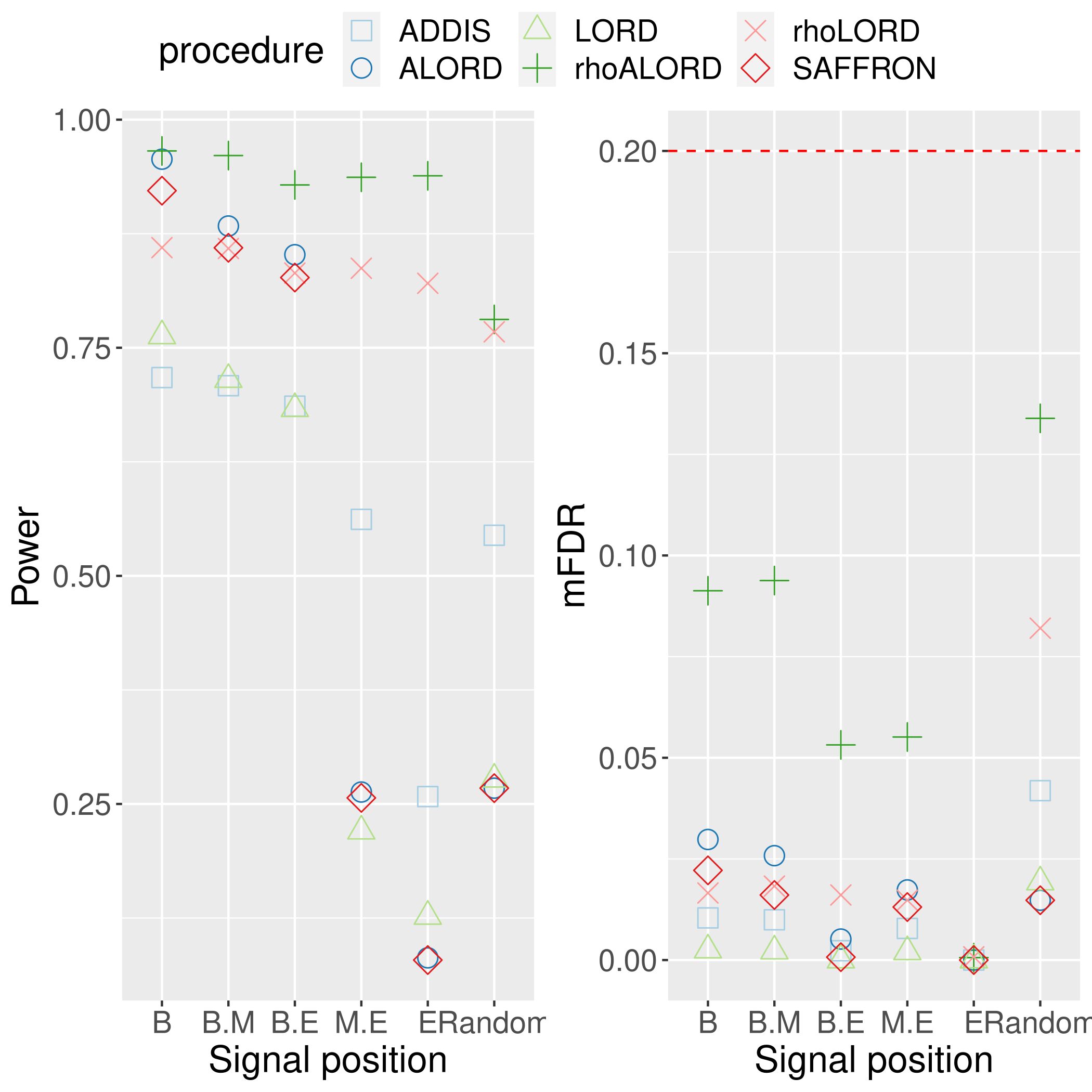}}  
    \end{tabular}
        \vspace{-0.5cm}
    \caption{Power and type I error rates of the different considered OMT procedures versus positions of the signal:
            at the beginning (B), the end (E),
            half at the beginning and half in the middle of the stream (BM), 
            half at the beginning and half at the end of the stream (BE), 
            half in the middle and half at the end of the stream (ME),
            and taken uniformly at random (Random).}
    \label{fig:varypos}
\end{figure}
Consistently with our theoretical results, Figure \ref{fig:varypos} shows that all procedures 
control the type I error rate at level $\alpha = 0.2$. 
In terms of power, we can see that the rewarded procedures have greater power than the associated base procedures.
More specifically, $\AALORDSURE$ uniformly dominates the other procedures for mFDR control and $\AAOBSURE$ for FWER control. 
The gain in power is most noticeable when the signal is not localized at the beginning of the stream 
(i.e. positions ME, E, and Random) for which the online testing problem is more difficult. 
These first results indicate that the rewarded procedures may protect against '$\alpha$-death'.

\subsubsection{Proportion of signal}\label{sec:varypiA}
Figure \ref{fig:varypiA} displays the results for $\pi_A$ varying in $\{ 0.1, \dots ,1\}$.
It shows that the aforementioned superiority of the rewarded procedures holds in this whole range.
Also note that the SUR reward can affect the monotonicity of the power curves: while most curves are increasing 
with $\pi_A$, the power of the rewarded procedure $\AOBSURE$ decreases. 
An explanation could be that when $\pi_A$ increases, the marginal counts increase, and thus the degree of 
discreteness decreases providing a smaller super-uniformity reward. 
However, using adaptivity seems to compensate for this effect, thus providing better results. \\
\begin{figure}[h!]
    \centering
    \begin{tabular}{cc}
    \makebox{\includegraphics[width=.49\linewidth]{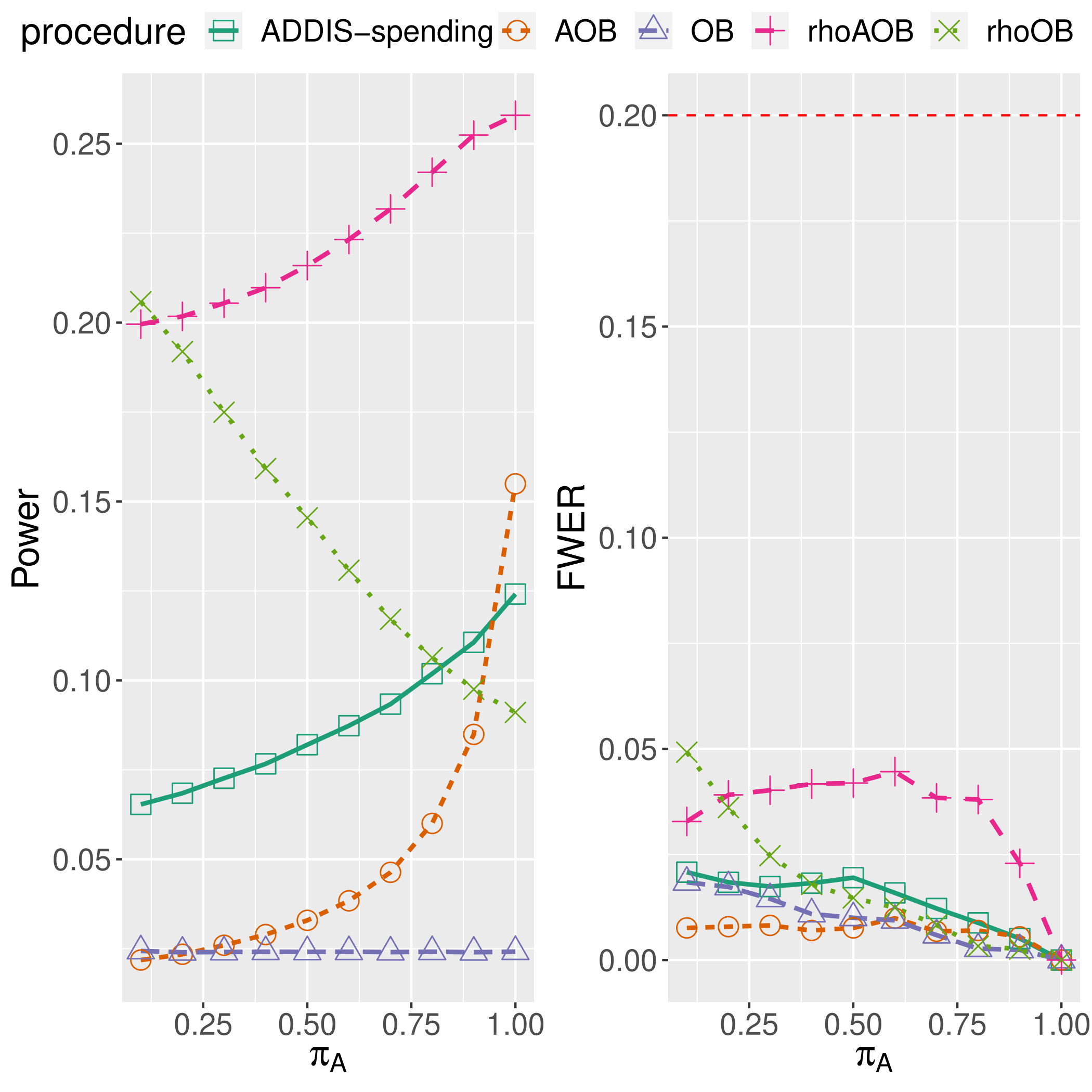}} & 
    \makebox{\includegraphics[width=.49\linewidth]{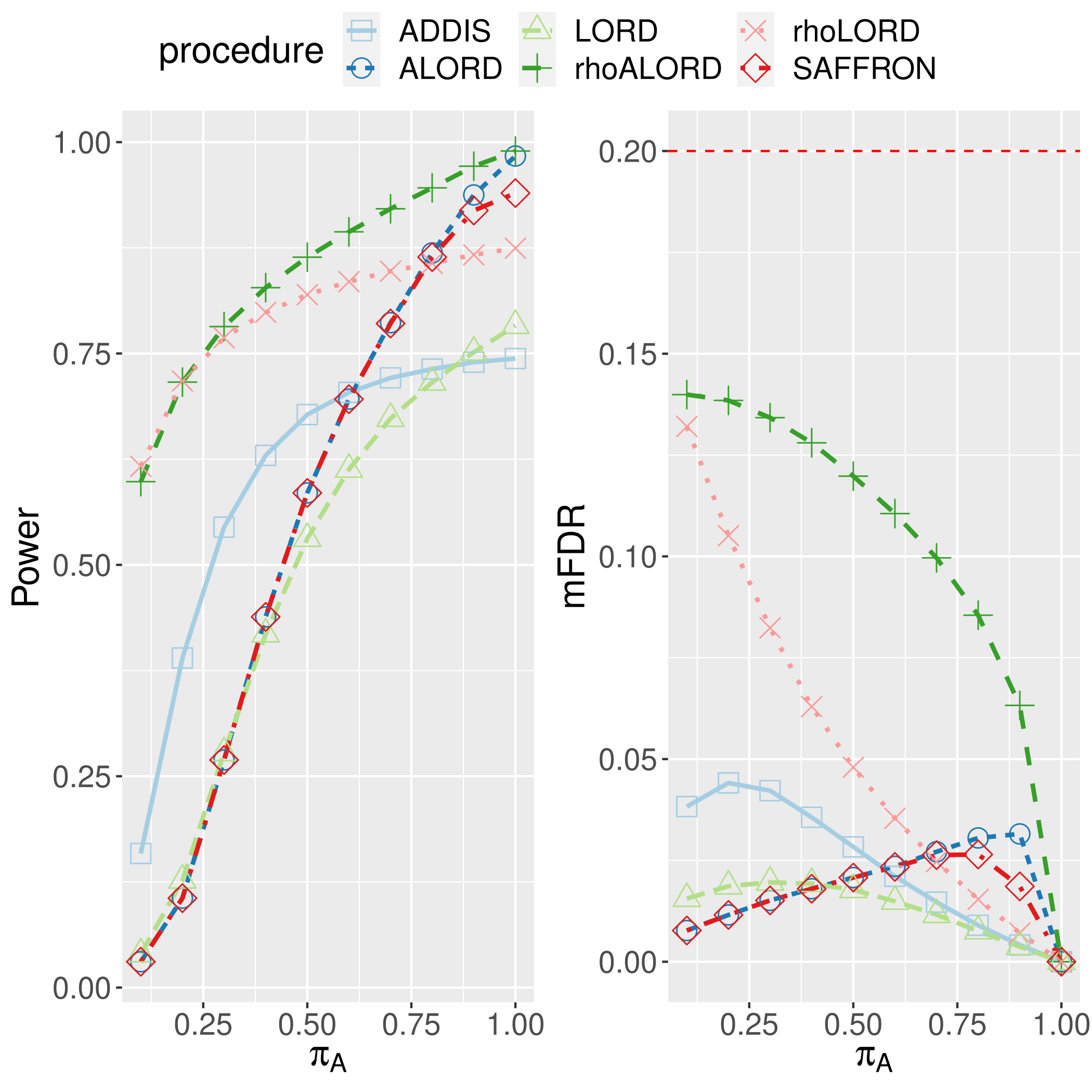}}
    \end{tabular}
        \vspace{-0.5cm}
    \caption{Power and type I error rates of the considered procedures for $\pi_A \in \{0.1, 0.2, \ldots, 0.9, 1 \}$.}
    \label{fig:varypiA}
\end{figure}

Finally, let us mention that the additional numerical results in Section~\ref{sec:addnum} provide qualitatively similar conclusions
for all other explored parameter configurations:
the SUR procedures $\AAOBSURE$ and $\AALORDSURE$ always improve, often substantially, the existing OMT procedures.

\subsection{Application to IMPC data}\label{sec:real_data_appli}
In this section we analyse data from the International Mouse Phenotyping Consortium (IMPC), 
which coordinates studies on the genotype influence on mouse phenotype. 
More precisely, scientists test the hypotheses that the knock-out of certain genes will not change certain phenotypic traits 
(e.g., the coat or eye color). Since the data set  is constantly evolving 
as new genes are studied for new phenotypic traits of interest, 
online multiple testing is a natural approach for analysing such data, see also \cite{tian_onlinefwer_2020,xuramdas2021dynamic}. 
We use the data set  provided by \cite{karp2017prevalence} which includes,
for each studied gene, the count of normal and abnormal 
phenotype for female and male mice (separately), thus providing two by two contingency tables, 
which can be analysed using Fisher exact tests. 
In this section, we  investigate the genotype effect on the phenotype 
separately for male and female.
The data set  originally contains nearly $270\,000$ genes studies, but we focus on the first $30\,000$ genes for simplicity.
We set the global  level $\alpha$ to $0.2$ and $0.05$, respectively for FWER and mFDR procedures. 
For the procedure parameters, we follow the choice made in Section~\ref{sec:discreteproc}. Table~\ref{tabmice} presents the number of discoveries for the FWER controlling procedures OB, AOB, $\rho$OB, $\rho$AOB (left) 
and for the mFDR controlling procedures LORD, ALORD, $\rho$LORD, $\rho$ALORD (right). 
The results show that ignoring the discreteness of the tests causes the scientist to miss (potentially many) discoveries. 
Hence, using the SUR methods helps to reduce this risk. 

\begin{table}
\caption{\label{tabmice} Number of discoveries     for FWER controlling OMT procedures (left) and mFDR controlling OMT procedures (right).     These numbers are obtained by running the procedures on the first $30\,000$ genes for male (second row) and female (third row) mice in the IMPC data.}
\centering
\fbox{%
\begin{tabular}{*{9}{c}} 
 \hline
 Procedures & OB & $\rho$OB & AOB & $\rho$AOB & LORD& $\rho$LORD  &ALORD & $\rho$ALORD \\ [0.5ex] 
 \hline
 $\#$ discoveries (male) & 229 & 377 & 281 & 697 & 882 &972  &  972 & 1041\\
 $\#$ discoveries (female) & 267 & 481 & 764 & 811 & 839  & 946 & 966 & 1046\\
 \hline
\end{tabular}}
\end{table}

Figure~\ref{fig:mfdrimpcbasevsrewarded} (FWER procedures) and Figure~\ref{fig:fwerimpcbasevsrewarded} (mFDR procedures) 
illustrate in more detail how the super-uniformity reward leads to more discoveries, in the case of male mice 
(similar findings hold for the female mice for which the corresponding figures can be found in Section~\ref{apenIPMCfemale}). 
First, note that the smallest $p$-values occur at the beginning of the stream 
(see Figure \ref{fig:impc:male:pvalues} in Section~\ref{apenIPMCLocalization}), 
so that we limit the visual analysis to the first $1500$ $p$-values for clarity of exposition. 
For the $\rho$OB procedure, the benefit  of incorporating the super-uniformity reward is visible 
in the left panel of Figure~\ref{fig:mfdrimpcbasevsrewarded}. As expected from Figure~\ref{fig:smoothedcvs}, 
applying a rectangular kernel to these rewards yields a smooth curve. 
For the $\rho$AOB procedure, presented in the right panel of Figure~\ref{fig:mfdrimpcbasevsrewarded}, the 
improvement is even stronger, but the resulting critical value curve is less smooth. 
This is due to the 'adaptive' reward, that is, the  $\varepsilon_{T-1}$-component of our improvement, 
recall \eqref{eqn:AOBSURE}. 
More precisely, an explanation of this 'saw-tooth' shape is that during a period with $p$-values smaller than $\lambda$, 
we have $\alphaAOBSURE_T	- \alphaAbonf_T\geq  \alphaAOBSURE_{T-1}	- \alphaAbonf_{T-1}$ so the gain increases. 
Also, if this period lasts for a while (as for $500\lesssim t \lesssim 1240$ here), 
the $\rho$-part of the reward vanishes and we end up with a constant gain 
$\alphaAOBSURE_T	- \alphaAbonf_T\approx  \alphaAOBSURE_{T-1}	- \alphaAbonf_{T-1}$, explaining the flat part of the curve, 
until the next $p_T \ge \lambda$ occurs. 
After this point, we switch from the $\varepsilon$-regime back to the $\rho$-regime, 
i.e., $\alphaAOBSURE_{T+1}=\alphaAbonf_{T+1}+ \gamma'_{1}\rho_{T}$. 
Since typically $\gamma'_{1}\rho_{T} \ll \alphaAOBSURE_{T-1}-\alphaAbonf_{T-1}$, this causes the downward jump in the green curve.
For the mFDR procedures presented in Figure~\ref{fig:fwerimpcbasevsrewarded}, 
there is an additional 'rejection' reward as described in Section~\ref{sec:mFDRcontrolSU}. 
 Note that this makes some critical values exceed $1$ (both for ALORD and $\rho$ALORD), which thus cannot be displayed in the $Y$-axis scale considered in that figure. 
However, these values are still used in $\rho$ALORD algorithm to compute the future critical values (see Remark~\ref{rem:exceed1}).
The obtained results are qualitatively similar to the FWER setting: our proposed reward makes the green curves run above the orange ones, 
uniformly over the considered time, hence inducing significantly more discoveries. 

\begin{figure}[h!]
    \centering
    \begin{tabular}{cc}
    \makebox{\includegraphics[width=.49\linewidth]{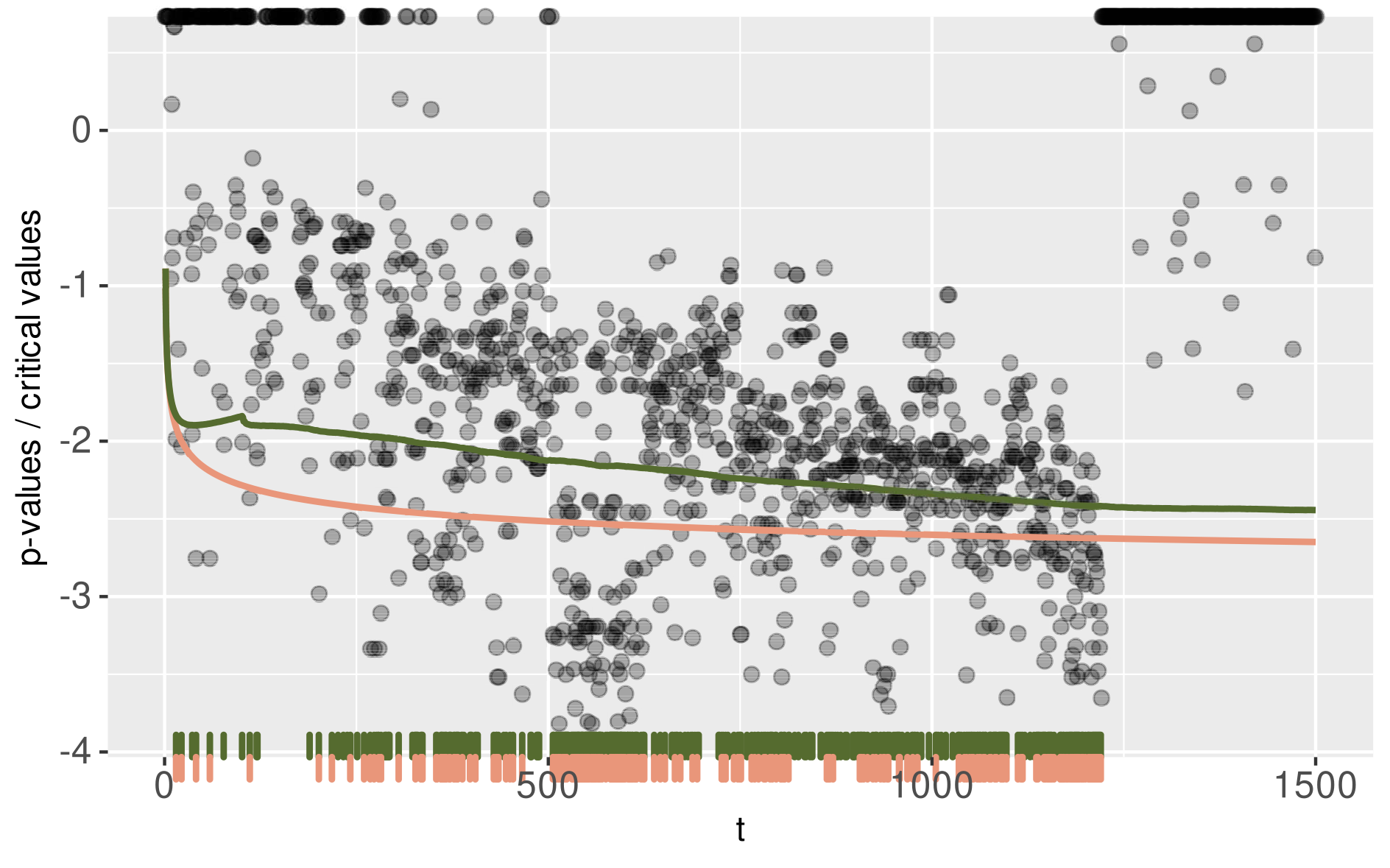}} & 
    \makebox{\includegraphics[width=.49\linewidth]{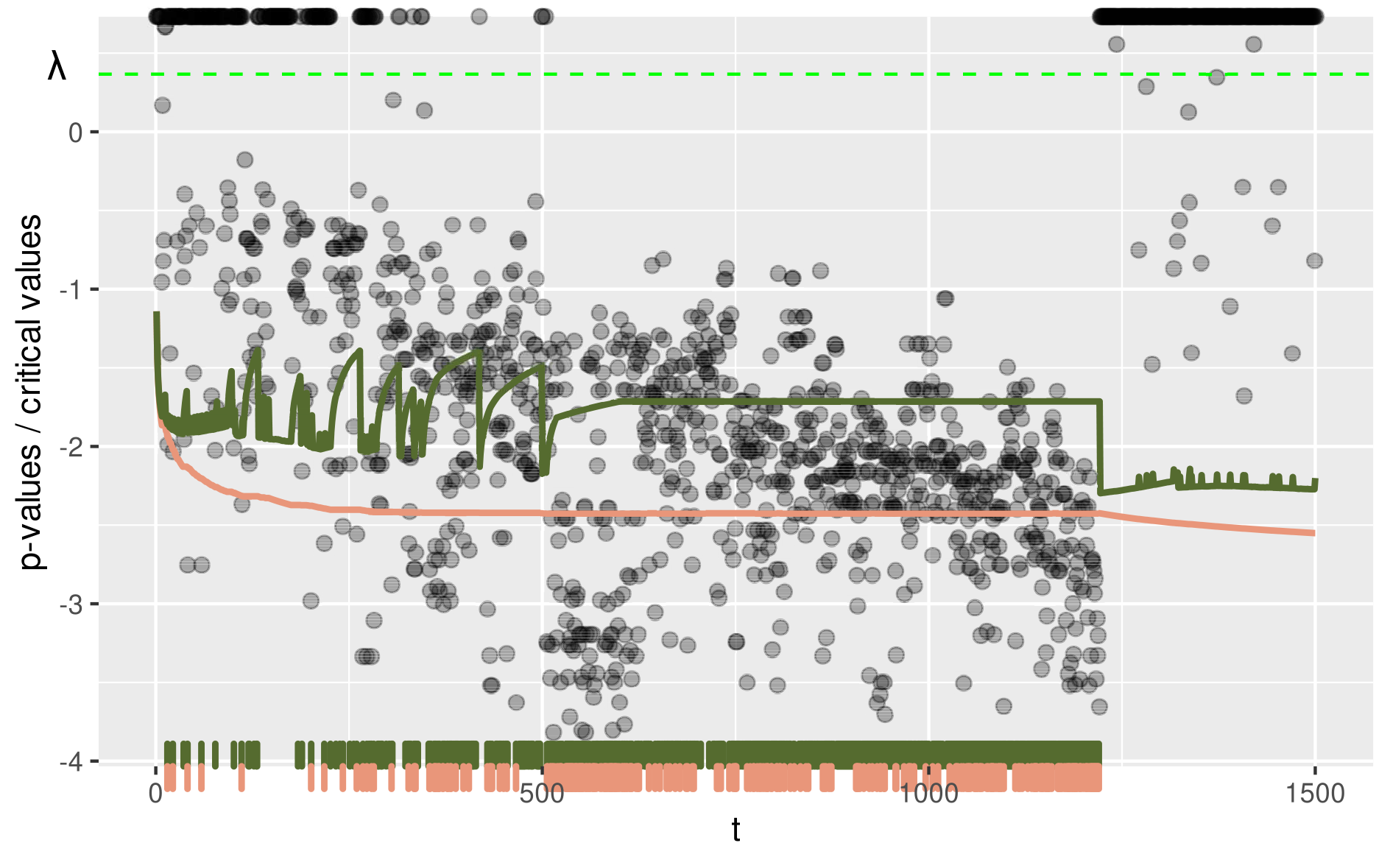} }
    \end{tabular}
        \vspace{-0.5cm}
    \caption{{Applying online FWER controlling procedures to the male mice IMPC data set.} {Left panel: $p$-values and critical values for OB (orange curve) and $\rho$OB (green curve). Right panel:  AOB (orange curve) and $\rho$AOB (green curve).} Representation similar to Figure~\ref{fig:smoothedcvs} {($Y$-axis transformed by $y \mapsto -\log(-\log(y))$; $p$-values equal to $1$ displayed at the top of the picture)}. }
    \label{fig:mfdrimpcbasevsrewarded}
\end{figure}
\begin{figure}[h!]
    \centering
    \begin{tabular}{cc}
    \makebox{\includegraphics[width=.49\linewidth]{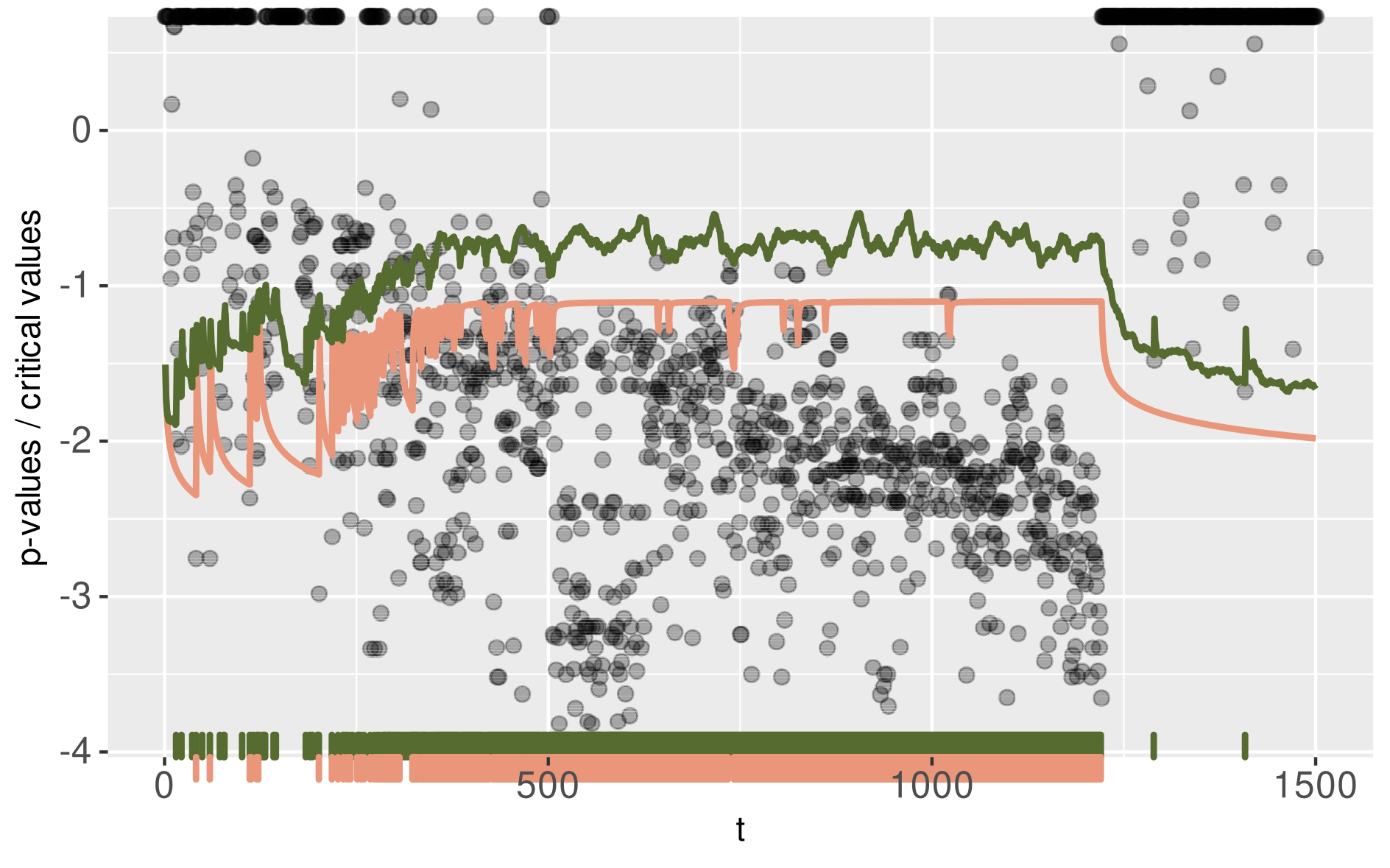}} & 
    \makebox{\includegraphics[width=.49\linewidth]{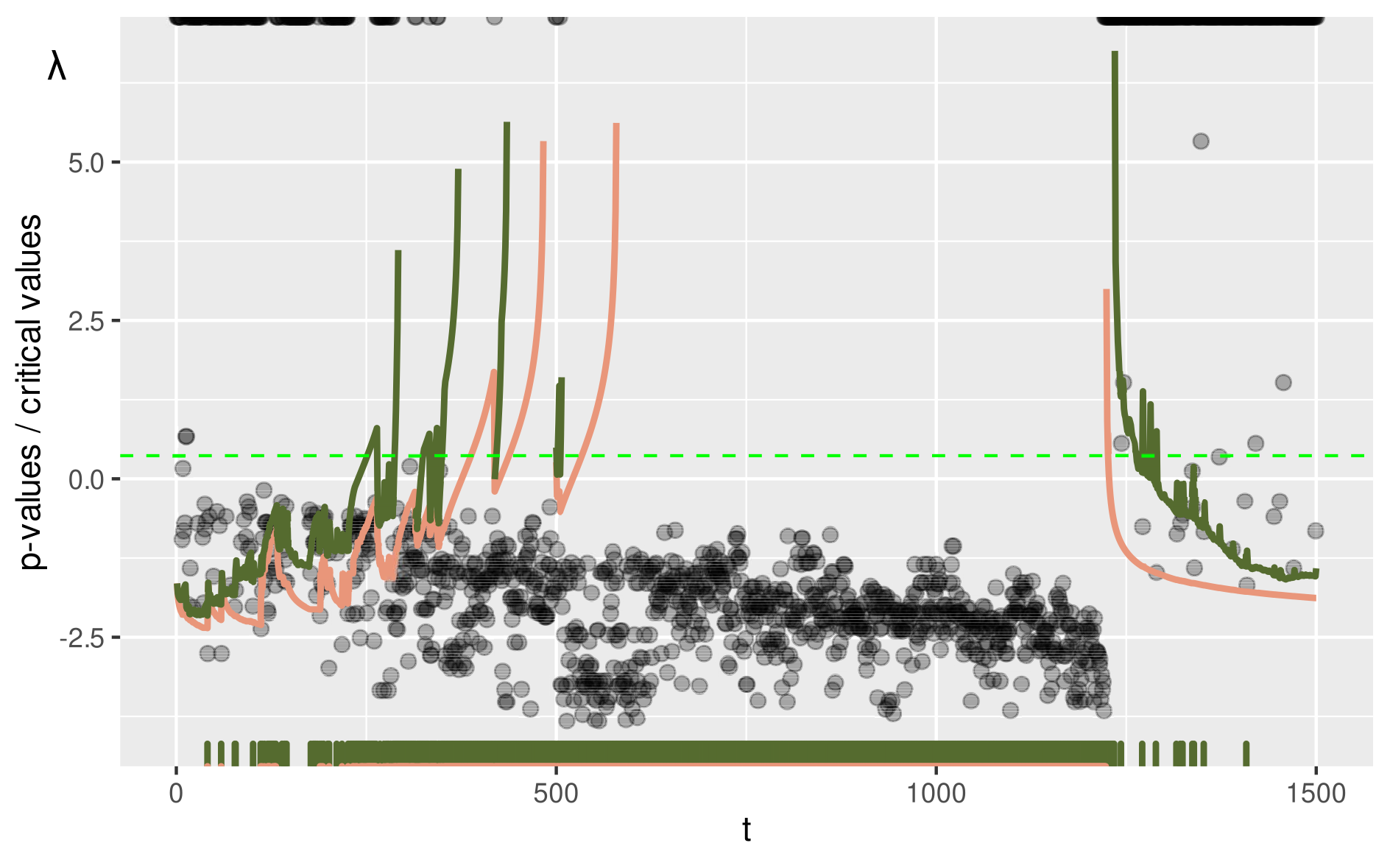}}
    \end{tabular}
        \vspace{-0.5cm}
    \caption{{Applying online mFDR controlling procedures to the male mice IMPC data set.} {Left panel:  $p$-values and critical values for LORD (orange curve) and $\rho$LORD (green curve). Right panel:  ALORD (orange curve) and $\rho$ALORD (green curve).} 
    Representation similar to Figure~\ref{fig:smoothedcvsmfdr} {($Y$-axis transformed by $y \mapsto -\log(-\log(y))$; $p$-values equal to $1$ are displayed at the top of the picture)}. }
    \label{fig:fwerimpcbasevsrewarded}
\end{figure}

\section{SUR procedures for weighted $p$-values}\label{sec:weighting}

In this section, we show how our SUR approach can be easily used to construct valid online $p$-value weighting procedures. 

\subsection{Setting and benchmark procedure}

Consider a standard continuous online multiple testing setting where each $p$-value is super-uniformly distributed under the null, that is, \eqref{eqn:superunif0} holds. 
Assume in addition that, at each time $t$, the $p$-value $p_t$ is associated with a quantity $r_t\geq 0$, called  the \emph{raw weight} (as opposed to  the \emph{rescaled weight} defined further on), which is assumed to be measurable w.r.t. $\mathcal{F}_{t-1}$.  
The magnitude of $r_t$ is interpreted as the level of belief  in a potential true discovery at time $t$: a large weight indicates a strong belief that the corresponding null hypothesis is false. 
Throughout the section, the weights $r_t$ are assumed to be available a priori and we will not discuss how to derive them (for this task, we refer to \cite{WR2006,RDV2006,RW2009,HZZ2010,ZZ2014,Ign2016,chen2020contextual} among others). 

While $p$-value weighting is a  classical tool for improving the performance of multiple testing methods in the offline setting (see references in Section~\ref{sec:superunif}), the incorporation of weights has received little attention in the online case. 
The only relevant work to our knowledge is \cite{ramdas2017online} (Section~5 therein),  which presents sufficient criteria for weighting procedures controlling the (m)FDR based on so-called GAI++ procedures and also discusses the technical challenges associated with weighted online multiple testing. 
An explicit algorithm which satisfies these criteria is used in \cite{ramdas2017online}\footnote{An implementation of this procedure can be found on the website \url{https://github.com/fanny-yang/OnlineFDRCode}}, which is detailed in Appendix~\ref{sec:ramdasweighting} for completeness. 
This method, which will be our benchmark procedure, works by weighting the $p$-values and adjusting for this weighting in the rejection reward. 

\subsection{New weighting approach}

The main idea of our new approach is as follows: consider weighted $p$-values  $\tilde{p}_t = p_t/w_t$ for some rescaled weight $w_t\in [0,1]$ which gives rise to the null bounding family $\mathcal{F} = \{F_t: u\in [0,1]\mapsto u w_t, t \geq 1\}$. Since the weights are constrained to take their values in $[0,1]$, the functions of $\mathcal{F}$ are super-uniform, that is, \eqref{eqn:superunif} holds. {Hence, one can apply our SUR approach with respect to that family $\mathcal{F}$.}

More specifically, our approach takes into account the null bounding family $\mathcal{F}$ in a simple two-step process, which proceeds as follows: for each time $t$,
\begin{enumerate}
	\item enforce super-uniformity by computing the \emph{rescaled weight}
	 $
	 w_t=\xi_t(r_t|r_1,\dots,r_{t-1}),
	 $
	 $t\geq 1$, 
	 for some given \emph{rescaling function}  $\xi_t$ valued in $[0,1]$ (see below for more details and an explicit choice);
	 \item apply any one of the SUR methods from Section~\ref{sec:FWERcontrolSU}  or Section~\ref{sec:mFDRcontrolSU}, depending on whether FWER or mFDR control is desired.
\end{enumerate}

We denote these new procedures by $w{X}$, where $X$ stands for the name of the base procedure (either OB \eqref{eqn:OB}, AOB \eqref{eqn:AOB}, LORD \eqref{eqn:LORD} or ALORD \eqref{eqn:ALORD}).
These procedures all come with the corresponding FWER or mFDR control (by additionally assuming \eqref{indep} if needed). {In particular, to the best of our knowledge, this also provides the first method for weighted online FWER control.}

{At first sight, these SUR weighting approaches may seem to be ineffective due to the conservatism induced by the rescaling step. 
However, this is countered in the second step by using SUR procedures that provide larger values $\alpha_t$, due to the super-uniform rewards accumulated in the past. The hope is that these two effects balance out in such a way as to favor rejection of hypotheses associated with larger values of (raw) weights.}

Finally, let us mention that a simple choice for $\xi_t$  is given by $\xi_t(x|r_1,\dots,r_{t-1})=\hat{F}_{t-1}(x)\ind{x> 0}$, where $\hat{F}_{t-1}(x)=(t-1)^{-1}\sum_{i=1}^{t-1} \ind{r_i\leq x}$ is the empirical c.d.f. of the sample $r_1,\dots,r_{t-1}$ (and by convention $\hat{F}_{0}(x)=1$).
{This particular choice is easy to compute in a sequential manner, and it satisfies the following intuitive and desirable properties: $\xi_t(x)\in [0,1]$ (ensures super-uniformity of $\mathcal{F}$), $\xi_t(x)$ is nondecreasing in $x$ (a larger raw weight leads to a larger rescaled weight), $\xi_t(0)=0$ (raw zero weights rescaled to zero), $\xi_t(\lambda r_t|\lambda r_1,\dots,\lambda r_{t-1})=\xi_t( r_t| r_1,\dots, r_{t-1})$ for all $\lambda>0$ (scale invariance) and if all raw weights are equal then all rescaled weights are equal to $1$.}

\subsection{Analysis of RNA-Seq data}

We revisit an analysis of the RNA-Seq data set `airway'  using results from the Independent Hypothesis Weighting (IHW) approach (for details, see \cite{Ign2016} and the vignette accompanying its software implementation).   While the original data was not collected in an online fashion, we  use it here nevertheless to provide  a proof of concept for weighted SUR procedures. The  `airway' data set contains data from $64102$ genes and the corresponding (offline) weights  are taken from the output of the ihw function from the bioconductor package `IHW'. {These `raw' weights are then transformed into rescaled weights by using the function $\xi_t$ described in the previous section.}
For the procedure parameters, we use the same choices as for the analysis of the IMPC data, see Section \ref{sec:real_data_appli}.

Table~\ref{weighting_results} (left part) presents the result for the FWER controlling procedures OB, AOB (non-weighted), and $w$OB, $w$AOB (SUR weighted approaches). It is clear that incorporating the weights leads to more rejections, which corroborates the fact that the weights coming from \cite{Ign2016} are indeed informative.

\begin{table}
\caption{\label{weighting_results} Number of discoveries for weighted controlling OMT procedures  for the 'airway' data set , with the weights taken from \cite{Ign2016}.}
	    \centering
	\begin{tabular}{|c||cccc||cccc|} 
\hline
Procedures & OB & $w$OB (new) & AOB & $w$AOB (new) & LORD & $w$GAI$_1$ & $w$GAI$_2$ &  $w$LORD (new) \\ [0.5ex] 
\hline
$\#$ discoveries  & 1092 & 1195  & 1188 & 1273  & 3550& 1308  & 3631 & 4445 \\
\hline
\end{tabular}

\end{table}

As for mFDR control, the (non-weighted) LORD is compared to our weighted version $w$LORD in Table~\ref{weighting_results} (right part). 
As additional competitors, we also added the weighted GAI++ procedure proposed in \cite{ramdas2017online} (see Section~\ref{sec:ramdasweighting} for a detailed description), that we use either with the raw weights (denoted by  $w$GAI$_1$) or with the rescaled weights (denoted by $w$GAI$_2$). As one can see, the effect of rescaling the weights is highly beneficial, and the new $w$LORD proposal is the one that incorporates these weights in the most efficient way.

\section{Discussion} \label{sec:conclu}

\subsection{Conclusion}

Existing OMT procedures often suffer from a lack of power due to conservativeness of the $p$-values. This occurs typically for discrete test statistics, which is a common situation in data sets where testing is based upon
counts.
To fill the gap, we introduced new SUR versions of some existing classical procedures, 
that 'reward' the base procedures by spending more efficiently 
the $\alpha$-wealth according to known bounds on the null cumulative distribution functions. 
We showed that our new SUR procedures provide rigorous control of online error criteria (FWER or mFDR) under classical assumptions 
while offering a systematic power enhancement. When using discrete Fisher exact test statistics, the improvement is substantial, both for simulated and real data. 

In addition, even in the standard case of uniformly distributed $p$-values, our approach allowed us to derive new weighted procedures that incorporate  external covariates. This provides improvements w.r.t. existing online weighting strategies.

\subsection{Another viewpoint}

{In the discrete setting, let us consider the following constrained spending problem:
at each step $t$, choose the critical value $\alpha_t$ to be in the support $S_t$ (including $0$) so that the following contraint holds
\begin{equation}\label{constraint}
	\sum_{t \geq 1}\alpha_t  \leq \alpha.
\end{equation}
It solves the super-uniformity problem, because $F_t(\alpha_t)=\alpha_t$ for all $t$, while it controls the online FWER.
This general principle, that we refer to as 'constrained spending strategies', can be implemented in many ways. }

Markedly, the {\it SUR approach is a way to achieve this}, by additionally following some reference critical values --- here the online Bonferroni critical values $ \alphabonf_t$    \eqref{eqn:OB}. Indeed, the rejection decision $p_t \leq \alphaOBSURE_t$ and $p_t \leq \alpha_t = F_t(\alphaOBSURE_t)$ are almost surely identical and we have calibrated $ \alphaOBSURE_t$ such that \eqref{constraint} holds, see \eqref{eqn:FWERcontrol3}. 
In other words, even if our critical values are not constrained to be in the support initially, the effective critical values $\alpha_t = F_t(\alphaOBSURE_t)$ that are actually used in the decision rule will automatically belong to the support. Thus, our approach can be equivalently seen as a way of implementing the constrained spending strategies delineated above.

Obviously, there are other ways to implement the constrained spending strategy. One instance is the delayed spending (DS) approach, that we describe in detail in Appendix~\ref{sec:delay}.


\subsection{Future directions}

While our results address several issues, they also raise new questions. 
First, the bandwidth of the kernel-based SUR spending sequence 
$\gamma'$ given by \eqref{eqn:kernel} has been chosen in a loose way here, but
tuning the bandwidth is certainly interesting from a power enhancement perspective (see Section~\ref{apenband}).
Also, in applications, the user would possibly like to select the bandwidth in a data dependent fashion 
without losing control over type I error rate. These two issues are interesting extensions for future developments.
Second, while our work focuses on marginal FDR, it would be desirable to build rewarded OMT procedures that control the (non-marginal) FDR. 
However, usual proofs rely on a monotonicity property of the critical value sequence (\citealp{ramdas2017online}) 
that is difficult to satisfy here, because the super-uniformity reward naturally varies over time. 
Hence, deriving rewarded FDR controlling procedures is a challenging issue that is left for future investigations. 
Third, most of our results rely on an independence assumption, see \eqref{indep}. 
While this can be considered as a mild restriction in an online framework, 
relaxing it or incorporating a known dependence structure in OMT is an interesting avenue.

\section*{Acknowledgements}
This work has been supported by ANR-16-CE40-0019 (SansSouci), ANR-17-CE40-0001 (BASICS) and by the GDR ISIS through 
the 'projets exploratoires' program (project TASTY). 
It is part of project DO 2463/1-1, funded by the Deutsche Forschungsgemeinschaft. 
The authors thank Florian Junge for his help regarding technical issues when running the simulations, Natasha Karp for explanations on the IMPC data and Aaditya Ramdas for very constructive discussions.

\bibliographystyle{apalike}
\bibliography{biblio}
\appendix
\section{Proofs}\label{sec:proofs}
\subsection{Proofs for online FWER control}\label{sec:proofFWER}
We start by proving Theorem~\ref{th:genFWER} and then deduce Theorems~\ref{th:OBSURE}~and~\ref{th:AOBSURE}.
\begin{proof}[Proof of Theorem \ref{th:genFWER}]
	First, let us show that for any critical values $(\alpha_t,t\geq 1)$, a sufficient condition for FWER control under \eqref{eqn:superunif} is given by 
			\begin{align}
			\alpha_T + \sum_{t=1}^{T-1} \ind{p_t(X)\geq  \lambda }  F_t(\alpha_t) 
			\leq (1-\lambda) \alpha \qquad \text{(a.s.)}\label{sufficientFWERcontrol}
			\end{align}
	if either \eqref{indep} or if $(\alpha_t,t\geq 1)$ are deterministic for all $T \geq 1$. This comes from 
	Markov's inequality combined with Lemma~\ref{lemmacontroladapt}:
\begin{align*}
    \FWER(T,\mathcal{A},P)
    &\leq \E_{X\sim P}\Big(\sum_{t=1}^T \ind{t\in \cH_0(P), p_t \leq \alpha_t}\Big)\\
    &\leq  (1-\lambda)^{-1} \: \E\left(\sum_{t=1}^T \ind{p_t(X) \geq \lambda } F_t(\alpha_t)\right)\\
    &\leq (1-\lambda)^{-1} \: \E\left(\alpha_T+ \sum_{t=1}^{T-1} \ind{p_t(X) \geq \lambda } F_t(\alpha_t)\right),
\end{align*}
which gives the announced sufficient condition.
Now, we obtain statement (i) of the theorem by verifying the above criterion  \eqref{sufficientFWERcontrol} for $\alpha^0_t$ using the (crude)  bound $F_t(x) \le x$ and assumption \eqref{eqn:conditionalpha0}.
Next, we obtain statement (ii) of the theorem by verifying the above criterion  \eqref{sufficientFWERcontrol} for $\alpha_t$. This is done by reducing this to a statement on $\alpha^0_t$ via Lemma~\ref{lem:gencomput}. More precisely, {with $a_T = \sum_{t=1}^T \gamma'_t,$ we have}
\begin{align*}
    \alpha_T + \sum_{t=1}^{T-1} \ind{p_t(X) \geq  \lambda }  F_t(\alpha_t)
    &\leq \alpha_T + \sum_{t=1}^{T-1} \ind{p_t \geq \lambda} \left[(1 - a_{T-t}) \alpha_t + a_{T-t}F_{t}(\alpha_t) \right]\\
    &= \alpha^0_T + \sum_{t=1}^{T-1} \ind{p_t \geq \lambda} \alpha^0_t \leq \alpha,
\end{align*}
where the equality above is true provided that the following recursion holds for all $T\geq 1$,
\begin{align*}
    \alpha_T = 
    \alpha^0_T 
    + \sum_{t=1}^{T-1} \ind{p_t \geq \lambda} \alpha^0_t
    - \sum_{t=1}^{T-1} \ind{p_t \geq \lambda} \left[(1 - a_{T-t}) \alpha_t 
    + a_{T-t}F_{t}(\alpha_t) \right].
\end{align*}
This is true by Lemma~\ref{lem:gencomput} because of the expression \eqref{eqn:generalreward} of $\alpha_t$. This concludes the proof.
\end{proof}
\begin{proof}[Proof of Theorems~\ref{th:OBSURE}~and~\ref{th:AOBSURE}]
Theorems~\ref{th:OBSURE}~and~\ref{th:AOBSURE} are corollaries of Theorem \ref{th:genFWER}, by considering 
$\mathcal{A}^0=\Abonf$ ($\lambda=0$) and $\mathcal{A}^0=\AAbonf$, respectively. 
Indeed, checking \eqref{eqn:conditionalpha0} is straightforward for $\Abonf$ from the spending sequence definition or comes from Lemma~\ref{lem:Tronde} for $\AAbonf$.
\end{proof}

\subsection{Proofs for online mFDR control}\label{sec:proofmFDR}\label{proof:ALORD}

The global proof strategy is similar to the one used for FWER: we start by proving Theorem~\ref{th:genmFDR} and then deduce Theorem~\ref{th:LORDSURE} and Theorem~\ref{th:ALORDSURE}. 

\begin{proof}[Proof of Theorem~\ref{th:genmFDR}]
    First, we establish that mFDR control is provided under \eqref{eqn:superunif} and  \eqref{indep} for any procedure $\mathcal{A}=(\alpha_t,t\geq 1)$ if
    \begin{align}
        \alpha_T + \sum_{1\leq t\leq T-1, \atop p_t\geq\lambda} F_t(\alpha_t) \leq (1-\lambda) \alpha\:( 1\vee R(T)),\quad \mbox{(a.s.)}.
        \label{conditionproofmFDR}
    \end{align}
    Indeed, by Lemma~\ref{lemmacontroladapt}, we have
    \begin{align*}
        \E_{X\sim P}\Big(\sum_{t=1}^T \ind{t \in \cH_0, p_t \leq \alpha_t}\Big) 
        &\leq (1-\lambda)^{-1} \: \E\left(\sum_{t=1}^T \ind{p_t(X)\geq \lambda } F_t( \alpha_t) \right)\\
        &\leq \alpha\:\E( 1\vee R(T)),
    \end{align*}
    by using \eqref{conditionproofmFDR}, which is exactly the desired mFDR control.
    Now, statement (i) holds because \eqref{conditionproofmFDR} holds for $(\alpha^0_t,t\geq 1)$ from \eqref{conditionalpha0mFDR} and \eqref{eqn:superunif}. 
    Finally, we establish statement (ii).  By \eqref{conditionalpha0mFDR} and \eqref{eqn:superunif}, condition \eqref{conditionproofmFDR} holds for $(\alpha_t,t\geq 1)$ if for all $T\geq 1$,
    \begin{align*}
        \alpha_T + \sum_{t=1}^{T-1} \ind{p_t(X) \geq \lambda} \left[(1 - a_{T-t}) \alpha_t 
        + a_{T-t}F_{t}(\alpha_t) \right]
        = \alpha^0_T + \sum_{p_t \geq\lambda, 1 \leq t \leq T-1} \alpha^0_t,
    \end{align*}
   {where $a_T = \sum_{t=1}^T \gamma'_t$. Now the last display holds true}    by Lemma~\ref{lem:gencomput} because of \eqref{eqn:generalreward}, which concludes the proof.    
\end{proof}

\begin{proof}[Proof of Theorems~\ref{th:LORDSURE}~and~\ref{th:ALORDSURE}]
Theorem~\ref{th:LORDSURE} and Theorem~\ref{th:ALORDSURE} can be derived from Theorem~\ref{th:genmFDR} for  $\mathcal{A}^0=\ALORD$ (using $\lambda=0$) and $\mathcal{A}^0=\AALORD$, respectively, by checking \eqref{conditionalpha0mFDR} in both cases.
First, for $\ALORD$, we have
\begin{align}
    \sum_{t=1}^T \alphaLORD_t&=\sum_{t=1}^T \left(W_0 \gamma_t + (\alpha-W_0) \gamma_{t-\tau_1} + \alpha\sum_{j\geq 2} \gamma_{t-\tau_j}\right) \nonumber\\
    &= W_0 \sum_{t=1}^T  \gamma_t +(\alpha-W_0) \sum_{t=1}^T\gamma_{t-\tau_1} + \alpha\sum_{j\geq 2}\ind{T-\tau_j\geq 1}\sum_{t=1}^T \gamma_{t-\tau_j}\nonumber\\
    &\leq \alpha (1+0\vee (R(T-1)-1))\leq \alpha (1\vee R(T)),\label{equreasoningmFDR}
\end{align}
because $\tau_j \leq T-1$ is equivalent to $R(T-1) \geq j$ by definition.
Second, for $\AALORD$, we proceed similarly with the help of Lemma~\ref{lem:Tronde}:
        by definition \eqref{eqn:ALORD}, we have  
\begin{align*}
    &(1-\lambda)^{-1}\left( \alphaALORD_T + \sum_{1 \leq t \leq T-1, \atop p_t \geq \lambda} \alphaALORD_t\right)\\
    &= W_0 \left(\gamma_{\mathcal{T}_0(T)} 
    + \sum_{1 \leq t \leq T-1, \atop p_t \geq \lambda}\gamma_{\mathcal{T}_0(t)}\right) 
    + (\alpha-W_0) \left(\gamma_{\mathcal{T}_1(T)} 
    + \sum_{1 \leq t \leq T-1, \atop p_t \geq \lambda}\gamma_{\mathcal{T}_1(t)} \right)\\
    &\:\:\: + \alpha \sum_{j \geq 2} \ind{T \geq \tau_j+1} \left(\gamma_{\mathcal{T}_j(T)} 
    + \sum_{1 \leq t \leq T-1, \atop p_t \geq \lambda}\gamma_{\mathcal{T}_j(t)}\right).
\end{align*}
Finally, by using  \eqref{eqn:interm} and \eqref{eqn:intermj}, the latter is equal to  
\begin{align*}
    &W_0 \sum_{t=1}^{\mathcal{T}_0(T)} \gamma_{t} 
    + (\alpha-W_0) \sum_{t=1}^{\mathcal{T}_1(T)} \gamma_{t} 
    + \alpha \sum_{j\geq 2} \ind{T \geq \tau_j+1} \sum_{t=1}^{\mathcal{T}_j(T)} \gamma_{t}\\
    &\leq W_0 + \alpha-W_0 + \alpha  \sum_{j \geq 2} \ind{T \geq \tau_j+1} 
    = \alpha (1 + 0\vee (R(T-1)-1)) \leq \alpha\:( 1 \vee R(T)),
\end{align*}
because $T \geq \tau_j+1$ if and only if $R(T-1) \geq j$. 
\end{proof}

\subsection{Auxiliary lemmas}

The following lemma provides a tool for controlling both online $\FWER$ and $\mFDR$.

\begin{lemma}
    \label{lemmacontroladapt}
    For any procedure $\mathcal{A} = (\alpha_t, t \geq 1)$, we have for all $\lambda\in [0,1)$,
    \begin{align}
        \E_{X\sim P}\Big(\sum_{t=1}^T \ind{t \in \cH_0, p_t \leq \alpha_t}\Big) 
        \leq (1-\lambda)^{-1} \: \E\left(\sum_{t=1}^T \ind{p_t(X)\geq \lambda } F_t( \alpha_t) \right),
        \label{conditioncontroladapt}
    \end{align}
    provided that \eqref{eqn:superunif} holds and  if either \eqref{indep} holds or if the critical values $(\alpha_t, t \geq 1)$ are deterministic.
\end{lemma}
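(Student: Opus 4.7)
The plan is to prove the inequality pointwise in $t$ for each $t \in \cH_0$ and then sum. Fix $t \in \cH_0$. By predictability, $\alpha_t$ is $\mathcal{F}_{t-1}$-measurable; under \eqref{indep} (or trivially, when $\alpha_t$ is deterministic), the $p$-value $p_t$ is independent of $\mathcal{F}_{t-1}$. Evaluating the super-uniformity bound \eqref{eqn:superunif} conditionally on $\mathcal{F}_{t-1}$ at the random argument $\alpha_t$ therefore yields, almost surely,
\[
\P(p_t \leq \alpha_t \mid \mathcal{F}_{t-1}) \leq F_t(\alpha_t), \qquad \P(p_t \geq \lambda \mid \mathcal{F}_{t-1}) \geq 1 - F_t(\lambda) \geq 1 - \lambda.
\]

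Next, since $F_t(\alpha_t) \geq 0$ is itself $\mathcal{F}_{t-1}$-measurable, I would pull it out of the inner conditional expectation to obtain
\[
\E\!\left[\ind{p_t \geq \lambda}\, F_t(\alpha_t) \mid \mathcal{F}_{t-1}\right] = F_t(\alpha_t)\, \P(p_t \geq \lambda \mid \mathcal{F}_{t-1}) \geq (1-\lambda)\, F_t(\alpha_t).
\]
Taking total expectation and rearranging gives $\E[F_t(\alpha_t)] \leq (1-\lambda)^{-1}\, \E[\ind{p_t \geq \lambda}\, F_t(\alpha_t)]$. Combined with $\E[\ind{p_t \leq \alpha_t}] \leq \E[F_t(\alpha_t)]$ (itself obtained by taking expectations in the first displayed bound), this yields, for every $t \in \cH_0$,
\[
\E[\ind{p_t \leq \alpha_t}] \leq (1-\lambda)^{-1}\, \E[\ind{p_t \geq \lambda}\, F_t(\alpha_t)].
\]

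Finally, I would sum from $t=1$ to $T$: on the left-hand side the terms with $t \notin \cH_0$ are absent by construction, while on the right-hand side every summand is non-negative, so dropping the implicit restriction to $t \in \cH_0$ only enlarges the bound, producing exactly \eqref{conditioncontroladapt}.

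The only genuinely delicate step is the conditional application of \eqref{eqn:superunif} at the random argument $\alpha_t$ -- this is precisely where the combination of predictability and the online-independence assumption \eqref{indep} is used, and it is also where the two versions of the hypothesis (random vs.\ deterministic $\alpha_t$) are unified, since in the deterministic case the step trivializes. The remainder is routine bookkeeping with non-negative quantities and the tower property.
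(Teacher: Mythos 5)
Your proof is correct and follows essentially the same route as the paper's: conditioning on the past (the paper conditions on $\alpha_t$, you on $\mathcal{F}_{t-1}$, which is equivalent here), applying \eqref{eqn:superunif} at the random argument $\alpha_t$ to get $\P(p_t\leq\alpha_t\mid\cdot)\leq F_t(\alpha_t)$ and $\P(p_t\geq\lambda\mid\cdot)\geq 1-\lambda$, pulling out the $\mathcal{F}_{t-1}$-measurable factor $F_t(\alpha_t)$, and summing while dropping the restriction to $\cH_0$ on the right by nonnegativity. No gaps.
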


\begin{proof}
Recall $\alpha_t$ is either deterministic or $\mathcal{F}_{t-1}$-measurable (in which case it is independent of $p_t(X)$ under \eqref{indep}). Therefore, under the conditions of the lemma, we have in any case: for all $t \in \cH_0$, both
\begin{align*}
    \E\left(\frac{\ind{p_t(X) > \lambda}}{1-\lambda} \bigg| \alpha_t\right) \geq 1 , \:\:\: \P\left(p_t(X) \leq \alpha_t \:|\: \alpha_t\right)\leq F_t(\alpha_t).
\end{align*}
This entails
\begin{align*}
    \E_{X \sim P}\Big(\sum_{t=1}^T \ind{t \in \cH_0, p_t \leq \alpha_t}\Big)
    &= \sum_{t=1}^T \ind{t \in \cH_0} \:\E \left(\P(p_t(X) \leq \alpha_t \:|\: \alpha_t)\right) \\
    &\leq \sum_{t=1}^T \ind{t \in \cH_0} \:\E \left(F_t(\alpha_t) \right) \\
    &\leq \sum_{t=1}^T \ind{t \in \cH_0}\: \E \left(F_t(\alpha_t)  \E \left(\frac{\ind{p_t(X) \geq \lambda}}{1-\lambda} \:|\: \alpha_t\right)\right)\\
    &\leq (1-\lambda)^{-1} \:\E\left(\sum_{t=1}^T \ind{p_t(X) \geq \lambda} F_t(\alpha_t) \right).
\end{align*}
\end{proof}
The following representation lemma is the key tool for building the new rewarded critical values.

\begin{lemma}
    \label{lem:gencomput}
    Let $(\alpha^0_t, t \geq 1)$ be any nonnegative sequence. 
    Let $(\tilde{\alpha}_t, t \geq 1)$ be the sequence defined by the recursive relation
    \begin{equation}
        \label{alphatildegen}
        \tilde{\alpha}_T = \alpha^0_T 
        + \sum_{t=1}^{T-1} \ind{p_t \geq \lambda} \alpha^0_t 
        - \sum_{t=1}^{T-1} \ind{p_t \geq \lambda} \left[ (1 - a_{T-t}) \tilde{\alpha}_t 
        + a_{T-t}F_{t}(\tilde{\alpha}_t) \right],\:\: \quad T\geq 1,
    \end{equation}
    where  $a_T = \sum_{t=1}^T \gamma'_t,$ $T \geq 1$ for any real values $\gamma'_t$, $p_t$, $\lambda$ and functions $F_t$.
    Let $(\bar{\alpha}_t, t \geq 1)$ be the sequence defined by the recursive relation
    \begin{align*}
        \bar{\alpha}_T = \alpha^0_T
            + \sum_{1\leq t \leq T-1 \atop p_t \geq \lambda} \gamma'_{T-t} (\bar{\alpha}_t - F_t(\bar{\alpha}_t)) 
            +  \ind{ p_{T-1} < \lambda }(\bar{\alpha}_{T-1} - \alpha^0_{T-1}) , \quad T\geq 1.
    \end{align*}
Then we have $\tilde{\alpha}_t = \bar{\alpha}_t$ for all $t \geq 1$. {Moreover, $\bar{\alpha}_t \geq \bar{\alpha}^0_t$ for all $t \geq 1$ under \eqref{eqn:superunif}. In particular, these critical values are nonnegative.}
\end{lemma}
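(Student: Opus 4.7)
The plan is to prove the identity $\tilde{\alpha}_T = \bar{\alpha}_T$ by strong induction on $T$, with the base case $T=1$ being immediate since both recursions reduce to $\alpha^0_1$ (empty sums). For the inductive step, rather than expanding both sides in closed form, I would work with the increment $\tilde{\alpha}_T - \tilde{\alpha}_{T-1}$: subtracting the defining recursion of $\tilde{\alpha}_{T-1}$ from that of $\tilde{\alpha}_T$ makes the two common initial sums cancel term by term and produces differences of the form $(a_{T-t}-a_{T-1-t})(F_t(\tilde{\alpha}_t)-\tilde{\alpha}_t)$ for $1\leq t\leq T-2$. Using $a_{T-t}-a_{T-1-t}=\gamma'_{T-t}$ this collapses to $\sum_{t=1}^{T-2}\ind{p_t\geq\lambda}\gamma'_{T-t}(\tilde{\alpha}_t-F_t(\tilde{\alpha}_t))$, while the boundary contribution from $t=T-1$ in the recursion of $\tilde{\alpha}_T$ yields $\ind{p_{T-1}\geq\lambda}[\alpha^0_{T-1}-(1-\gamma'_1)\tilde{\alpha}_{T-1}-\gamma'_1 F_{T-1}(\tilde{\alpha}_{T-1})]$.

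Next, I would split on whether $p_{T-1}\geq\lambda$ or $p_{T-1}<\lambda$. If $p_{T-1}\geq\lambda$, the boundary term rewrites as $\alpha^0_{T-1}-\tilde{\alpha}_{T-1}+\gamma'_1(\tilde{\alpha}_{T-1}-F_{T-1}(\tilde{\alpha}_{T-1}))$, and adding the $t=T-1$ summand to the $\gamma'$-sum gives exactly $\sum_{t=1}^{T-1}\ind{p_t\geq\lambda}\gamma'_{T-t}\rho_t$; after also accounting for the $\alpha^0_T-\alpha^0_{T-1}$ and $\alpha^0_{T-1}-\tilde{\alpha}_{T-1}$ pieces, the sum telescopes into $\tilde{\alpha}_T = \alpha^0_T+\sum_{t=1}^{T-1}\ind{p_t\geq\lambda}\gamma'_{T-t}\rho_t$, which matches $\bar{\alpha}_T$ with $\varepsilon_{T-1}=0$. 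If instead $p_{T-1}<\lambda$, the boundary term vanishes and what remains is $\tilde{\alpha}_T=\alpha^0_T+(\tilde{\alpha}_{T-1}-\alpha^0_{T-1})+\sum_{t=1}^{T-1}\ind{p_t\geq\lambda}\gamma'_{T-t}\rho_t$, which is precisely $\bar{\alpha}_T$ with the adaptive reward $\varepsilon_{T-1}=\bar{\alpha}_{T-1}-\alpha^0_{T-1}$. By the inductive hypothesis $\tilde{\alpha}_t=\bar{\alpha}_t$ for $t<T$, so the $\rho_t$ agree on both sides, completing the induction.

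For the second assertion, I would again proceed by induction. Under \eqref{eqn:superunif} (extended to $F_t(u)=F_t(u\wedge 1)$) one has $F_t(x)\leq x$ for all $x\geq 0$, whence $\rho_t=\bar{\alpha}_t-F_t(\bar{\alpha}_t)\geq 0$. The base case $\bar{\alpha}_1=\alpha^0_1$ is trivial. For the inductive step, the recursion for $\bar{\alpha}_T$ expresses it as $\alpha^0_T$ plus a sum of nonnegative terms $\gamma'_{T-t}\rho_t$ plus the adaptive contribution $\ind{p_{T-1}<\lambda}(\bar{\alpha}_{T-1}-\alpha^0_{T-1})$, which is nonnegative by the inductive hypothesis. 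Hence $\bar{\alpha}_T\geq\alpha^0_T\geq 0$.

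The main obstacle I anticipate is purely bookkeeping in the increment computation: keeping careful track of which summation index the boundary $t=T-1$ term belongs to, correctly identifying $\gamma'_1=a_1$, and ensuring the case split on $p_{T-1}$ reproduces both the $\gamma'_{T-t}\rho_t$ tail and the $\varepsilon_{T-1}$ adaptive correction without a sign error. No delicate analytic estimate is needed; the identity is a telescoping consequence of the algebraic relation $a_{T-t}-a_{T-1-t}=\gamma'_{T-t}$.
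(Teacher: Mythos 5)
Your proposal is correct and follows essentially the same route as the paper's proof: computing the increment $\tilde{\alpha}_T - \tilde{\alpha}_{T-1}$, using $a_{T-t}-a_{T-t-1}=\gamma'_{T-t}$ to collapse the sums, and treating the boundary $t=T-1$ term (the paper handles the two cases $p_{T-1}\geq\lambda$ and $p_{T-1}<\lambda$ simultaneously via the indicator decomposition of $\tilde{\alpha}_{T-1}$ rather than an explicit split, which is only a presentational difference). Your induction for the domination $\bar{\alpha}_t\geq\alpha^0_t$, using $F_t(x)\leq x$ and the nonnegativity of the adaptive term, is also exactly the paper's argument.
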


\begin{proof}
Clearly, $\tilde{\alpha}_1 = \alpha^0_1 = \bar{\alpha}_1$  so the result is satisfied for $T=1$.
For $T\geq 2$, by using \eqref{alphatildegen} for $\tilde{\alpha}_T $ and $\tilde{\alpha}_{T-1}$, we have 
\begin{align*} 
    \tilde{\alpha}_T -\tilde{\alpha}_{T-1} =&\:\alpha^0_T-\alpha^0_{T-1}+\ind{p_{T-1}\geq \lambda} \alpha^0_{T-1}\\
    &-\ind{p_{T-1}\geq\lambda} \left[ (1 - a_1)  \tilde{\alpha}_{T-1} + a_1F_{T-1}(\tilde{\alpha}_{T-1}) \right]\\
    &+ \sum_{t =1}^{ T-2}  \ind{p_t\geq\lambda}\left[ (a_{T-t} - a_{T-t-1})  \tilde{\alpha}_t -(a_{T-t}- a_{T-t-1})F_{t}(\tilde{\alpha}_t) \right].
\end{align*}
Hence, by using 
$\tilde{\alpha}_{T-1} = \tilde{\alpha}_{T-1} \ind{p_{T-1} < \lambda} +\tilde{\alpha}_{T-1} \ind{p_{T-1} \geq \lambda}$, we obtain
\begin{align*} 
    \tilde{\alpha}_T  =&\:\alpha^0_{T}- \ind{p_{T-1}< \lambda} \alpha^0_{T-1}+\tilde{\alpha}_{T-1}\ind{p_{T-1}< \lambda} \\
    &+\ind{p_{T-1}\geq\lambda} \left[   \gamma'_1  \tilde{\alpha}_{T-1} - \gamma'_1F_{T-1}(\tilde{\alpha}_{T-1}) \right]\\
    &+ \sum_{t =1}^{ T-2}  \ind{p_t\geq\lambda}\left[ \gamma'_{T-t}  \tilde{\alpha}_t - \gamma'_{T-t}F_{t}(\tilde{\alpha}_t) \right], 
    \end{align*}
because $\gamma'_1=a_1$, and we recognize the expression given in the lemma.

{Let us finally prove that $\bar{\alpha}_T\geq \bar{\alpha}^0_T$ for all $T\geq 1$. This is true for $\bar{\alpha}_1$ because $\bar{\alpha}_1=\alpha^0_1$. Now, if $\bar{\alpha}_{1}\geq \alpha^0_1,\dots,\bar{\alpha}_{T-1}\geq\alpha^0_{T-1} $ then we also have 
$$
 \bar{\alpha}_T = \alpha^0_T
            + \sum_{1\leq t \leq T-1 \atop p_t \geq \lambda} \gamma'_{T-t} (\bar{\alpha}_t - F_t(\bar{\alpha}_t)) 
            +  \ind{ p_{T-1} < \lambda }(\bar{\alpha}_{T-1} - \alpha^0_{T-1})\geq \alpha^0_T,
$$
because $\bar{\alpha}_t \geq  F_t(\bar{\alpha}_t)$ by \eqref{eqn:superunif}. This finishes the proof.
}
\end{proof}

We now establish a result for the functionals $\mathcal{T}(\cdot)$ and $\mathcal{T}_j(\cdot)$, $j\geq 1$, 
which are used by the adaptive procedures $\AAbonf$ and $\AALORD$, respectively.

\begin{lemma}\label{lem:Tronde}
Consider the functional $\mathcal{T}(\cdot)$ defined by \eqref{Tronde_def} for some realization of the $p$-values and some $\lambda\in [0,1)$. Then for any sequence $(\gamma_t)_{t\geq 1}$ and for any $T\geq 1$, we have
\begin{align}
    \label{eqn:interm}
    \sum_{t=1}^T \ind{p_t \geq \lambda}\gamma_{\mathcal{T}(t)} = \sum_{t=1}^{\mathcal{T}(T+1)-1}\gamma_t.
\end{align}
In addition, for any $j \geq 1$, consider the $\tau_j$ defined by \eqref{eqn:tauj} and the functional $\mathcal{T}_j(\cdot)$ defined by \eqref{Tronde_def_gen}. Then for all $T \geq \tau_j+1$,
\begin{equation}
    \label{eqn:intermj}
    \sum_{1 \leq t \leq T \atop p_t \geq \lambda} \gamma_{\mathcal{T}_j(t)} = \sum_{t=1}^{\mathcal{T}_j(T+1)-1} \gamma_{t}. 
\end{equation}
\end{lemma}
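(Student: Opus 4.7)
My plan is to prove both identities by induction on $T$, exploiting the fact that both $\mathcal{T}(\cdot)$ and $\mathcal{T}_j(\cdot)$ are ``telescoping'' counters. The key observation, read off directly from \eqref{Tronde_def}, is the increment identity
\[
\mathcal{T}(t+1) - \mathcal{T}(t) = \ind{p_t \geq \lambda},\qquad t\geq 1,
\]
together with the initial condition $\mathcal{T}(1)=1$. The analogue for \eqref{Tronde_def_gen} is
\[
\mathcal{T}_j(t+1) - \mathcal{T}_j(t) = \ind{p_t \geq \lambda}\quad\text{for all } t\geq \tau_j+1,\qquad \mathcal{T}_j(t)=0 \text{ for } t\leq \tau_j,
\]
with the initial condition $\mathcal{T}_j(\tau_j+1)=1$.

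For \eqref{eqn:interm}, I would start the induction at $T=1$: then the left-hand side is $\ind{p_1\geq\lambda}\gamma_1$, while the right-hand side is $\sum_{t=1}^{\ind{p_1\geq\lambda}}\gamma_t$ since $\mathcal{T}(2)=1+\ind{p_1\geq\lambda}$, so the two sides match. For the inductive step, I would split $\sum_{t=1}^{T+1}=\sum_{t=1}^T + (\text{term at }T+1)$ and distinguish two cases on the value of $\ind{p_{T+1}\geq\lambda}$. If this indicator equals $1$, the LHS gains $\gamma_{\mathcal{T}(T+1)}$, while on the RHS one has $\mathcal{T}(T+2)=\mathcal{T}(T+1)+1$ so the upper index grows by one and the RHS gains exactly $\gamma_{\mathcal{T}(T+2)-1}=\gamma_{\mathcal{T}(T+1)}$. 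If the indicator is $0$, both sides stay the same. The induction hypothesis closes the step.

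For \eqref{eqn:intermj}, I would run exactly the same induction but starting at $T=\tau_j+1$. The base case is verified as above, using $\mathcal{T}_j(\tau_j+1)=1$ and $\mathcal{T}_j(\tau_j+2)=1+\ind{p_{\tau_j+1}\geq\lambda}$; the terms $t\leq\tau_j$ appearing on the LHS are harmless because $\mathcal{T}_j(t)=0$ there and the paper adopts the convention $\gamma_0=0$ (stated after \eqref{eqn:LORD}). The inductive step is identical to the one for \eqref{eqn:interm}, driven by the increment relation for $\mathcal{T}_j$ available once $t\geq\tau_j+1$.

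I do not anticipate any serious obstacle: both identities are essentially telescoping once the correct increment relations are in place. The only point requiring care is the boundary behaviour of $\mathcal{T}_j$ at $t=\tau_j$, which is handled by the convention $\gamma_0=0$ that makes the ``invisible'' contributions from $t\leq\tau_j$ drop out cleanly.
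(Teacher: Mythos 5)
Your proof is correct: the increment relations $\mathcal{T}(t+1)-\mathcal{T}(t)=\ind{p_t\geq\lambda}$ (with $\mathcal{T}(1)=1$) and $\mathcal{T}_j(t+1)-\mathcal{T}_j(t)=\ind{p_t\geq\lambda}$ for $t\geq\tau_j+1$ (with $\mathcal{T}_j(\tau_j+1)=1$) do follow immediately from \eqref{Tronde_def} and \eqref{Tronde_def_gen}, the base cases check out, and the two-case inductive step closes cleanly; your handling of the terms $t\leq\tau_j$ via the convention $\gamma_t=0$ for $t\leq 0$ (stated after \eqref{eqn:LORD}) is exactly what is needed, and the paper's own proof uses the same convention implicitly when it discards those terms. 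The route is, however, somewhat different from the paper's: rather than inducting on $T$, the paper shifts the summation index (writing $\gamma_{\mathcal{T}(t)}=\gamma_{\mathcal{T}(t+1)-1}$ when $p_t\geq\lambda$) and then observes that $\mathcal{T}(\cdot)$ is a bijection from $\{1\}\cup\{2\leq t\leq T+1: p_{t-1}\geq\lambda\}$ onto $\{1,\dots,\mathcal{T}(T+1)\}$ (and analogously for $\mathcal{T}_j$ starting from $\tau_j+1$), which turns the left-hand sum into $\sum_{t=1}^{\mathcal{T}(T+1)-1}\gamma_t$ in one step. Both arguments rest on the same unit-increment property; the bijection argument is a one-shot re-indexation, while your induction is more mechanical, requires no explicit description of the image of the re-indexation map, and absorbs the boundary behaviour at $t=\tau_j$ automatically into the base case. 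Either proof is acceptable.
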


\begin{proof}
Let us first prove \eqref{eqn:interm}. Since $\mathcal{T}(t+1) = \mathcal{T}(t)+1$ when $p_t \geq \lambda$ from definition \eqref{Tronde_def}, 
we can write
\begin{align*}
    \sum_{t=1}^T \ind{p_t\geq\lambda}\gamma_{\mathcal{T}(t)}
    = \sum_{t=1}^T \ind{p_t\geq\lambda}\gamma_{\mathcal{T}(t+1)-1} 
    = \sum_{t=2}^{T+1} \ind{p_{t-1}\geq\lambda}\gamma_{\mathcal{T}(t)-1}.
\end{align*}
Additionally, it is clear that $\mathcal{T}(\cdot)$ is a bijection 
mapping $\lbrace 1, \, 2 \leq t \leq T+1 \:: p_{t-1} \geq \lambda \rbrace$ 
into $\lbrace 1, 2, \dots, \mathcal{T}(T+1) \rbrace$. 
Hence, the latter sum can be rewritten as 
$\sum_{t=2}^{\mathcal{T}(T+1)} \gamma_{t-1}=\sum_{t=1}^{\mathcal{T}(T+1)-1} \gamma_{t}$ which provides \eqref{eqn:interm}.

Second, for proving \eqref{eqn:intermj}, 
the crucial point is that according to the definition of $\mathcal{T}_j(T)$ \eqref{Tronde_def_gen}, 
the functional $\mathcal{T}_j: \{\tau_j+1, \dots\} \to \{1,\dots\}$ 
is a bijection from $\{\tau_j+1\}\cup \{t \in \{\tau_j+2, \dots,T+1\} \::\: p_{t-1} \geq \lambda\}$ to $\{1, \dots,\mathcal{T}_j(T+1)\}$,
for any $j \geq 1$ and $T \geq \tau_j+1$.
In particular, this entails
\begin{align*}
    \sum_{p_t \geq \lambda, 1 \leq t \leq T} \gamma_{\mathcal{T}_j(t)} 
    &= \sum_{p_t \geq \lambda, \tau_j+1 \leq t \leq T} \gamma_{\mathcal{T}_j(t)} 
    = \sum_{p_t \geq \lambda, \tau_j+1 \leq t \leq T} \gamma_{\mathcal{T}_j(t+1)-1} \\
    &= \sum_{p_{t-1} \geq \lambda, \tau_j+2 \leq t \leq T+1} \gamma_{\mathcal{T}_j(t)-1} 
    = \sum_{t=2}^{\mathcal{T}_j(T+1)} \gamma_{t-1} = \sum_{t=1}^{\mathcal{T}_j(T+1)-1} \gamma_{t} .
\end{align*}
This proves \eqref{eqn:intermj}.
\end{proof}

\section{Delayed spending approach}\label{sec:delay}

In this section we present another way of incorporating super-uniformity into OMT {which we refer to as \emph{delayed spending}} (in the sequel abbreviated as {DS}). We are grateful to Aaditya Ramdas for this suggestion. 

The new procedure is introduced in Section~\ref{sec:mainideadelaying}, while 
 	we highlight some mathematical and practical differences with our approach 
 	in Sections~\ref{sec:comparisondelaying} and~\ref{sec:comparSURE}. 
	In order to make the new procedure more efficient we also present a hybrid version in Section~\ref{sec:hybrid}. For simplicity, we restrict ourselves to  FWER controlling procedures {for discrete data} throughout this section.

 \subsection{Definition}\label{sec:mainideadelaying}

Let us start with the critical value $\alpha_1=\alpha \gamma_1$. While the OB procedure would choose $\alpha_2=\alpha \gamma_2$, the idea is that if the super-uniformity is strong enough to ensure $F_1(\alpha\gamma_1)+F_2(\alpha\gamma_1)\leq \alpha \gamma_1$, we can still use $\alpha_2=\alpha \gamma_1$ in the second round. This process can be continued until $F_1(\alpha\gamma_1)+\dots+F_{b_1+1}(\alpha\gamma_1)> \alpha \gamma_1$, in which case we switch to $\alpha_{b_1}=\alpha \gamma_2$, and so on. This way, we can incorporate the super-uniformity directly by 'delaying' the $\gamma$ sequence.

More formally, consider the setting of Section~\ref{sec:setting}, 
  where a null bounding family $\mathcal{F} = \{F_t, t \geq 1\}$ satisfying \eqref{eqn:superunif} is at hand. 
 The above strategy reads: 
\begin{align}
	\alphaDelay_t &= \alpha \gamma_{\delay(t)}, \mbox{ where } \delay(t)= \min\{j\geq 1\::\: b_j\geq t\},\:\:\:t\geq 1;\label{equdelay}\\
	b_j &= \max\Big\{T \geq b_{j-1}+1\::\: \sum_{t= b_{j-1}+1}^{T} F_t( \alpha \gamma_j)\leq \alpha \gamma_j\Big\},\:\:\:j\geq 1,\label{equbj}
\end{align}
(with the convention $b_0=0$ and $b_j=+\infty$ if the set in \eqref{equbj} is empty), so that $j=\delay(t)$ for $b_{j-1}+1\leq t\leq b_j$.
Thus, the {DS} method processes each sub-budget $\alpha\gamma_j$ one at a time, until the stopping rule in \eqref{equbj} is met and the transition to the next sub-budget $\alpha\gamma_{j+1}$ is made. {Since $\delay(t) \le t$ we can interpret $\alphaDelay_t=\alpha \gamma_{\delay(t)}$ as a  'slowed-down' variant of the original OB procedure. }

The procedure \eqref{equdelay} controls the online FWER under \eqref{eqn:superunif} because by \eqref{eqn:FWERcontrol3}, a sufficient condition is given by $\sum_{t=1}^T F_t(\alpha_t)\leq \alpha$, $T\geq 1$, and we indeed have 
$$
\sum_{t\geq 1} F_t(\alphaDelay_t)  \leq  \sum_{j\geq 1} \sum_{t= b_{j-1}+1}^{b_j}  F_t(\alpha \gamma_j) \leq \sum_{j\geq 1} \alpha \gamma_j\leq \alpha,
$$
by definition of the $b_j$'s.  
Note that the $b_j$'s are based on local averages (in time) of the $F_t(x)$'s at certain points $x$. This shares similarity to the approach of \cite{WestWolf1997} for offline FWER control.

\subsection{Comparison to SUR for real data}\label{sec:comparisondelaying}

{Both the {DS} and the SUR approaches use super-uniform rewarding. In a nutshell, the {DS} approach slows down the clock whereas the SUR approach augments the critical values of existing OMT procedures in an additive way.}
While a more detailed comparison can be found in the following Section~\ref{sec:comparSURE}, we may say that no method dominates the other one uniformly. 
The examples given in Section~\ref{sec:comparSURE} (delayed start, long/infinite delay, ineffective delay) suggest that the {DS} method could be more efficient at the very start of the stream but may suffer from conservativeness afterwards.

To assess the behaviour of the procedures in a practical setting, we reanalyse the IMPC data from Section~\ref{sec:real_data_appli} using the {DS} procedure defined by \eqref{equdelay} and \eqref{equbj} and compare it with the OB and $\rho$OB from Section~\ref{sec:FWERcontrolSU}. The results for FWER control at level $\alpha=0.2$ are displayed in Table~\ref{tabmicedelayed} and Figure~\ref{fig:compardelayed}. As Figure~\ref{fig:compardelayed} (right panel) shows,  the rejection process $\{R(T), T\geq 1\}$, is almost identical at the very start. However, for larger $T$, 
the delayed approach makes less discoveries than the $\rho$OB procedure and this, uniformly in time for this data set.  
This conservative behaviour is probably caused by under-utilization of wealth as described in Section~\ref{sec:comparSURE}. More specifically, the non-utilized component of $\alpha=0.2$ accumulates up to time $T=1500$ approximately to $0.077$, so that approximately 
$ 38.5\%$ of $\alpha=0.2$ are effectively neglected. Accordingly, the wealth plot displayed in Figure~\ref{fig:compardelayed} 
 shows that the delayed approach manages to spend more wealth than the OB procedure, but still deviates strongly from the nominal wealth curve. 
 {Figure~\ref{fig:compardelayedcriticalvalues} illustrates the same phenomenon for the critical values. (This replaces the old section D.2)}
 \begin{table}
\caption{\label{tabmicedelayed} Number of discoveries
    for SURE online Bonferroni \eqref{eqn:OBSURE} (bandwidth $h=10$) and the {DS} approach \eqref{equdelay}. Here $\delay(30\,000)=5083$ as defined in \eqref{equdelay}.
    These numbers are obtained by running the procedures on the first $30\,000$ genes for male (second row) and female (third row) mice in the IMPC data.}
\centering
\fbox{%
\begin{tabular}{*{4}{c}} 
 \hline
 Procedures & OB & $\rho$OB & Delayed  \\ 
 \hline
 $\#$ discoveries (male) & 229 & 377 & 293 \\
 $\#$ discoveries (female) & 267 & 481 & 355 \\
 \hline
\end{tabular}}
\end{table}

\begin{figure}[h!]
    \centering
    \begin{tabular}{cc}
    \makebox{\includegraphics[width=.49\linewidth]{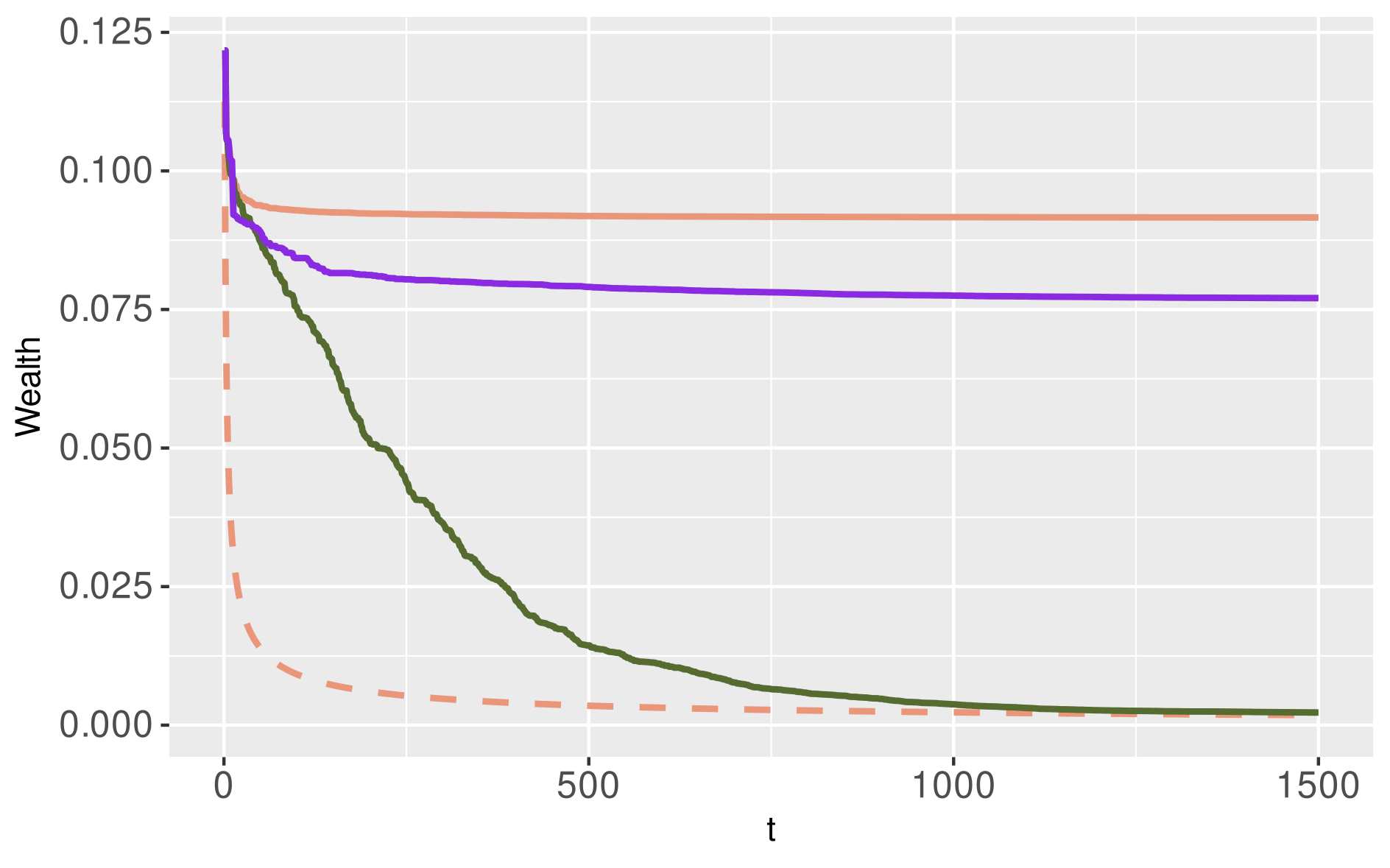}} & 
    \makebox{\includegraphics[width=.49\linewidth]{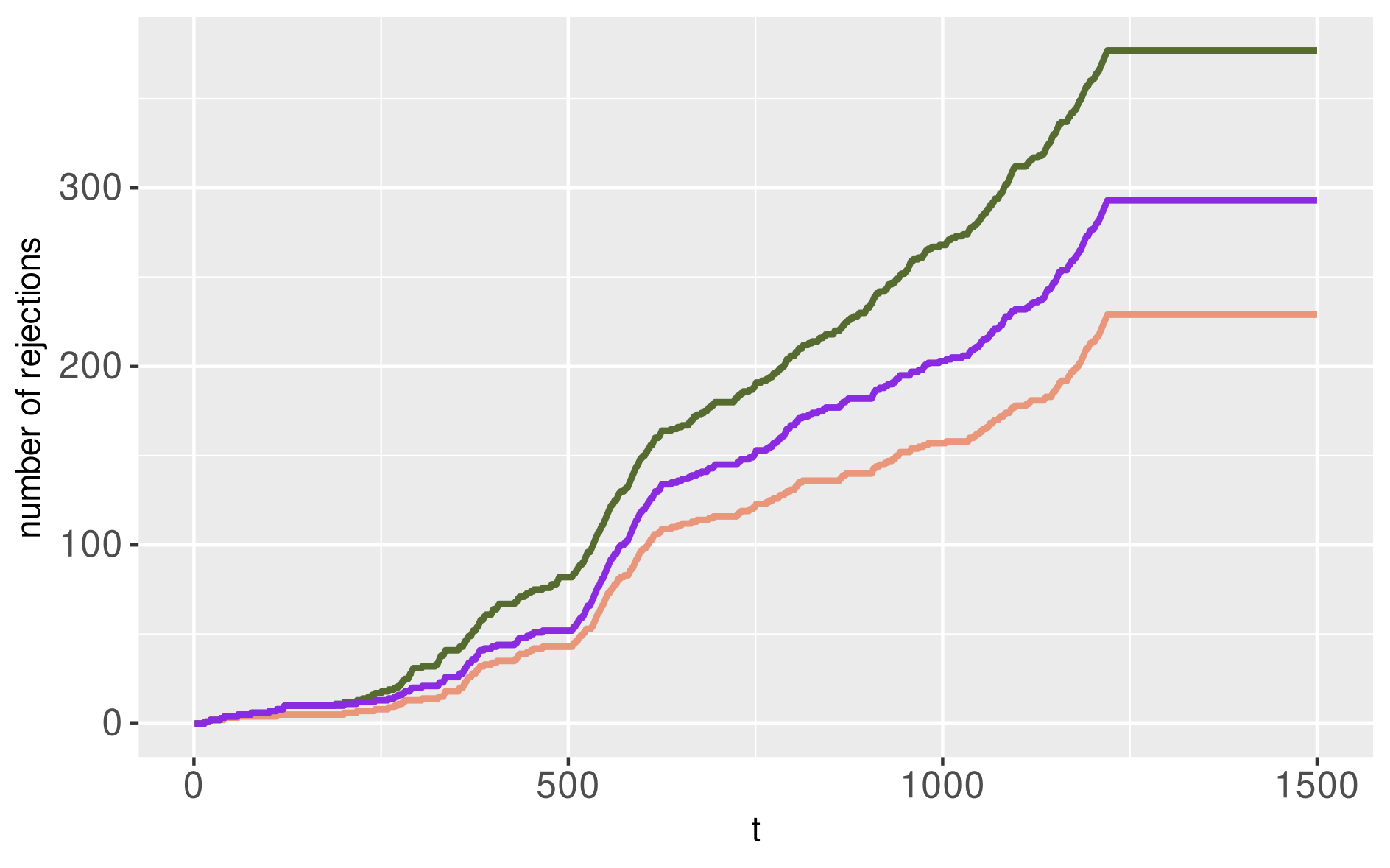} }
    \end{tabular}
        \vspace{-0.5cm}
    \caption{Comparison with {DS}. Left: nominal wealth for OB (dashed orange curve), effective wealth for OB (solid orange curve), effective wealth for $\rho$OB (solid green curve) and effective wealth for {DS} (solid purple curve), plot similar to  Figure~\ref{fig:nom_vs_eff_wealth_fwerproc}. Right: rejection numbers, cumulated over time, for the same procedures (same color code).   
     Both plots are computed from the male IMPC data.  \label{fig:compardelayed}
}
  \end{figure}

\begin{figure}[h!]
	\centering
	\begin{tabular}{cc}
		\makebox{\includegraphics[width=.49\linewidth]{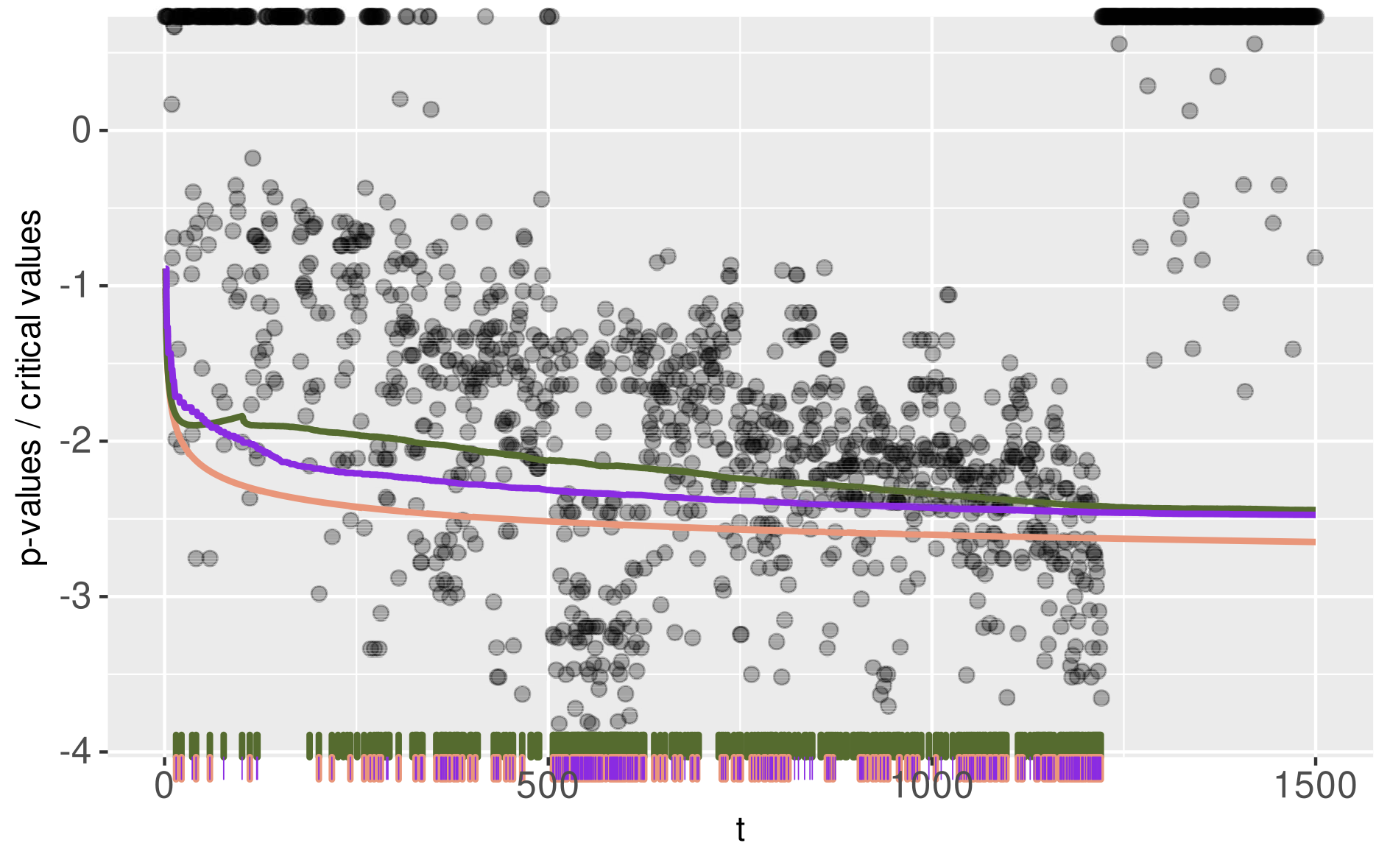}} & 
		\makebox{\includegraphics[width=.49\linewidth]{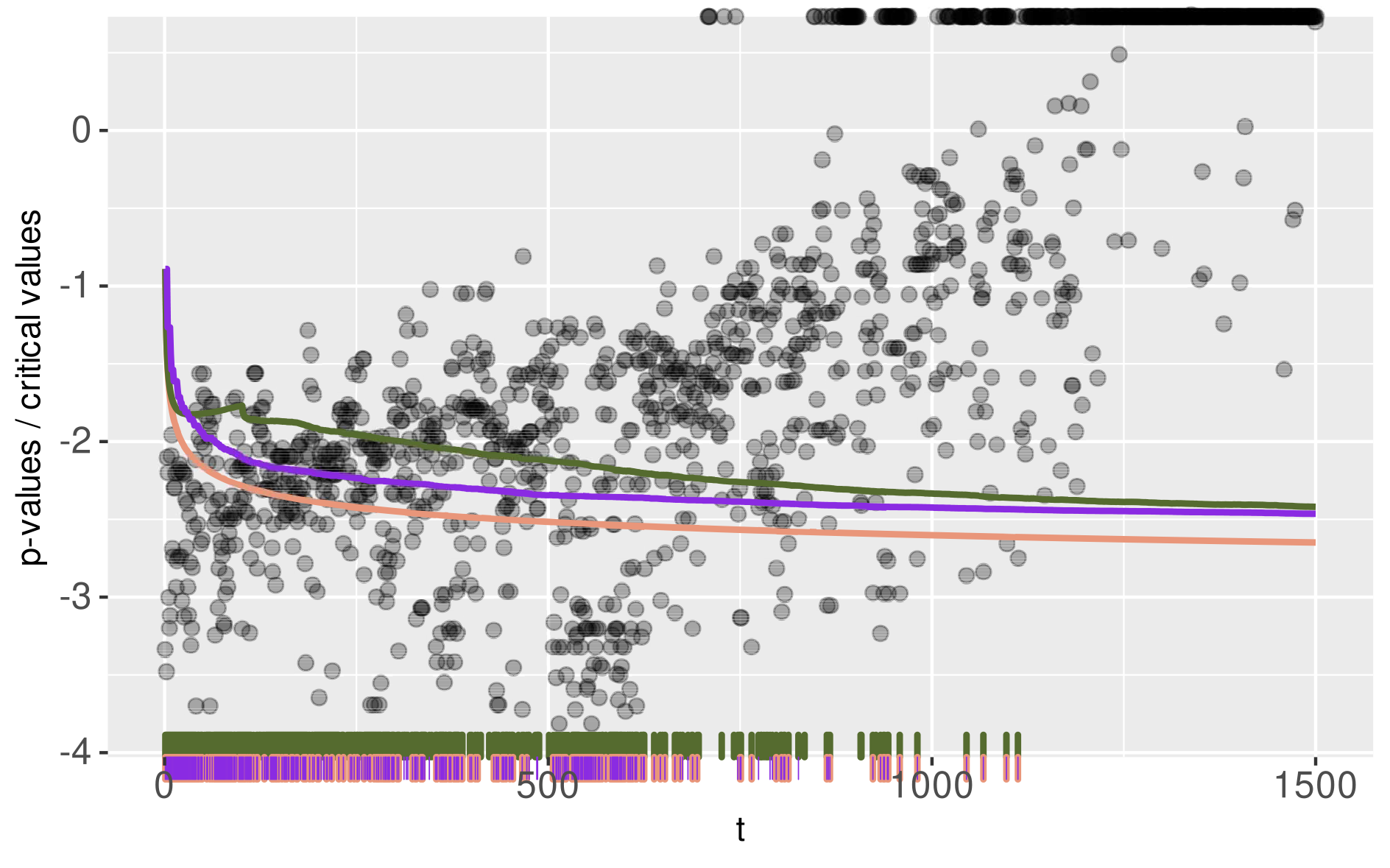} }
	\end{tabular}
	\vspace{-0.5cm}
	\caption{Critical values of OB (orange), $\rho$OB (green) and DS (purple) for the IMPC data (left panel is for male, right panel is for female). 
		\label{fig:compardelayedcriticalvalues}
	}
\end{figure}

 \subsection{Formal properties}\label{sec:comparSURE}
 
 {From the definition of the {DS}  approach we obtain the following comparison to OB and  $\rho$OB:}
 \begin{itemize}
 	\item the DS approach improves OB {uniformly} when $\gamma_t$ is nonincreasing: indeed $\delay(t)\leq t$, so that $\alphaDelay_t = \alpha \gamma_{\delay(t)}\geq \alpha \gamma_{t} = \alphabonf_t$.
 	
 	\item the DS approach does not depend on any other tuning parameter such as the bandwidth. By contrast, choosing this parameter badly in the $\rho$OB procedure may adversely affect its performance.
 	
 	\item the DS approach is another way of using the super uniformity reward. For instance, if there is no super uniformity reward,  that is, $F_t(\alpha \gamma_t)=\alpha \gamma_t$ for all $t$, then $b_t=t$ and the DS procedure reduces to OB.
 	
 \end{itemize}
 In addition, we have the following observations: 
 \begin{itemize}
 	\item Delayed start: If $F_t(x)=0$ for all $x<1$ and $t\leq T_0$ and $F_t(x)=x$ for $t\geq T_0+1$, the DS procedure is much more intuitive: it yields 
 	$b_1=T_0+1$ by \eqref{equbj} and $\alphaDelay_t=\alpha \gamma_{t-T_0}$ for $t\geq T_0+1$ which is the most natural way to proceed (just start the testing process at time $T_0+1$). By contrast, $\rho$OB (with rectangular kernel of bandwidth $r$) 
 	collects some reward in $\alphaOBSURE_{t}$, $1\leq t\leq T_0$, spends the reward in the following $r$ time points, but continues with
 	$\alphaOBSURE_{t}=\alpha \gamma_t$ for $t\geq T_0+r+1$. Hence, delaying spends the super-uniformity more intuitively than $\rho$OB in that situation.
 	More generally, in practice, we may therefore expect DS to be more efficient in the beginning of the stream. 
 	
 	\item Long/infinite delay: Conversely,  if there exists $T_0\geq 1$ such that for all $t\geq b_{T_0}+1$, $F_t(\alpha\gamma_{\mathcal{C}(T_0)+1})=0$, then we have $b_{T_0+1}=+\infty$ from \eqref{equbj}, 
 	which in turn implies $\delay(t)\leq T_0+1$. But for $t\geq b_{T_0}+1$, we have  $\delay(t)\geq T_0+1$ by \eqref{equdelay}. Hence, for $t\geq b_{T_0}+1$, $\delay(t)= T_0+1$ and 
 	the 'spending clock' freezes. On the one hand, we have $\alphaDelay_{t}=\alpha \gamma_{T_0+1}$ so the delaying works perfectly to effectively improve the OB critical values. On the other hand, this effectively stops the spending of any further budget and thus a large part of the wealth is left unspent.
 	This is in contrast to the SUR approach which uses a reward of an additive nature and thus always has a chance to spend the budget.
 	\item Under-utilization of wealth. The DS method processes each sub-budget $\alpha\gamma_j$ one at a time, until the transition to the next sub-budget $\alpha\gamma_{j+1}$ is made. In most cases, however,  the inequality \eqref{equbj} defining the transition time $b_j$ will be a strict inequality, meaning that when we move on to the next sub-budget we will have used $\sum_{t= b_{j-1}+1}^{b_j} F_t( \alpha \gamma_j) < \alpha \gamma_j$. Thus, this method does not exhaust the available sub-budgets. Moreover, since it neglects these 'alpha-gaps', they accumulate over time. 
 	This under-utilized wealth leads to unnecessary conservatism. Removing such  gaps was precisely the primary motivation for introducing our SUR method{, see Section~\ref{sec:sure}}.
 	
 	The most disadvantageous scenario occurs when $b_t=t$ for all $t\leq T$, so that the DS procedure reduces to the original OB procedure up to time $T$. As an example consider $\epsilon\in (0,\alpha \gamma_{T})$ for some large $T\geq 1$ and assume that the support of each $p_t$ is given by $S_t=\{\epsilon,A_t,\alpha \gamma_{t-1}\}\cup\{1\}$ (convention $\alpha \gamma_{0}=1$), where $A_t$ is a finite subset of $(\alpha \gamma_{t},\alpha \gamma_{t-1})$. Then we have 
 	$F_1(\alpha \gamma_1)+F_2(\alpha \gamma_1) = \alpha \gamma_1+\epsilon$ hence $b_1=1$, and more generally $F_{t}(\alpha \gamma_t)+F_{t+1}(\alpha \gamma_t) = \alpha \gamma_t+\epsilon$ for all $t\leq T$, which implies $b_t=t$ for all $t\leq T$. 
 	However, we know that OB does not allow to spend all the budget in such a discrete situation, see Figure~\ref{fig:nom_vs_eff_wealth_fwerproc}.
 	
 	A potential remedy for the conservatism of the DS method could be to  combine it with our SUR method. We describe such a hybrid approach in more detail in Section \ref{sec:hybrid}.
 	
 \end{itemize}
 
{In summary, it may be said that the delaying method is particularly appealing in terms of simplicity and elegance, while the primary aim of the SUR approach is on efficiency.}

%
%
%
%
 
 \subsection{Hybrid approach}\label{sec:hybrid}
 
 In this section, we describe a hybrid approach, combining the ideas underlying DS and SUR, in order to
 improve the utilization of wealth of DS.
 
 The method starts as follows: first let
 $
 \alphaHyb_1=\alpha \gamma_1,$  $\dots,$ $\alphaHyb_{b_1}=\alpha \gamma_1
 $
 as long as $F_1(\alpha \gamma_1)+\dots+F_{b_1}(\alpha \gamma_1)\leq \alpha \gamma_1$. Then consider the reward $\rho_1=\alpha \gamma_1-(F_1(\alpha \gamma_1)+\dots+F_{b_1}(\alpha \gamma_1))$ and let
 $
 \alphaHyb_{b_1+1}=\alpha \gamma_2+\rho_1,$  $\dots,$ $\alphaHyb_{b_2}=\alpha \gamma_2+\rho_1
 $
 as long as
 $F_{b_1+1}(\alpha \gamma_2+\rho_1)+\dots+F_{b_2}(\alpha \gamma_2+\rho_1)\leq \alpha \gamma_2 + \rho_1$. 
 More generally, let $b_0=0$, $\rho_0=0$, and for all $j\geq 1$, 
 \begin{align*}
 \alphaHyb_{b_{j-1}+1}&=\alpha \gamma_j + \rho_{j-1}, \dots, \alphaHyb_{b_{j}}=\alpha \gamma_j + \rho_{j-1}\\
 b_j &=\max\left\{T\geq 1\::\: \sum_{t= b_{j-1}+1}^{T} F_t( \alpha \gamma_j + \rho_{j-1})\leq \alpha \gamma_j + \rho_{j-1}\right\}\\
 \rho_j &= \alpha \gamma_j + \rho_{j-1} - \left(\sum_{t= b_{j-1}+1}^{b_j} F_t( \alpha \gamma_j + \rho_{j-1})\right)
 .
 \end{align*}
 Then the online FWER control holds because for all $j_0\geq 1$, we have
 \begin{align*}
 \sum_{t \geq 1} F_t(\alphaHyb_t) &= \sum_{j=1}^{j_0} \left(\sum_{t= b_{j-1}+1}^{b_j} F_t( \alpha \gamma_j + \rho_{j-1})\right) + F_{b_{j_0}+1}( \alpha \gamma_{j_0+1} + \rho_{j_0})\\
 &\leq \sum_{j= 1}^{j_0}  (\alpha \gamma_j + \rho_{j-1} -\rho_j) + \alpha \gamma_{j_0+1} + \rho_{j_0} = \sum_{j= 1}^{j_0+1}\alpha \gamma_j\leq \alpha,
 \end{align*}
 because $ \sum_{j= 1}^{j_0} (\rho_{j-1} -\rho_j) = -\rho_{j_0}$ (telescopic sum). 
 When $\rho_t=0$ for all $t\geq 1$, the hybrid approach reduces to the DS approach. When  $b_j=j$, the hybrid approach reduces to the greedy SUR procedure. 
 
 We can also combine the  DS with smoothed SUR rewarding, which 
 gives us the following, slightly more involved, procedure. For some { SUR spending sequence}  $\gamma' = (\gamma'_t)_{t \geq 1}$ (nonnegative and such that $\sum_{t \geq 1} \gamma'_t \leq 1$),  let $b_0=0$, $\rho_0=0$ and for all $j\geq 1$, 
 \begin{align*}
 \alphaHyb_{b_{j-1}+1}&=\alpha \gamma_j + \sum_{i=1}^{j-1}\gamma'_{j-i}\rho_{i}, \:\:\:\dots,\:\:\: \alphaHyb_{b_{j}}=\alpha \gamma_j + \sum_{i=1}^{j-1}\gamma'_{j-i}\rho_{i}\\
 b_j &=\max\left\{T\geq 1\::\: \sum_{t= b_{j-1}+1}^{T} F_t\left( \alpha \gamma_j + \sum_{i=1}^{j-1}\gamma'_{j-i}\rho_{i}\right)\leq \alpha \gamma_j + \sum_{i=1}^{j-1}\gamma'_{j-i}\rho_{i}\right\}\\
 \rho_j &= \alpha \gamma_j + \sum_{i=1}^{j-1}\gamma'_{j-i}\rho_{i} - \left(\sum_{t= b_{j-1}+1}^{b_j} F_t\left( \alpha \gamma_j + \sum_{i=1}^{j-1}\gamma'_{j-i}\rho_{i}\right)\right).
 \end{align*}
 The online FWER control holds because for all $j_0\geq 1$, we have
 \begin{align*}
 \sum_{t \geq 1} F_t(\alphaHyb_t) &\leq \sum_{j=1}^{j_0} \left(\sum_{t= b_{j-1}+1}^{b_j} F_t\left( \alpha \gamma_j +  \sum_{i=1}^{j-1}\gamma'_{j-i}\rho_{i}\right)\right) + \alpha \gamma_{j_0+1} +  \sum_{i=1}^{j_0}\gamma'_{j_0+1-i}\rho_{i}.
 \end{align*}
 Now  letting $a_T=\sum_{t=1}^T\gamma'_t$, we obtain
 \begin{align*}
 &\sum_{j=1}^{j_0} \left(\sum_{t= b_{j-1}+1}^{b_j} F_t\left( \alpha \gamma_j +  \sum_{i=1}^{j-1}\gamma'_{j-i}\rho_{i}\right)\right)\\
 &\leq 
 \sum_{j=1}^{j_0} a_{j_0-j+1}\left(\alpha \gamma_j +  \sum_{i=1}^{j-1}\gamma'_{j-i}\rho_{i} - \rho_j\right)+\sum_{j=1}^{j_0} (1-a_{j_0-j+1})\left(\alpha \gamma_j +  \sum_{i=1}^{j-1}\gamma'_{j-i}\rho_{i}\right)\\
 &= \sum_{j=1}^{j_0} \alpha \gamma_j + \sum_{j=1}^{j_0}   \sum_{i=1}^{j-1}\gamma'_{j-i}\rho_{i} -  \sum_{j=1}^{j_0} a_{j_0-j+1}  \rho_j\\
 &=\sum_{j=1}^{j_0} \alpha \gamma_j + \sum_{i=1}^{j_0-1} a_{j_0-i} \rho_i -  \sum_{j=1}^{j_0} a_{j_0-j+1}  \rho_j,
 \end{align*}
 and the latter is equal to
 $
 \sum_{j=1}^{j_0} \alpha \gamma_j - \sum_{j=1}^{j_0-1} \gamma'_{j_0-j+1} \rho_j  -  a_{1}  \rho_{j_0}
 = 
 \sum_{j=1}^{j_0} \alpha \gamma_j - \sum_{j=1}^{j_0} \gamma'_{j_0-j+1} \rho_j  .
 $
 {Combining this with the above bound for the FWER concludes the proof.}
 
 To compare the performance of the hybrid approach with the SUR and DS approaches, we use the simulation setting from  Section~\ref{sec:numerical_results} in the case where the signal is positioned at the beginning of the stream for each simulation run, which is the most favorable position of the signal for any procedure (see Section~\ref{sec:varyH1} for more details). {We consider both procedures based on the uniform kernel (bandwidth $h=100$) and  those based on the greedy spending sequence (denoted by `greedy').}
 
 {Figure~\ref{fig:hybrid} shows that taking super-uniformity into account is always beneficial, regardless of the specific approach used. 
 	The base DS method performs similarly to {the greedy $\rho$OB and the greedy hybrid}. 
 	{In contrast, the hybrid approach based on a uniform kernel improves DS, with performance close to  $\rho$OB. Hence, we conclude that 
 		closing the alpha-gaps by smoothing with an adequate kernel can make the hybrid approach as powerful as the smoothed $\rho$OB method. However, given the added complexity of the hybrid approach, we prefer to stick with the smoothed SUR.}

 	\begin{figure}[h!]
 		\centering
 		\includegraphics[width=.49\linewidth]{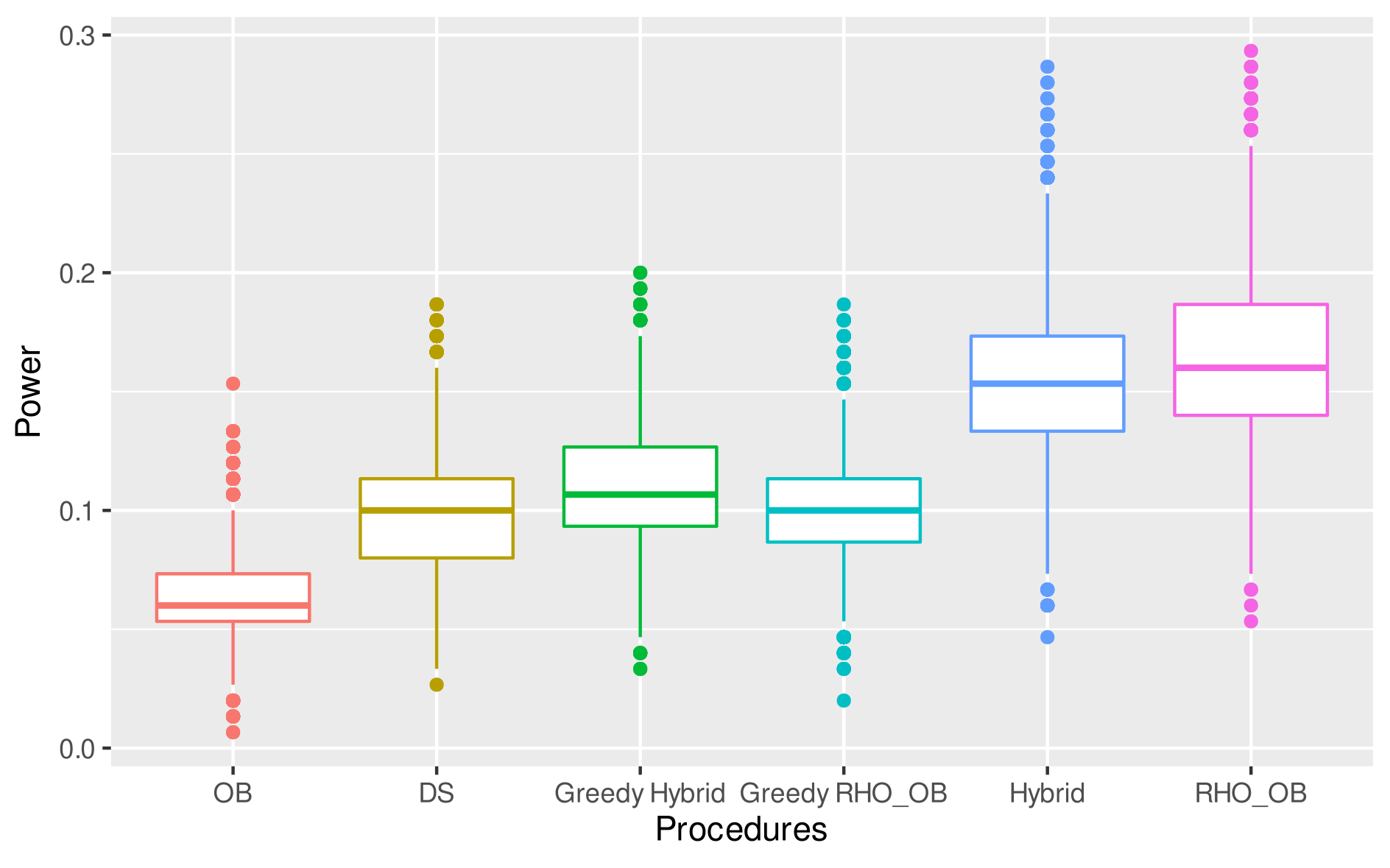}
 		\vspace{-0.5cm}
 		\caption{
 			Power 
 			of several online FWER controlling approaches for simulated data (see text): online Bonferroni (OB),  {Delayed spending} (DS), greedy hybrid,  greedy $\rho$OB, hybrid,  $\rho$OB.  \label{fig:hybrid}
 		}
 	\end{figure}
 	 
\section{Complements on generalized $\alpha$-investing rules }\label{sec:GAI++}

\subsection{SUR-GAI++ rules}\label{sec:defSURGAI}

GAI++ rules have been introduced in \cite{ramdas2017online} to control the (m)FDR. Here, we can extend them to our super-uniform setting as follows. Let us consider the following recursive constraints: for $t\geq 1$,
\begin{align*}
R_t&=\ind{p_t\leq \alpha_t}\\
W(t)&=W(t-1)-\phi_t + R_t \psi_t \:\:\:\mbox{‘wealth available at time $t+1$’}\\
\phi_t&\in [0, W(t-1)] \:\:\:\mbox{‘spent at time $t$’}\\
\psi_t&\leq b_t+ \min\left(\phi_t ,\phi_t/F_t(\alpha_t) -1\right)\:\:\:\mbox{‘reward at time $t$’}\\
\psi_t&\geq 0\\
b_t&= \alpha-W_0\ind{t\leq \tau_1},
\end{align*}
where $W(0)=W_0\in [0,\alpha]$. 
Any choice of $W_0$ and $\alpha_t,\phi_t,\psi_t$ that are $\mathcal{F}_{t-1}$ measurable and satisfying the above constraints defines a SUR-GAI++ procedure.
Here, the only difference with the original GAI++ rule is the presence of $F_t(\alpha_t)$ instead of $\alpha_t$ in the definition of $\Psi_t$.

\begin{proposition}
Consider the setting of Section~\ref{sec:setting} where a null bounding family $\mathcal{F}=\{F_t,t\geq 1\}$ 
    satisfying \eqref{eqn:condsuperunif} is at hand. Then any SUR-GAI++ procedure controls the mFDR at level $\alpha$. 
\end{proposition}

The proof is totally analogous to the one of Theorem~1 in \cite{ramdas2017online} (adapted to the mFDR, so without using any monotonicity). 

\subsection{GAI++ weighting}\label{sec:ramdasweighting}

Consider (continuous) $p$-values satisfying \eqref{eqn:superunif0}-\eqref{indep} and weights $w_t\geq 0$ that are $\mathcal{F}_{t-1}$ measurable for all $t$.
In Section~5 of \cite{ramdas2017online}, the following (implicit) GAI++ weighting scheme has been proposed: 
\begin{align*}
R_t&=\ind{p_t\leq w_t\alpha_t}\\
W(t)&=W(t-1)-\phi_t + R_t \psi_t\\
\phi_t&\in [0, W(t-1)]\\
\psi_t&\leq b_{t } + \min\left(\phi_t ,\phi_t/(w_t \alpha_t)  -1\right)\\
\psi_t&\geq 0\\
b_t&= \alpha-W_0\ind{t\leq \tau_1}.
\end{align*}
Note that the latter constraints are similar to the constraints given in Section~\ref{sec:defSURGAI} for $F_t(x)=(w_t x)\wedge 1$ (up to the ‘$\wedge 1$’ which makes the constraints here slightly more stringent) so that this weighting case is a particular SUR-GAI++ procedure.

For given raw weights $r_t\geq 0$ ($\mathcal{F}_{t-1}$ measurable), an explicit procedure which is used in  \cite{ramdas2017online}\footnote{This procedure is available at \url{https://github.com/fanny-yang/OnlineFDRCode}}, is obtained by choosing $\alpha_t$, $w_t$, $\phi_t$, $\psi_t$ as follows:
\begin{align*}
w_t&=r_t \wedge \frac{1}{1-b_t}\\
\phi_t&=\alpha_t = W_0 \gamma_{t} + \sum_{j\geq 1} \gamma_{t-\tau_j} \psi_{\tau_j}\\
\psi_t&= b_{t} + \min\left(\phi_t,1/w_t  -1\right).
\end{align*}
This choice is valid because $\alpha_t \leq W(t-1)$ for all $t$. Indeed,
\begin{align*}
W(t-1)=W_0+\sum_{i=1}^{t-1} (-\alpha_i + R_i \psi_i),
\end{align*}
so $\alpha_t \leq W(t-1)$ if and only if
$
\sum_{i=1}^t \alpha_i\leq W_0+\sum_{i=1}^{t-1} R_i \psi_i,
$
which is true.

\subsection{Our $\rho$-LORD is a SUR-GAI++ rule}\label{sec:ALORSisGAI}

We claim here that the procedure $\rho$-LORD corresponds to a SUR-GAI++ rule with the choice 
 $\phi_t=F_t(\alpha_t)$, $\psi_t=b_t$, and 
\begin{equation}\label{equ-alphat-GAI}
 \alpha_t = W_0 \gamma_t + (\alpha-W_0) \gamma_{t-\tau_1} + \alpha\sum_{j\geq 2} \gamma_{t-\tau_j} +  \sum_{i=1}^{t-1} \gamma'_{t-i} \rho_{i} \quad t\geq 1.
\end{equation}
To establish this, we check that all constraints given in Section~\ref{sec:defSURGAI} are satisfied. The only non-trivial one is $\phi_t=F_t(\alpha_t)\leq W(t-1)$. Let us now prove it. Recall that $W(t)=W(t-1)-\phi_t+R_t b_t$ and $W(0)=W_0$. Hence $\alpha_1=W_0 \gamma_1\leq W_0$. Moreover, for $t\geq 2$, 
$$
W(t-1)=W_0 +  ( \alpha-W_0) \ind{t -1\geq \tau_1} + \alpha \sum_{j\geq 2} \ind{t -1\geq \tau_j}-\sum_{i=1}^{t-1} F_i(\alpha_i).
$$  
So we have $\bar{\alpha}_t\leq W(t-1)$ for the critical value
\begin{align*}
\bar{\alpha}_t=& \left(\sum_{i=1}^t \gamma_i \right)W_0 +\sum_{i=1}^{t-1}\left( ( \alpha-W_0) \gamma_{i-\tau_1+1}\ind{i\geq \tau_1} + \alpha \sum_{j\geq 2} \gamma_{i-\tau_j+1}\ind{i\geq \tau_j}\right) \\
&-\sum_{i=1}^{t-1}
\left[ a_{t-i} F_{i}(\bar{\alpha}_i )
        + (1-a_{t-i}) \bar{\alpha}_i \right],
\end{align*}
by letting $a_t=\sum_{i=1}^t \gamma'_i$.
But now, we have that $\bar{\alpha}_t= \alpha_t $ for all $t$, for $\alpha_t$ defined by \eqref{equ-alphat-GAI}. {Indeed, this can be seen from Lemma~\ref{lem:gencomput}, applied with $\lambda=0$ and $\alpha^0_T $ being the LORD critical values.}

\section{Additional numerical experiments}\label{sec:addnum}

\subsection{Sample size}\label{sec:varyN}

Figure \ref{fig:varyN} illustrates results when the sample size $N$, 
i.e., the subjects number per group, takes values in the set $\{25, 50, \ldots, 150\}$. 
As expected, the power plots show that the detection problem becomes easier when $N$ increases. In fact, for large $N$ the power of all procedures converge to 1.}
We see that our rewarded procedures do well on the whole range of $N$ values and improve substantially on existing OMT procedures for small and moderate values of $N$, including our default value $N = 25$. 

\begin{figure}[h!]
    \begin{center}
    \begin{tabular}{cc}
    \includegraphics[width=.5\linewidth]{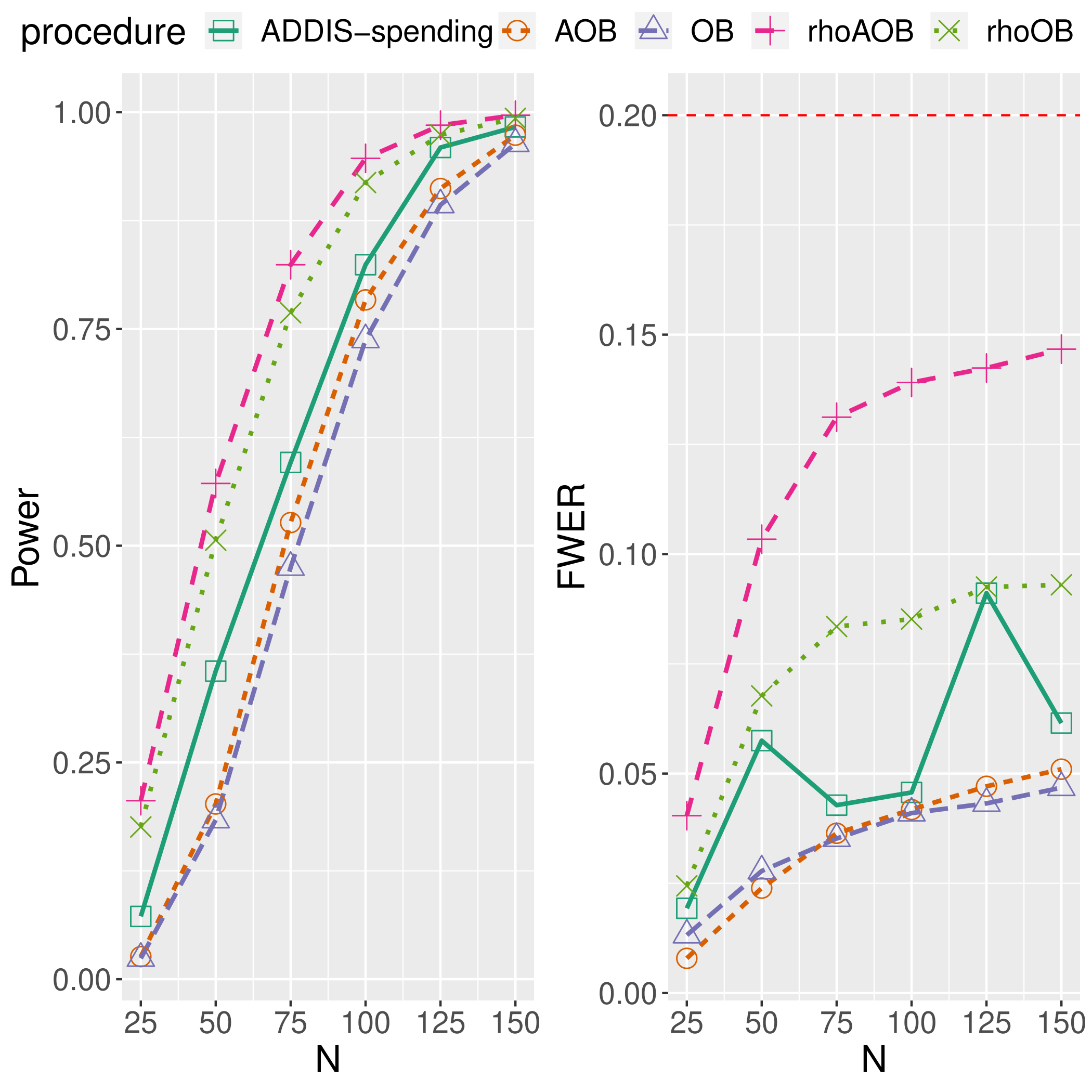}  & \includegraphics[width=.5\linewidth]{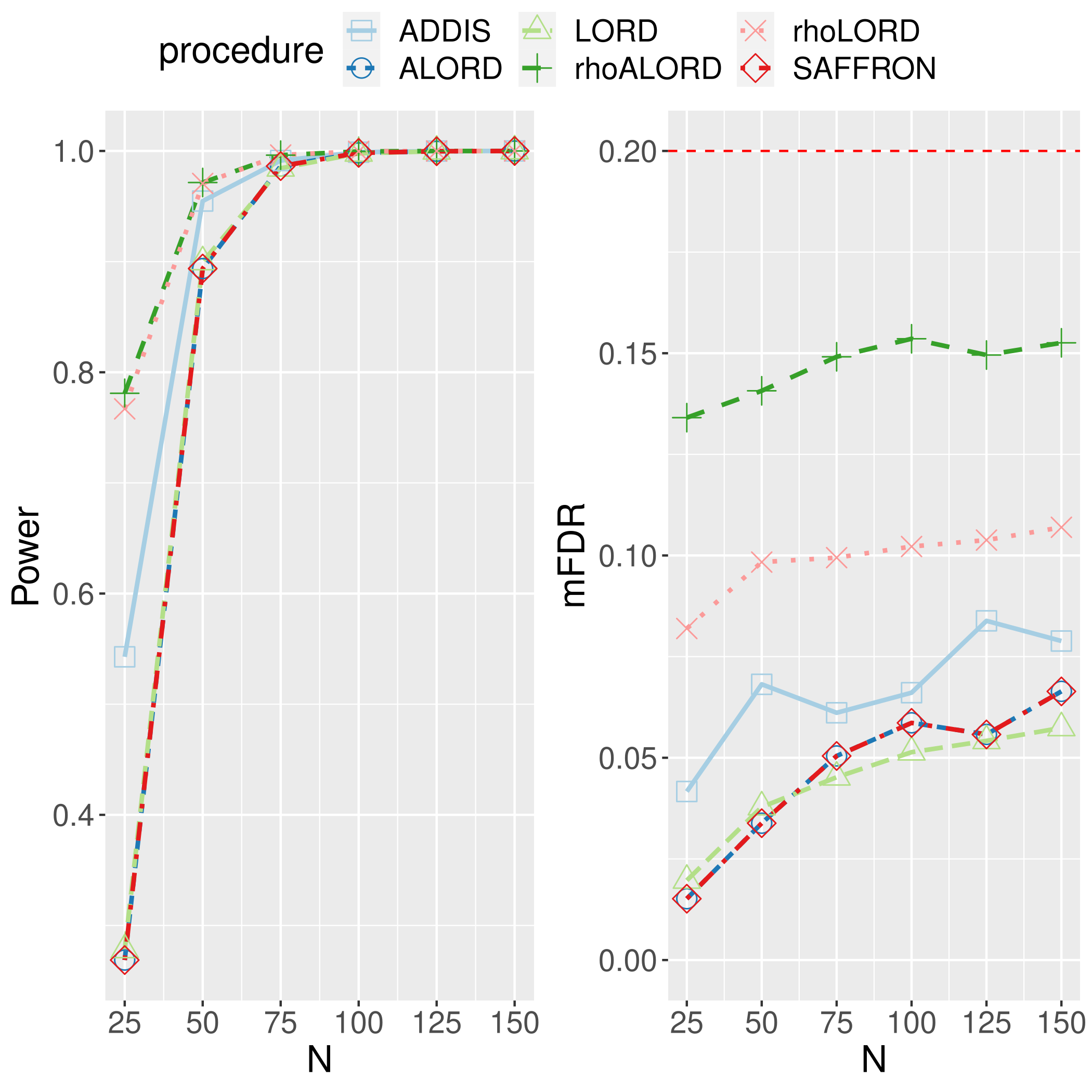}  
    \end{tabular}
    \end{center}
    \caption{Power and type I error rates of the considered procedures
     versus $N \in \{25, 50, \ldots, 150\}$, the number of subjects in the groups. }
    \label{fig:varyN}
\end{figure}

\subsection{Signal strength}\label{sec:varyp3}
Here, we vary the strength of the signal $p_3$ in the set $\{0.1, 0.2, \ldots, 1 \}$.
We see that the SUR procedures dominate their base counterparts, as expected. 
In addition, depending on the signal strength, the gain in power can be considerable. 
Also note that, perhaps surprisingly, all curves exhibit a decrease in power for $p_3$ near $1$. 
Since this happens even for the original OB procedure, this is not due to the super-uniformity reward, 
but could perhaps be caused by the behavior of the power function of multiple Fisher exact tests taken at different levels. 
\begin{figure}[h!]
    \begin{center}
    \begin{tabular}{cc}
    \includegraphics[width=.5\linewidth]{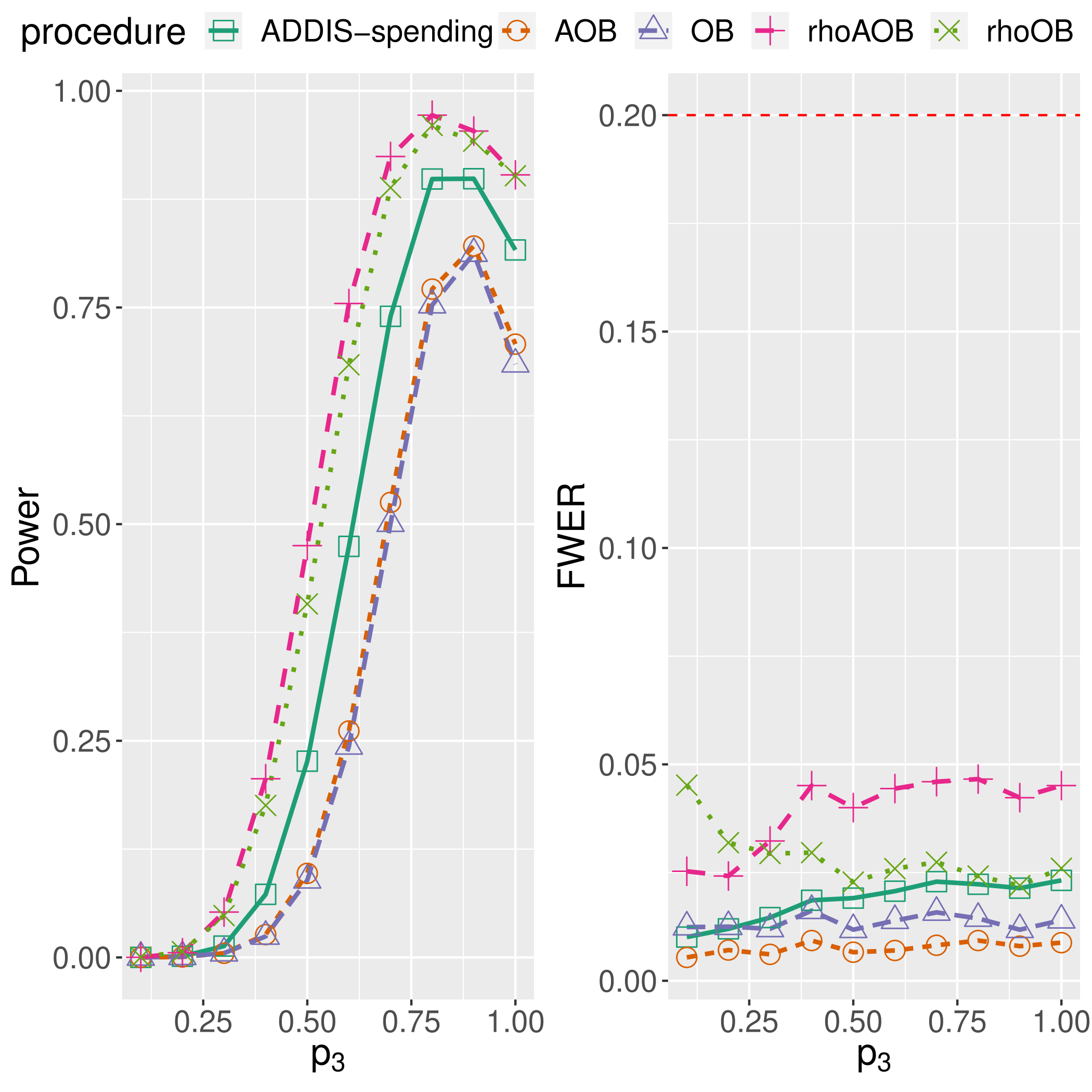}  & \includegraphics[width=.5\linewidth]{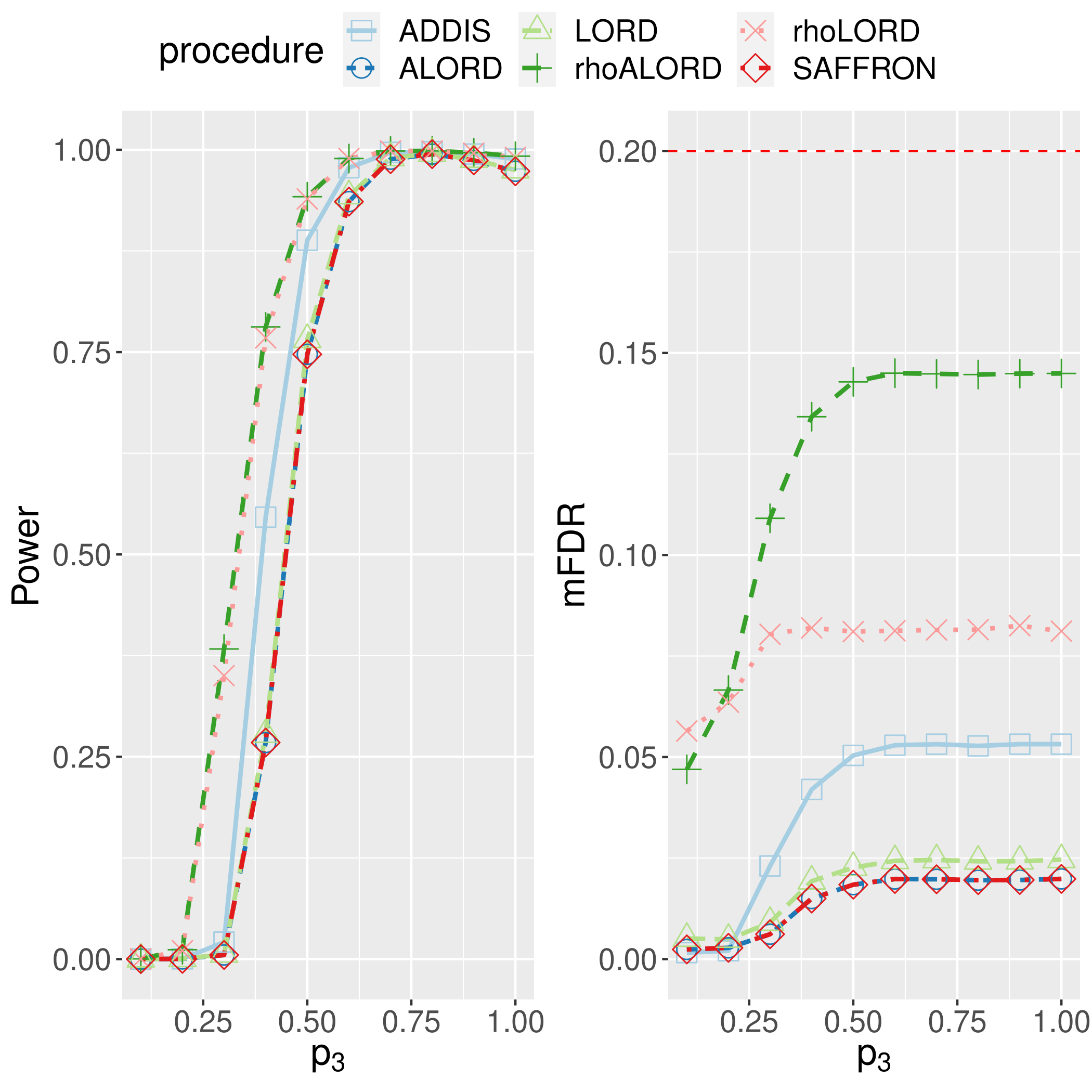}  
    \end{tabular}
    \end{center}
    \caption{Power and type I error rates of the considered procedures
     versus the strength of the signal $p_3 \in \{0.1, 0.2, \ldots, 0.9, 1 \}$.}
    \label{fig:varyp3}
\end{figure}

\subsection{Local alternatives}\label{sec:refinement}

As Figure \ref{fig:varyN} demonstrates, for a fixed value of the signal strength $p_3$, the detection problem becomes easier as $N$ increases, so that all procedures attain a power of 1. In this section we are interested in obtaining a more refined analysis of the various power curves  when $N$ is large. To this end, we introduce local alternatives, i.e. we now model $p_3$ as a function of the sample size $N$. To be more specific, we take $N \in \{5,10,\ldots,30\} \times 1000$ and set $p_3 = p_1 + \frac{1}{\sqrt{N}} $ for mFDR procedures and, $p_3 = p_1 + \frac{1.5}{\sqrt{N}} $ for FWER procedures, we fix $p_1 = p_2 = 0.1$, and generate simulated data as in Section~5.2. 
Figure \ref{fig:varyN_localalter} displays power and error rates for this data. Taking $N$ as a (crude) proxy for discreteness, 
{we observe that even with a low discreteness (say $N\leq 30000$) the SUR methods still provide some degree of improvement. }
{Finally, for FWER procedures, ADDIS-spending provides the best power performance over the whole range of the experiment. 
This might be explained by the setting causing very conservative nulls $p$-values (\emph{i.e.} very close to 1), thus allowing the discarding scheme to redistribute and spend a large part of the wealth on testing alternative hypotheses.	Using the SUR method along with the discarding scheme \citep{tian_onlinefdr_2019, tian_onlinefwer_2020} might provide an interesting avenue for further improvement, but this would define yet another class of procedures, which is outside of the scope of this paper.}

\begin{figure}[h!]
    \begin{center}
    \begin{tabular}{cc}
    \includegraphics[width=.5\linewidth]{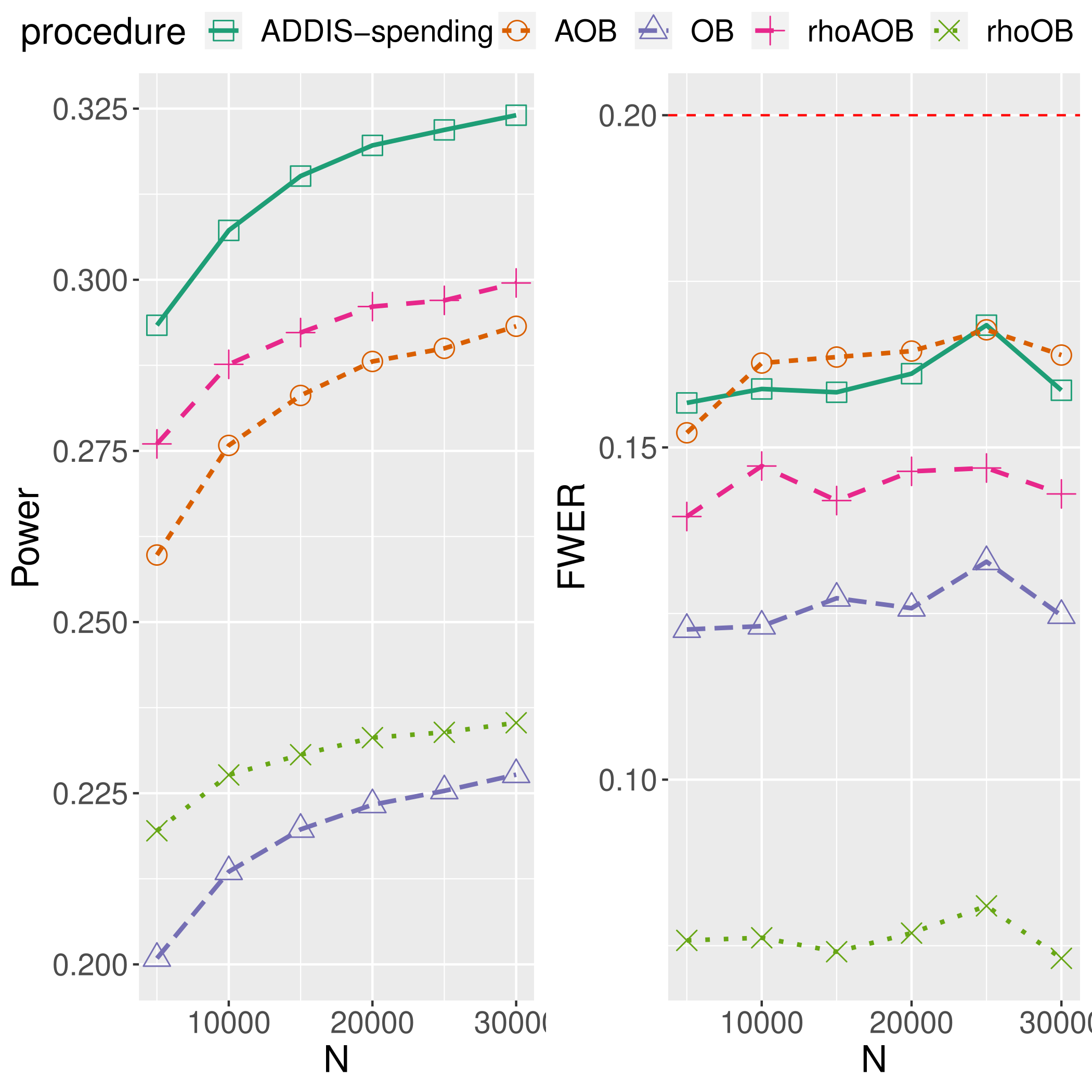}  & \includegraphics[width=.5\linewidth]{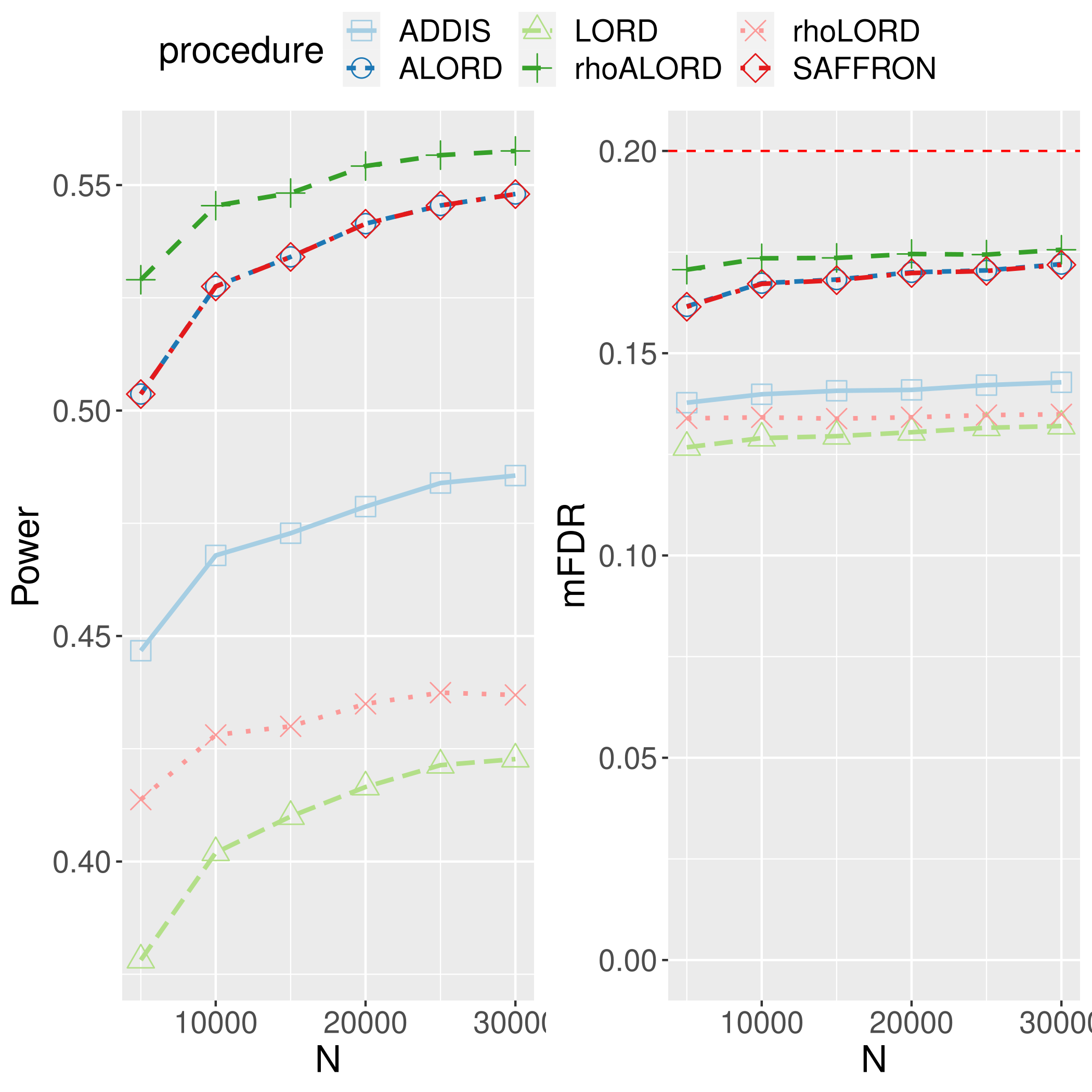}  
    \end{tabular}
    \end{center}
    \caption{Power and type I error rates of the considered procedures
     versus $N \in \{5,10,\ldots,30\} \times 1000$, with local alternatives. }   \label{fig:varyN_localalter}
\end{figure}

\subsection{Adaptivity parameter}\label{apenadapt}

We study the choice of $\lambda$ for the procedures using adaptivity. 
It seems that $\lambda = 0.5$ is a reasonable choice for the adaptive procedures.

\begin{figure}[h!]
    \begin{center}
    \begin{tabular}{cc}
        \includegraphics[width=.5\linewidth]{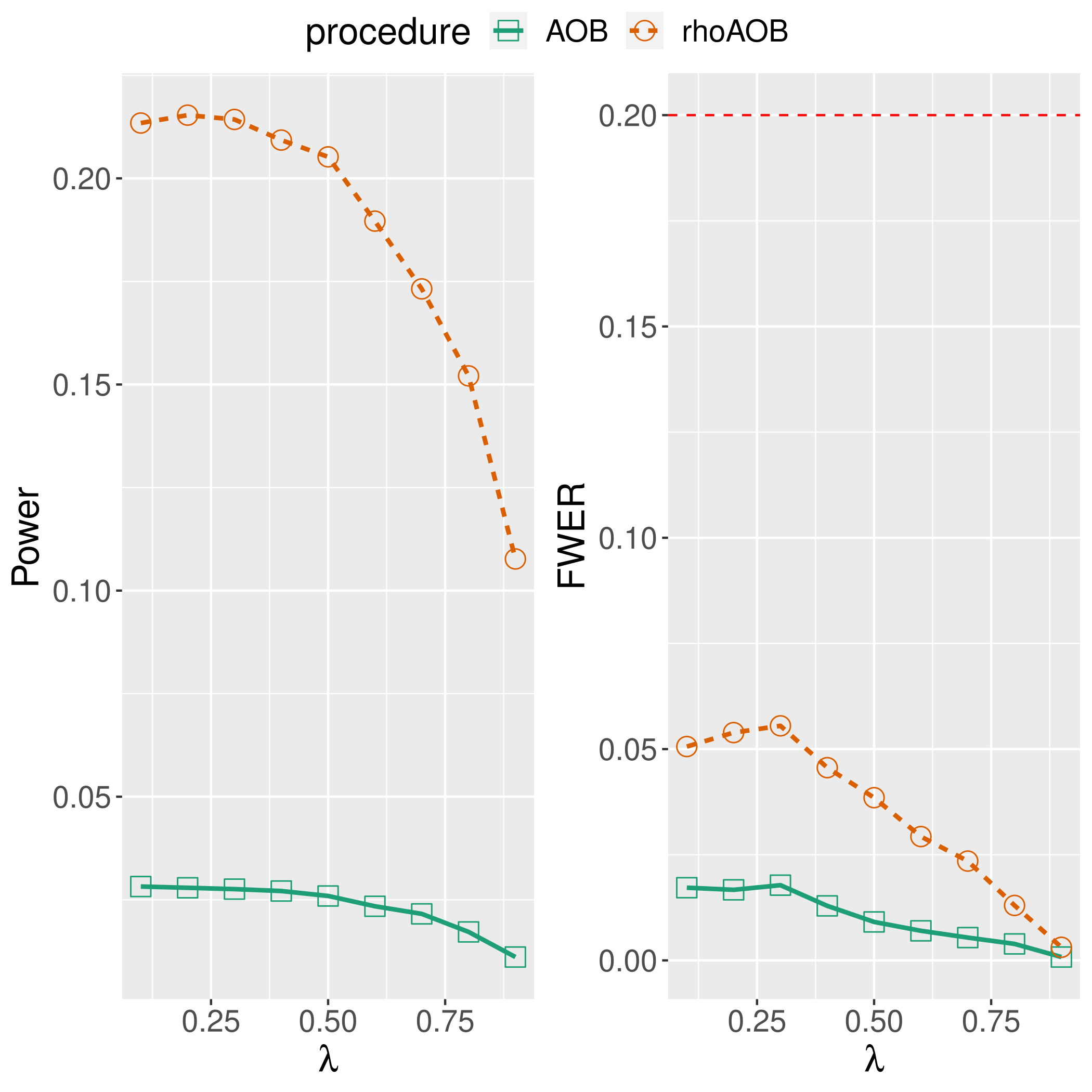} & \includegraphics[width=.5\linewidth]{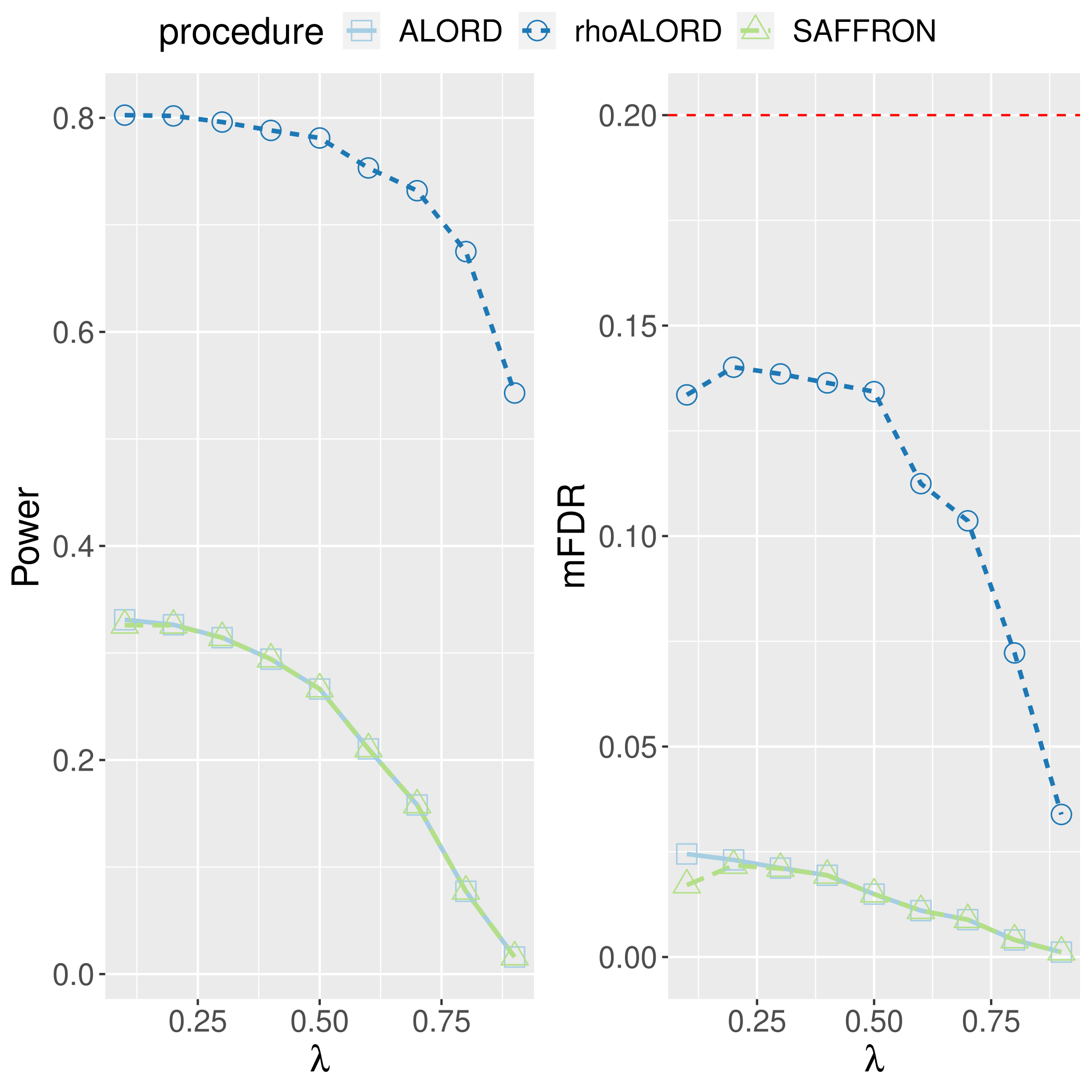}  
    \end{tabular}
    \end{center}
    \caption{Power and type I error rates, for the considered procedures, versus the adaptivity parameter $\lambda$.}
    \label{fig:varylambda}
\end{figure}

\subsection{Rectangular kernel bandwidth}\label{apenband}

Finally, we study the choice of the bandwidth parameter for the rectangular kernel used for the rewarded procedures.
As we can see, using a smaller bandwidth provides the best performance for the mFDR controlling rewarded procedures, 
whereas FWER controlling procedures require a larger bandwidth. 
The choices $h=100$ for FWER controlling procedures, and $h=10$ for mFDR controlling procedures seem 
reasonable although not necessarily optimal.
\begin{figure}[h!]
    \begin{center}
    \begin{tabular}{cc}
        \includegraphics[width=.5\linewidth]{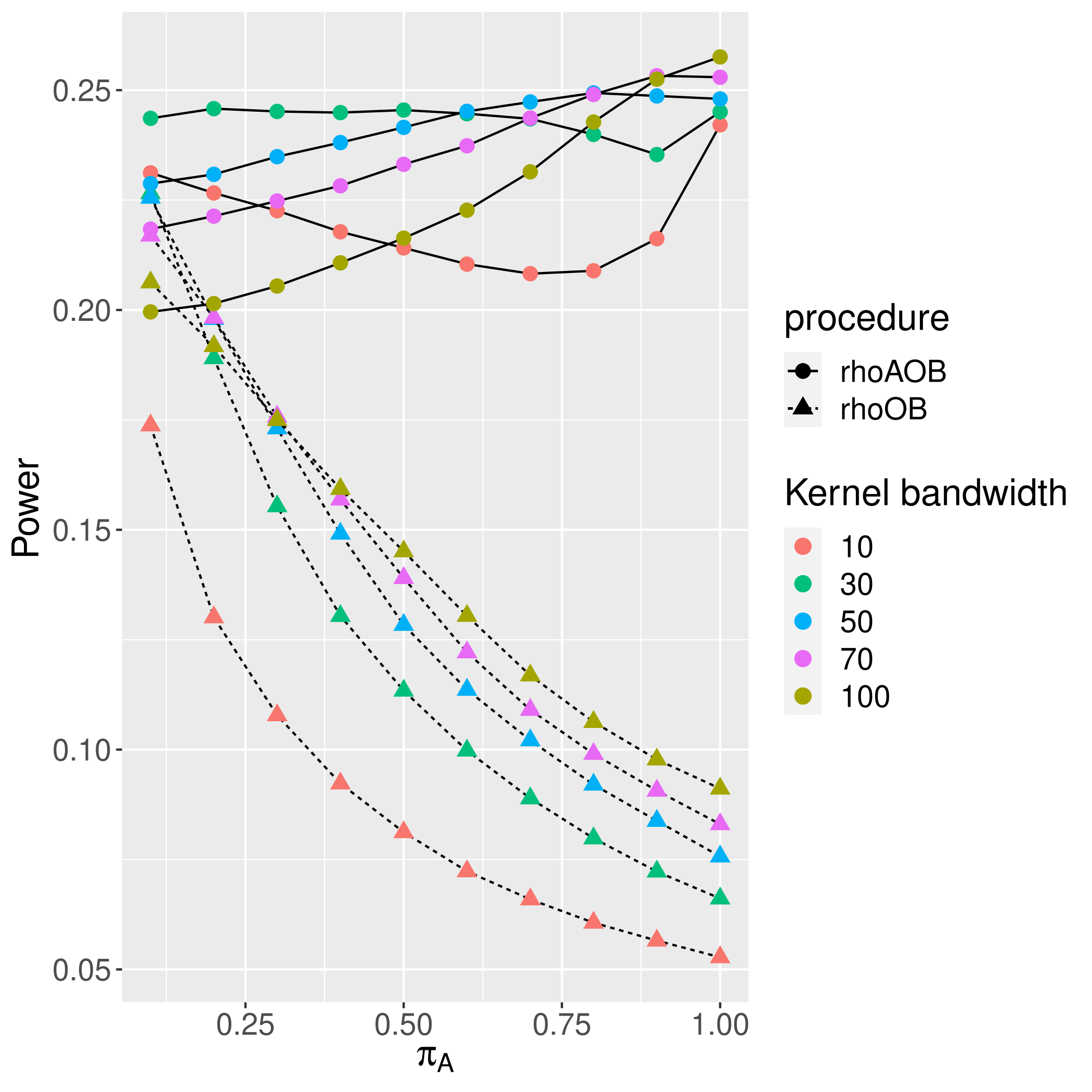} & \includegraphics[width=0.5\linewidth]{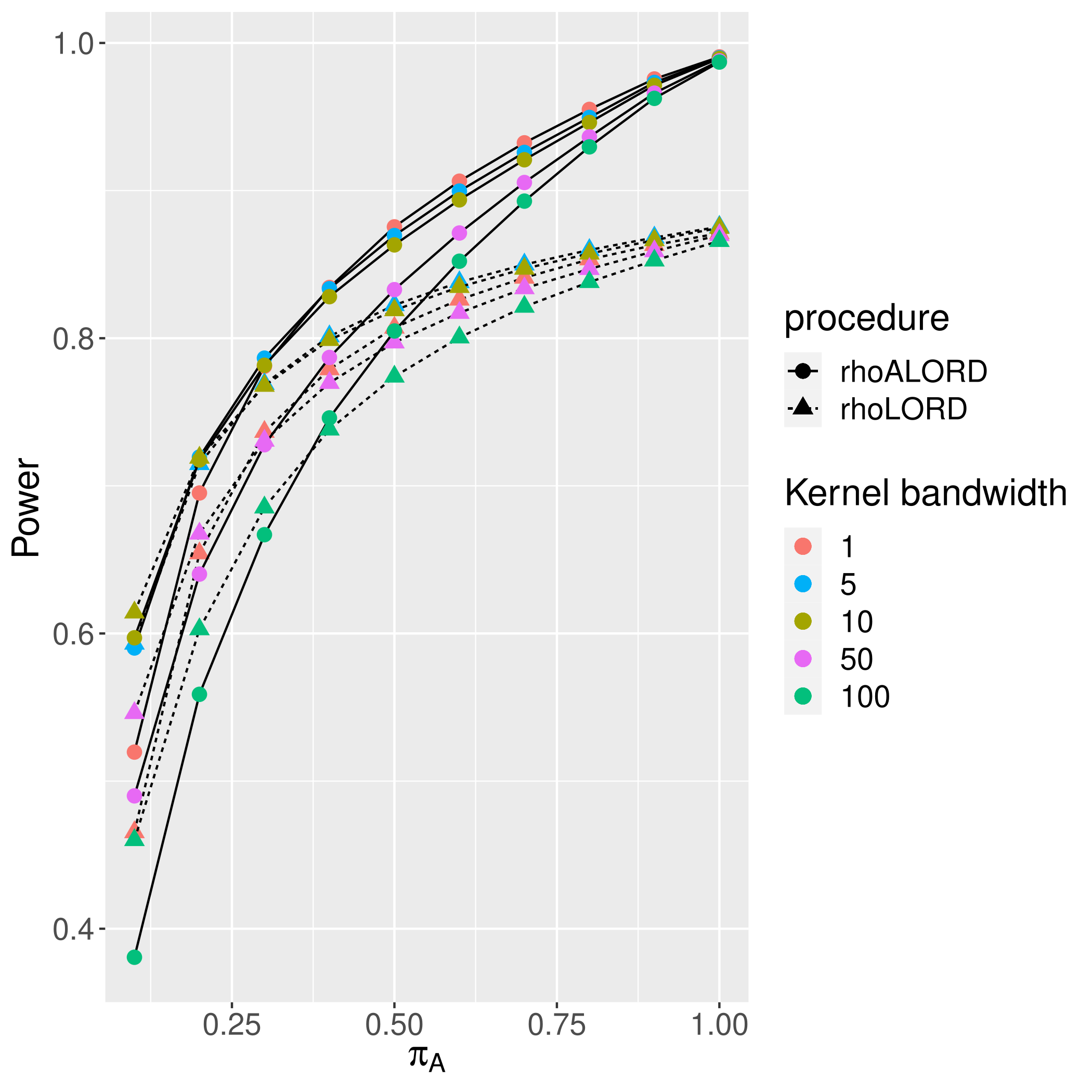} 
    \end{tabular}
    \end{center}
    \caption{Power for FWER (left) and mFDR (right) rewarded procedures 
    versus the proportion of signal $\pi_A$, for different kernel bandwidths.}
    \label{fig:varykernelbandwidth}
\end{figure}

\section{Additional figures for the analysis of IMPC data}\label{apenIPMCAdditional}
\subsection{Localization of small $p$-values}\label{apenIPMCLocalization}
Figures~\ref{fig:impc:male:pvalues} and~\ref{fig:impc:female:pvalues} show that small $p$-values 
mostly occur at the beginning of the data set, both for male and female mice.
\begin{figure}[h!]
	\begin{center}
		\begin{tabular}{cc}
			\includegraphics[width=1\linewidth]{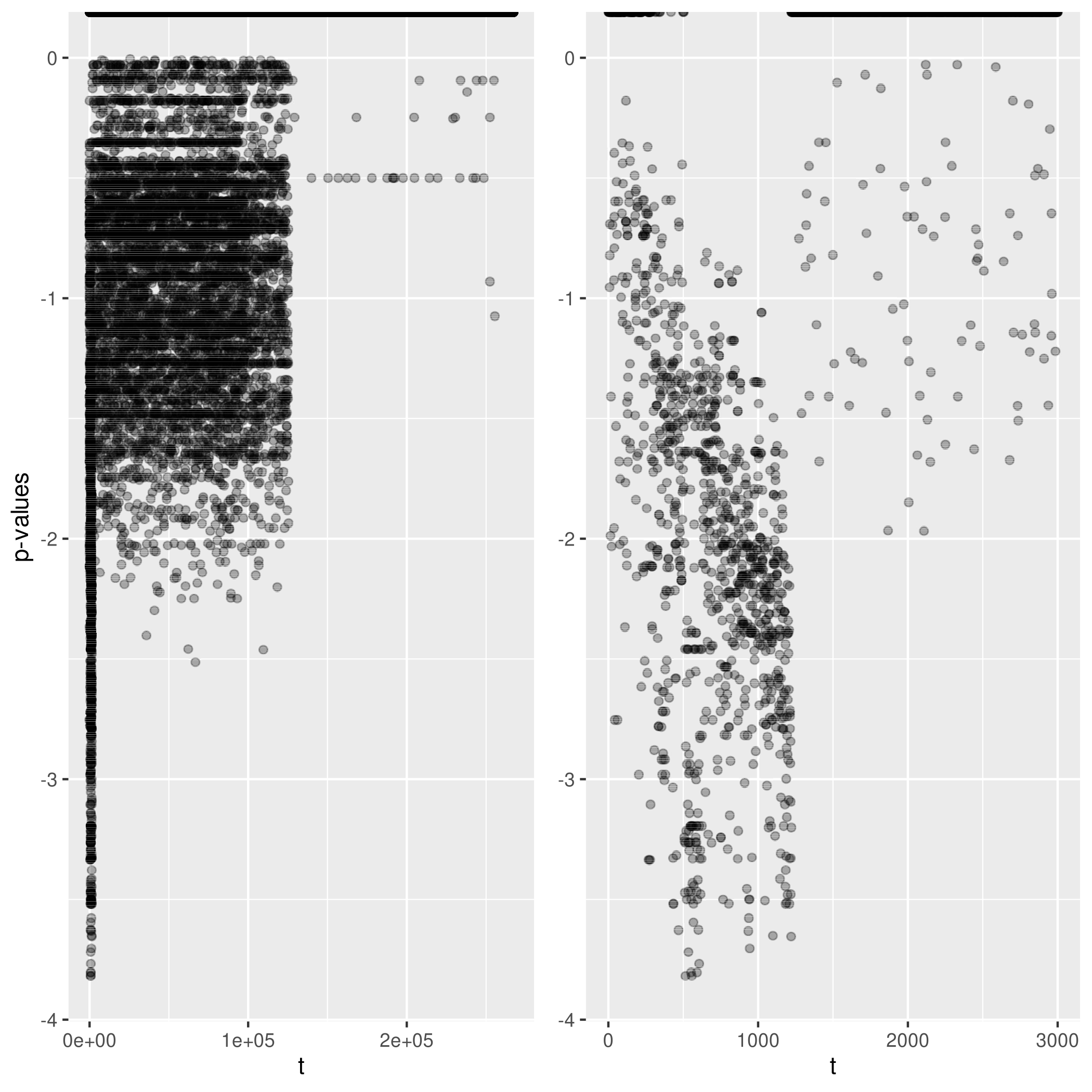}  
		\end{tabular}
	\end{center}
	\caption{$p$-values for male mice in the IMPC data of Section~\ref{sec:real_data_appli}. 
    The left panel presents all $p$-values, the right panel the first 3000 $p$-values. 
    The $p$-values have been transformed as in  Figure~\ref{fig:smoothedcvs}.}
	\label{fig:impc:male:pvalues}
\end{figure}
\begin{figure}[h!]
	\begin{center}
		\begin{tabular}{cc}
			\includegraphics[width=1\linewidth]{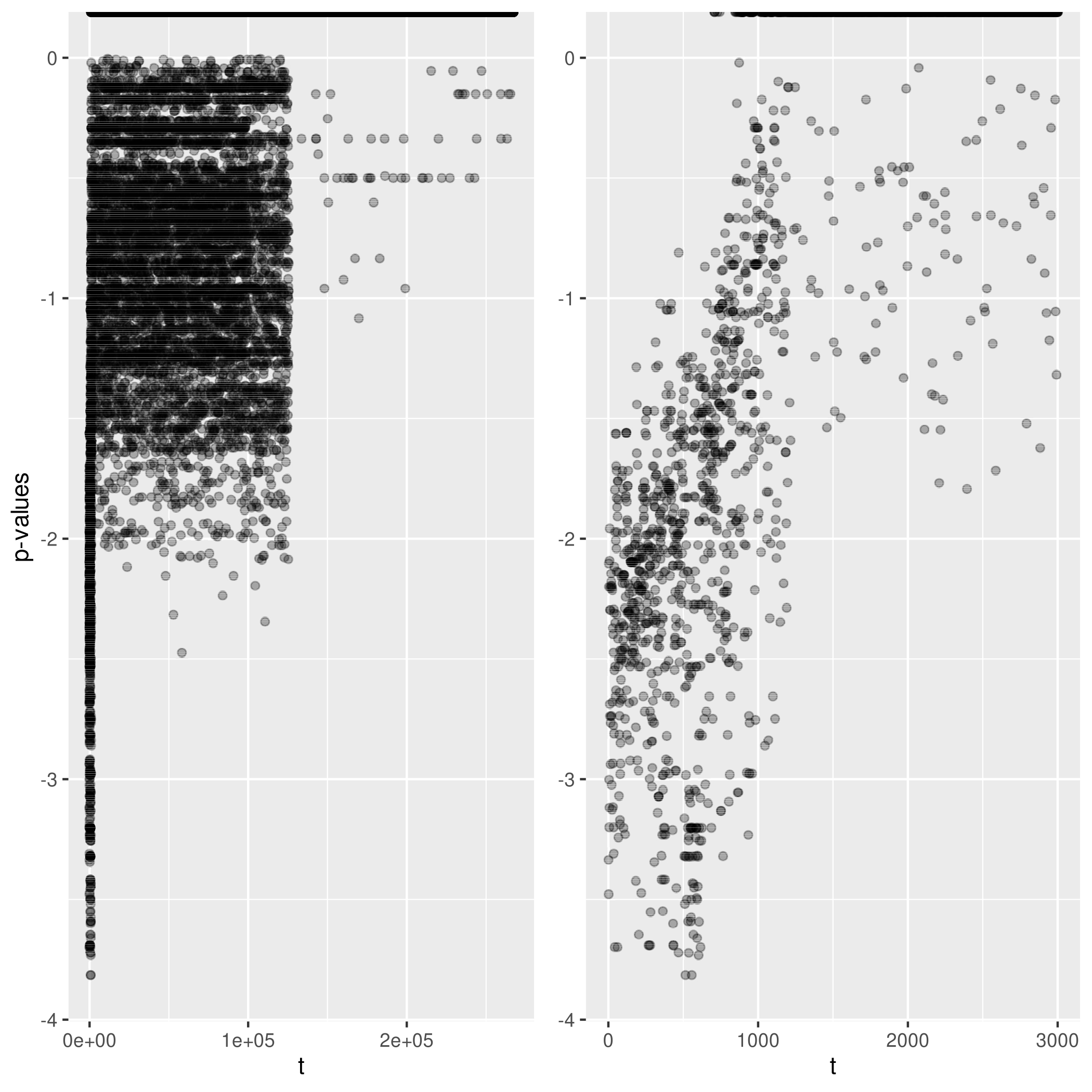}  
		\end{tabular}
	\end{center}
	\caption{$p$-values for female mice in the IMPC data of Section~\ref{sec:real_data_appli}. 
    The left panel presents all $p$-values, the right panel the first 3000 $p$-values. 
    The $p$-values have been transformed as in  Figure~\ref{fig:smoothedcvs}.}
	\label{fig:impc:female:pvalues}
\end{figure}

\subsection{Figures for female mice in the IMPC data}\label{apenIPMCfemale}
Figures~\ref{fig:mfdrimpcbasevsrewarded:fem} and~\ref{fig:fwerimpcbasevsrewarded:fem} 
display the critical values of the studied online procedures when applied to the IMPC data in the case of female mice.
\begin{figure}[h!]
    \begin{center}
    \begin{tabular}{cc}
    \includegraphics[width=.5\linewidth]{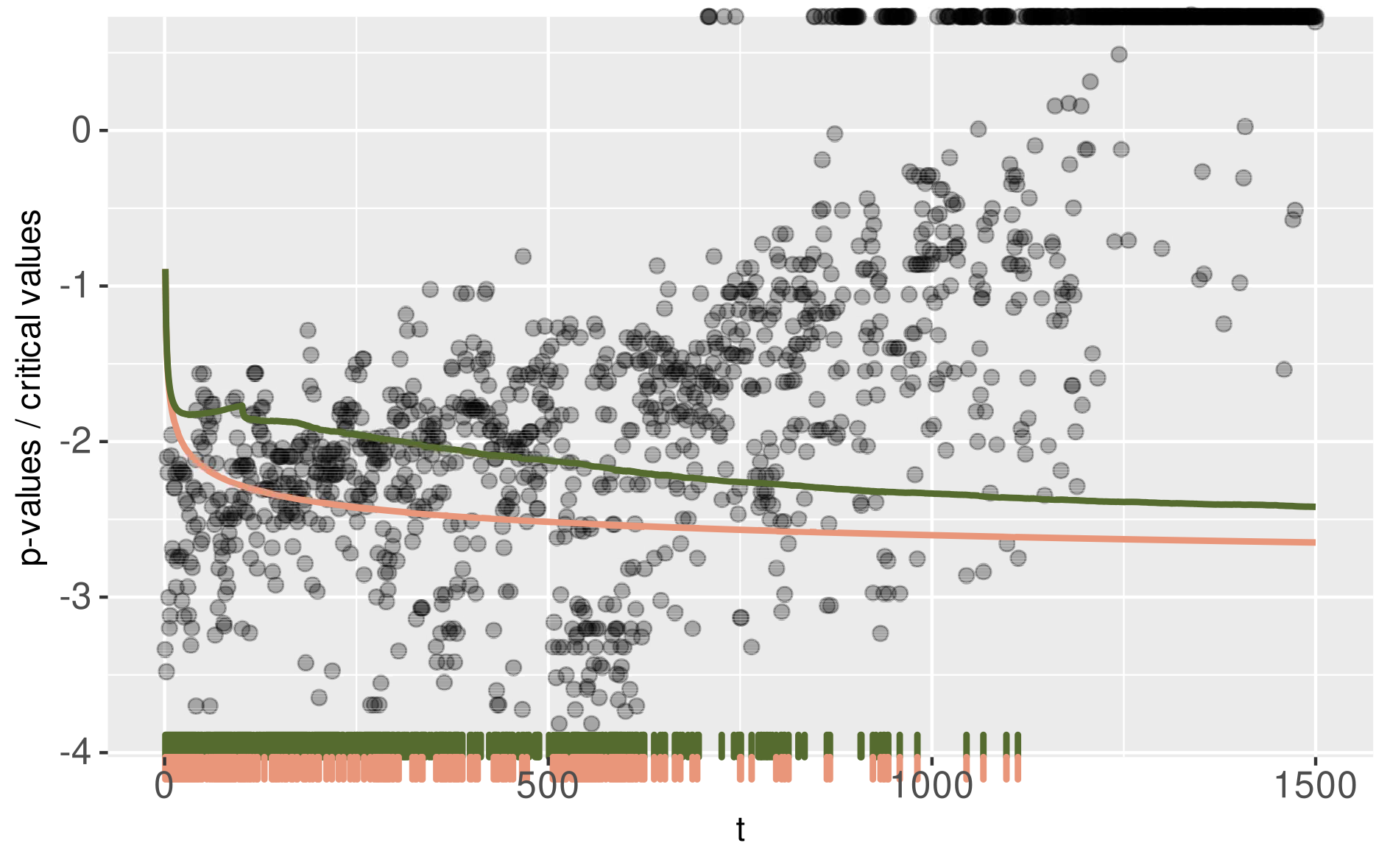} & \includegraphics[width=.5\linewidth]{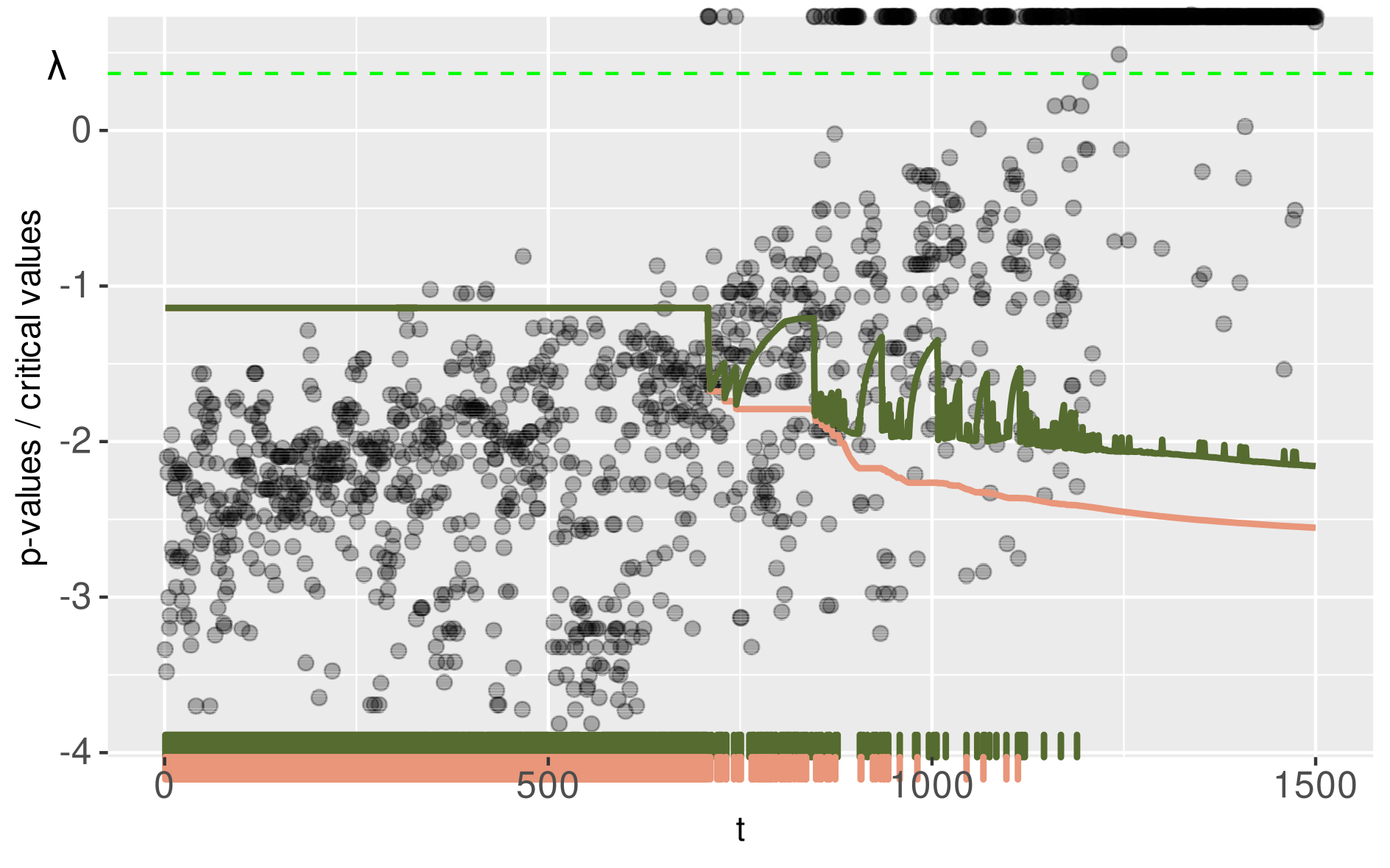}  
    \end{tabular}
    \end{center}
    \caption{Same as Figure~\ref{fig:mfdrimpcbasevsrewarded} but for female mice of IMPC data (see Section~\ref{sec:real_data_appli}). 
    }
    \label{fig:mfdrimpcbasevsrewarded:fem}
\end{figure}
\begin{figure}[h!]
    \begin{center}
    \begin{tabular}{cc}
    \includegraphics[width=.5\linewidth]{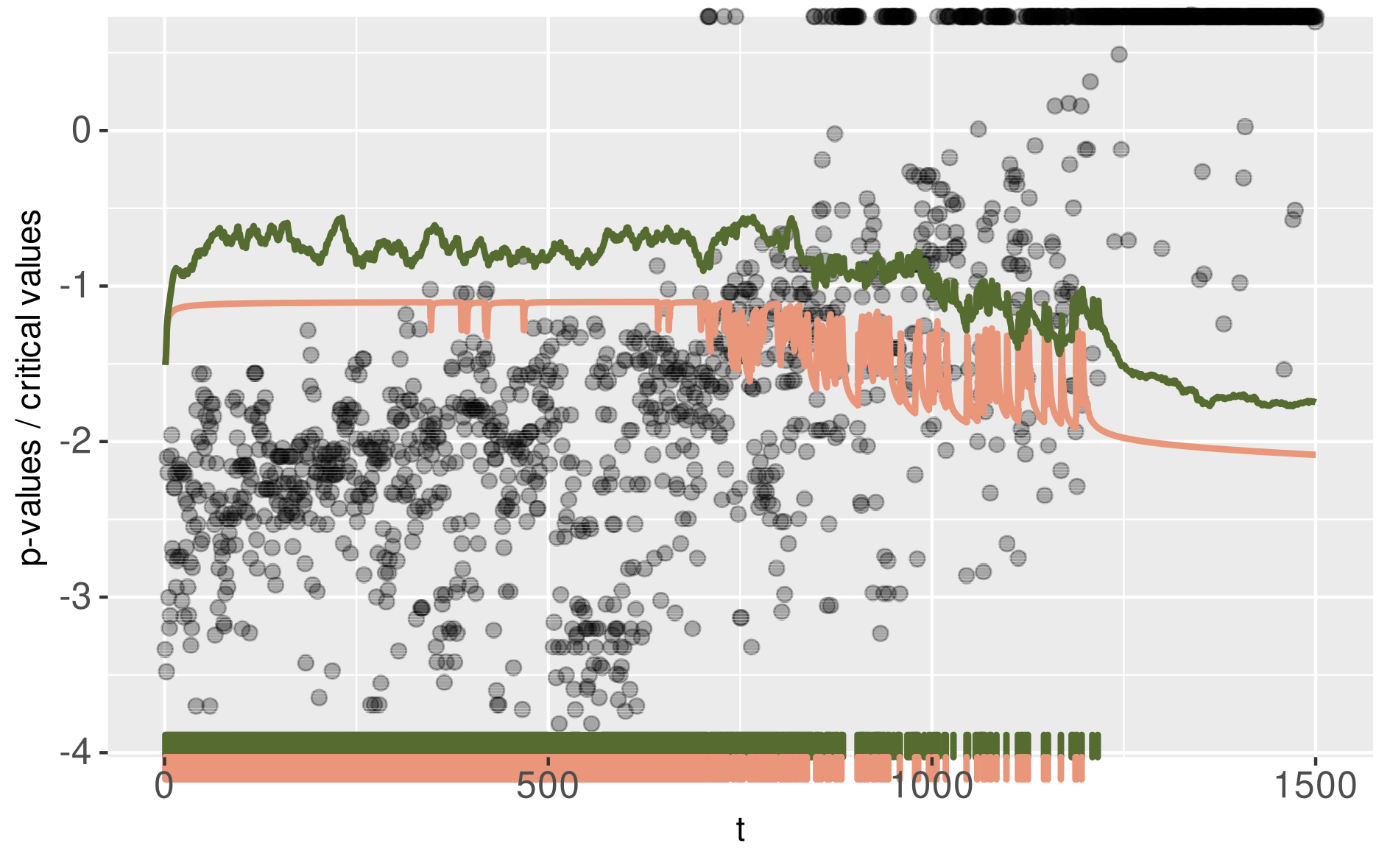} & \includegraphics[width=.5\linewidth]{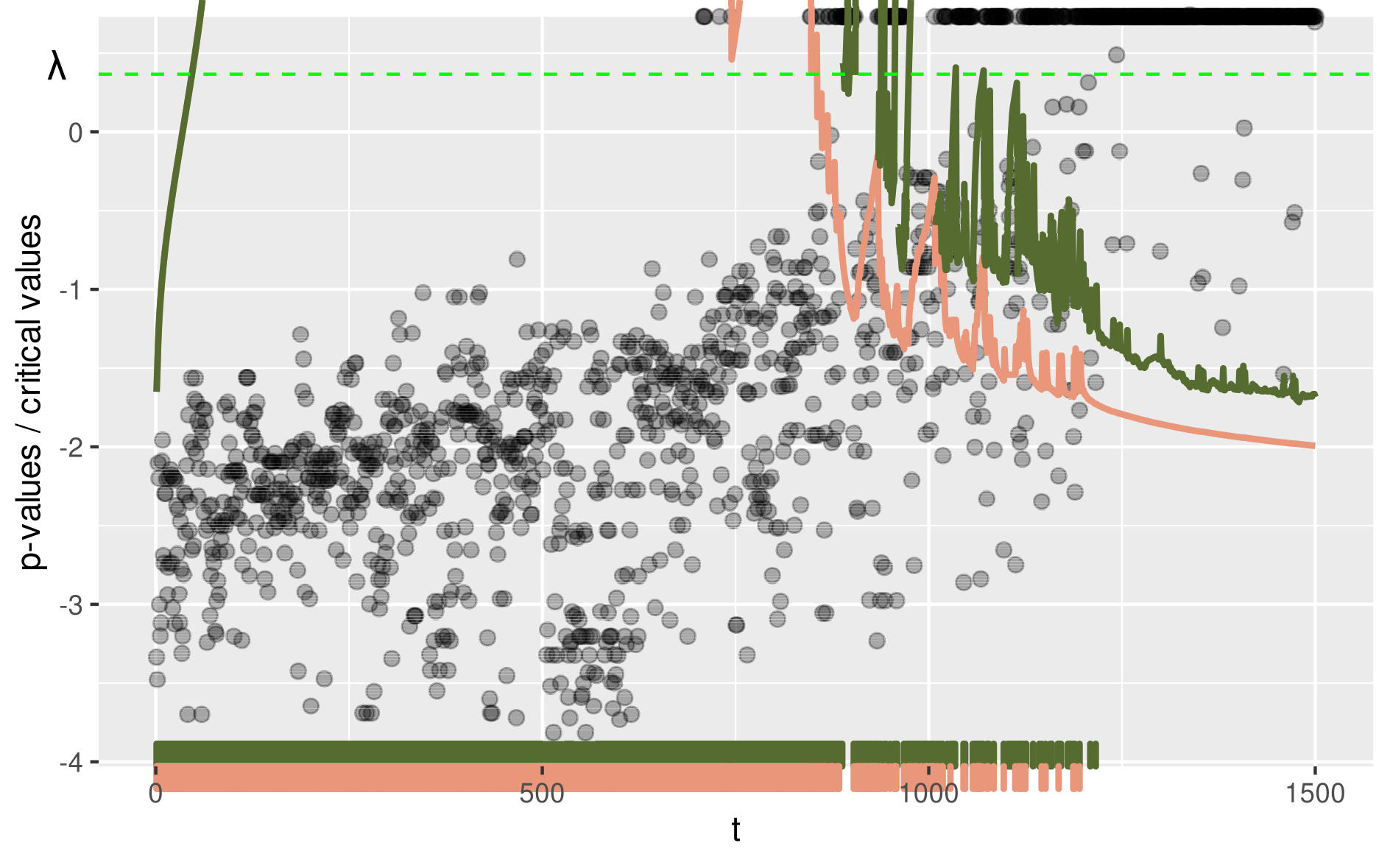}  
    \end{tabular}
    \end{center}
    \caption{Same as Figure~\ref{fig:fwerimpcbasevsrewarded} for female mice of IMPC data (see Section~\ref{sec:real_data_appli}). 
    } 
    \label{fig:fwerimpcbasevsrewarded:fem}
\end{figure}

\end{document}